\patchcmd{\section}{\scshape}{\bfseries}{}{}
\renewcommand{\@secnumfont}{\bfseries}
\newcommand*\circled[1]{\tikz[baseline=(char.base)]{
    \node[shape=circle,draw,inner sep=1pt, scale=0.4] (char) {#1};}}
\renewenvironment{quote}{%
   \list{}{%
     \leftmargin0.75cm   % this is the adjusting screw
     \rightmargin0.5cm
   }
   \item\relax
}
{\endlist}
\newenvironment{indquote}{
    \begin{quote}
    \fontfamily{lmtt}\selectfont\small
}
{
    \end{quote}
}
\newtheorem{theorem}{Theorem}
\newtheorem{definition}{Definition}
\newtheorem{lemma}{Lemma}
\newtheorem{proposition}{Proposition}
\theoremstyle{definition}
\newtheorem{remark}{Remark}
\newtheorem{example}{Example}
\def\cali{\mathcal{I}}
  \def\calt{\mathcal{T}}
\def\calr{\mathcal{R}} \def\calv{\mathcal{V}} \def\calm{\mathcal{M}}
\newcommand{\abs}[1]{\left| #1 \right|}
\def\cenv{C_{\circled{M}}}
\def\c2s{C^{2s}_{\circled{M}}}
\def\hc2s{\widehat C^{2s}_{\circled{M}}}
\begin{document}

\title[Affirmative Action in India]{Can Economic Theory Be Informative for the Judiciary?\\ Affirmative Action in India via\\
Vertical and Horizontal Reservations}

%\title[Affirmative Action in India]{Affirmative Action in India via Vertical, Horizontal, and
%Overlapping Reservations}

\author[S\"{o}nmez and Yenmez]{Tayfun S\"{o}nmez \and
M. Bumin Yenmez}

\thanks{Both S\"{o}nmez and Yenmez are affiliated with the Department of Economics, Boston College, 140 Commonwealth Ave, Chestnut Hill, MA, 02467. Emails: \texttt{sonmezt@bc.edu},
\texttt{bumin.yenmez@bc.edu}. S\"{o}nmez acknowledges the research support of Goldman Sachs Gives via
Dalinc Ariburnu---Goldman Sachs Faculty Research Fund. We thank Utku \"{U}nver,  participants at
Fall 2019 NBER Market Design Working Group Meeting, and especially five anonymous referees whose comments significantly improved
the exposition of the paper.}

\date{First version: March 2019,  this version: January 2021.
This version subsumes and replaces two distinct working papers  ``Affirmative Action in India via Vertical and Horizontal Reservations''
\citep{sonyen19} and ``Affirmative Action with Overlapping Reserves'' \citep{sonmez/yenmez:20}.}

\maketitle

\begin{abstract}
Sanctioned by its constitution, India is home to the world's most comprehensive affirmative action
program, where historically discriminated groups are protected with vertical reservations
implemented as ``set asides,'' and other disadvantaged groups are protected with
horizontal reservations implemented as ``minimum guarantees.''
%Concurrent implementation of the two policies with overlapping beneficiaries makes
%this program more complex than others elsewhere.
A mechanism mandated by the Supreme Court in 1995 suffers from important anomalies,
%including disadvantaged individuals losing positions to  privileged individuals of lower merit,
triggering countless litigations in India.
Foretelling a recent reform correcting the flawed mechanism,
we propose the 2SMG mechanism that resolves all anomalies, and characterize it
with desiderata reflecting laws of India.
Subsequently rediscovered  with a high court judgment and enforced in Gujarat,
2SMG is also endorsed by \textit{Saurav Yadav v. State of UP (2020)\/},
in a Supreme Court ruling that rescinded the flawed mechanism.
While not explicitly enforced, 2SMG is indirectly enforced for an important subclass of applications in India,
because no other mechanism satisfies the new mandates of the Supreme Court.
%We show that no mechanism other than 2SMG satisfies the new mandates of the Supreme Court
%in an important subclass of applications.
\end{abstract}

\noindent \textbf{Keywords:} Market design, matching, affirmative action, vertical reservation, horizontal reservation%, reservations

\noindent \textbf{JEL codes:} C78, D47 %, D63, D78%

\newpage

\section{Introduction} \label{sec:introduction}
Sanctioned by its constitution, India is home to one of the world's largest  affirmative action
programs. Allocation of government positions and seats at publicly funded educational institutions
are governed by the mandates outlined in the landmark Supreme Court judgment
\textit{Indra Sawhney  and others v. Union of India (1992)\/},\footnote{The case  is available at
\url{https://indiankanoon.org/doc/1363234/} (last accessed on 01/19/2021).}
widely known as the \textit{Mandal Commission Case\/}.
Under these mandates,
a mechanism that otherwise allocates positions based on an objective merit list of candidates is amended to implement
two types of affirmative action policies known as \textit{vertical reservations\/} (VR) and \textit{horizontal reservations\/} (HR).
Of the two policies, the VR policy is envisioned as a higher-level protection policy,
and, as such, it is mandated to be implemented on an ``over-and-above'' basis.
This means that if a member of a VR-protected class is ``entitled'' to an open position based on merit, then she must be
awarded an open position and not use up a VR-protected position.
%That is, VR-protected positions are set aside for members  of VR-protected classes who cannot secure an open position based on merit.
This higher-level protection policy has been largely intended  for
historically oppressed classes, most notably
\textit{Scheduled Castes (SC), Scheduled Tribes (ST),\/} and \textit{Other Backward Classes (OBC)\/}.
The HR policy, on the other hand,  is envisioned as a lower-level protection policy,
and, as such, it is mandated to be implemented on a ``minimum guarantee'' basis.
This means that any position awarded to a member of an HR-protected group
counts toward HR protections.

\subsection{The Supreme Court judgment \textit{Anil Kumar Gupta (1995)\/} and Its Consequences} \label{intro-AKG}

In the absence of the HR policy, implementation of the VR policy is a straightforward task with a simple two-step procedure.
First, open positions are awarded to  individuals with the  highest merit rankings (including those from
VR-protected classes), and next, for each VR-protected class, positions  set  aside for this class are
awarded to members with the highest merit rankings  who have not yet received one of the open positions.
We refer to this procedure as the \textit{over-and-above choice rule.\/}
Most applications in the field, however,  also involve the HR policy,\footnote{For example,
\textit{persons with disabilities\/} are granted horizontal reservations at the federal level with the  Supreme Court  judgment
\textit{Union Of India \& Anr vs National Federation Of The Blind \& ... on 8 October, 2013\/}, available at 
\url{https://indiankanoon.org/doc/178530295/} (last accessed on 01/23/2021).
Moreover, in several states, women are also granted horizontal reservations by high court decisions.
Examples include Bihar with 35\% (of the positions), Andhra Pradesh and Gujarat with $33.3$\% each,  and Madhya Pradesh, Uttarakhand,
Chhattisgarh, Rajasthan, and Sikkim with 30\% each.}
and it is less clear how  the two policies can be implemented concurrently in this more elaborate version of the problem.
While the principles that guide the implementation
of reservation policies are clearly laid out in  \textit{Indra Sawhney (1992)\/} when the two protective policies are implemented independently, 
no guidance is provided for their concurrent implementation by the important judgement.
This gap has been later filled  in  \textit{Anil Kumar Gupta v. State of U.P. (1995)\/}, another judgment of the Supreme Court,
where an explicit procedure for the concurrent implementation of
VR and HR policies is devised and enforced in India.\footnote{The case
is available at \url{https://indiankanoon.org/doc/1055016/} (last accessed on 03/10/2019).}
For the past quarter century,  this judgment has served as a main reference for virtually all subsequent litigations on concurrent
implementation of VR and HR policies, of which there are thousands.
This is our starting point, where  our original motivations in writing this paper were;  (i) formulating an important flaw in the procedure
mandated under this judgment, (ii) documenting its adverse consequences in India, and (iii) advocating for an alternative procedure as a remedy.

The procedure enforced under \textit{Anil Kumar Gupta (1995)\/} first derives a tentative outcome using the
over-and-above choice rule, then it makes any necessary replacements for the tentative recipients of open positions
to accommodate HR protections within open positions, and finally it makes any necessary replacements for the tentative recipients of the VR-protected positions
to accommodate HR protections within VR-protected positions.
We refer to this procedure as the \textit{SCI-AKG choice rule\/}.
One critical mandate in this judgment, however, has introduced two related and highly consequential anomalies into the procedure,  often
generating unintuitive outcomes at odds with the philosophy of affirmative action, and
thereby sparking thousands of litigations  in India for the next 25 years. To present the scale of the resulting disarray,
some of the key litigations triggered by the flawed mandate are documented in detail in
Section \ref{sec:India} of the Online Appendix.\footnote{A simple search of the phrase ``horizontal reservation'' via Indian Kanoon, a free search engine for Indian Law,
reveals the scale of the litigations relating to this concept. Excluding cases at lower courts, as of 06/08/2021 there are 2067 cases at the Supreme Court and State High Courts
related to the implementation of horizontal reservations.}

The root cause of the failure of the SCI-AKG choice rule boils down to its exclusion of the members of VR-protected classes
from any replacements necessary to accommodate the HR protections for open positions.
Thus, members of the higher-privilege \textit{general category\/}, i.e.  individuals who are not members  of the VR-protected classes, are the  only ones entitled to
replace the tentative holders of the open positions to accommodate its HR protections.
This restriction regularly created situations in India where higher-merit individuals from VR-protected classes lose their positions to
lower-merit individuals from the higher-privilege general category,
an anomaly we refer to as a failure of \textit{no justified envy\/}.
The same flaw also created a conflict for individuals who qualify for both types of protections, since for these individuals
claiming their VR protections would mean giving up their HR protections for open positions, an anomaly we refer to as a failure of \textit{incentive compatibility\/}.
Both types of failures have been originally formulated in  \cite{aybo16} in the context of Brazilian college admissions,
although our paper is the first one to document their disruptive implications through
numerous litigations and interrupted recruitment processes.

Since the root cause of the crisis is the exclusive access given  to general-category individuals for HR protections within open positions,
a simple and intuitive solution lies in the removal of this restriction in the SCI-AKG  choice rule, thus making everyone eligible.
Focusing on applications with \textit{non-overlapping horizontal reservations\/}
where each individual qualifies for at most one HR protection,
we refer to this alternative choice rule as the \textit{two-step minimum guarantee (2SMG)\/} choice rule.
This version  of the problem is important both because it is widespread in the field with \textit{persons with disabilities\/} being the
only group granted with HR protections at the federal level, and also because the judgments on HR protections
abstract away from the technical aspects of overlapping horizontal reservations.
%While in Section \ref{sec:overlapping} we advocate for and characterize a generalization of the  2SMG choice rule for this more
%general version of the problem, to the best of our knowledge this technical aspect of HR protections seems to be unlegislated in India.

\subsection{A Resolution with the Supreme Court judgment \textit{Saurav Yadav (2020)\/}} \label{sec:yadav}
Prior to the March 2019 circulation of the first draft of our paper,  the above-presented failure of the  SCI-AKG choice rule
has never been directly addressed by the  highest court of India, despite the large scale disarray it created  in the country for 25 years.
As we have emphasized earlier, this was our primary motivation in writing this paper, along with
formulating the 2SMG choice rule as a possible replacement for the SCI-AKG choice rule.
However, a key December 2020 judgment by a three-judge bench of the Supreme Court,
not only changed this situation, but also reshaped some of the questions of interest for our paper,
while it was under  revision for this journal. The material presented in the rest of the Introduction reflects
the revisions to our paper following this important judgment.

Based on arguments parallel to our analysis and using several of the high court judgments presented in
Section \ref{sec:India} of the Online Appendix, the justices reached some of the same conclusions in
\textit{Saurav Yadav \& Ors v. State of Uttar Pradesh \& Ors (2020)\/}\footnote{judgment available at \url{https://indiankanoon.org/doc/27820739/},
last accessed 06/05/2021.}
as we have reached earlier in our paper.
For our purposes, the key aspects of this judgment can be summarized as follows:
\begin{enumerate}
\item The axiom of \textit{no justified envy\/} is mandated for all choice rules used in India.
\item  The SCI-AKG choice rule is rescinded due to its failure to satisfy \textit{no justified envy\/}.
\item As a possible replacement for the SCI-AKG choice rule, the 2SMG choice rule is endorsed,
although it is not explicitly mandated.\footnote{Prior to its endorsement by the Supreme Court, the 2SMG choice rule
was formulated in the August 2020 High Court of Gujarat judgment
\textit{Tamannaben Ashokbhai Desai v. Shital Amrutlal Nishar (2020\/)}, which mandated it for the state of Gujarat.
judgment available at \url{https://www.livelaw.in/pdf_upload/pdf_upload-380856.pdf}, last accessed 06/06/2021.
Our formulation and advocacy of the 2SMG choice rule also precedes this judgment, which we believe is the first
judgment in India to formulate this choice rule.}
\item Clarity is brought to which positions awarded to members of VR-protected classes are to be used up from open positions,
rather than the VR-protected positions.
\end{enumerate}
Apart from correcting a flawed mandate from \textit{Anil Kumar Gupta (1995)\/} with highly disruptive consequences,\footnote{The justices
of the Supreme Court indicate in  \textit{Saurav Yadav (2020)\/} that the flaw in the SCI-AKG choice rule is based on a
misinterpretation of \textit{Anil Kumar Gupta (1995)\/}.}
 \textit{Saurav Yadav (2020)\/} brings
clarity on the meaning of VR protections in the presence of HR protections, at a level that was never done before.
The defining characteristic of the VR protections is originally formulated in \textit{Indra Sawhney (1992)\/} as follows: \smallskip
\begin{indquote}
It may well happen that some members belonging to, say Scheduled Castes get selected in the open competition field on the basis of their own merit;
they will not be counted against the quota reserved for Scheduled Castes; they will be treated as open competition candidates.\smallskip
\end{indquote}
In the absence of the HR protections, the interpretation of this formulation is straightforward.
Up to the number of open positions,
individuals with the highest merit rankings are to be assigned the open positions,
whether they are members of a VR-protected class or not.
However, prior to \textit{Saurav Yadav (2020)\/},
a formal interpretation was never provided for the following question:
What does it mean to be \textit{selected in the open competition field on the basis of one's own merit\/} in the presence of HR protections?
For the case of non-overlapping horizontal reservations, this question is answered as follows in \textit{Saurav Yadav (2020)\/}:
Any individual who is entitled to an open position based on her merit score, \textit{including those who are entitled to one
to accommodate the HR protections\/}, is considered as an individual  who is selected
in the open competition field on the basis of her own merit.
The justices in \textit{Saurav Yadav (2020)\/} mandate that all such individuals are to be assigned open positions, thereby not
using up any VR-protected positions. We formulate this mandate in Section  \ref{subsec:implicit} as the axiom of \textit{compliance with VR protections\/}.

Altogether,  the axioms of \textit{no justified envy\/}, \textit{compliance with VR protections\/},
\textit{maximal accommodation of HR protections\/} (which means that all HR protections must be accommodated up to the number
of eligible applicants), and \textit{non-wastefulness\/} (which means that no position should remain idle while there are eligible applicants),
each mandated under  \textit{Saurav Yadav (2020)\/},
uniquely characterize the 2SMG choice rule for problems with non-overlapping horizontal reservations (Theorem \ref{thm:2smg}).
Therefore, even though  the 2SMG choice rule is merely endorsed but not enforced under \textit{Saurav Yadav (2020)\/},
the mandates in this judgment indirectly enforce it for field applications with non-overlapping horizontal reservations. 
Some of our main contributions in this revised paper include the characterization
of the 2SMG choice rule with axioms that directly formulate the mandates in \textit{Saurav Yadav (2020)\/} and what it means
for India, i.e. the observation that  the 2SMG choice rule in indirectly enforced with this judgment.

\subsection{Extended Analysis and Policy Advice for Overlapping Horizontal Reservations}

In contrast to field applications with non-overlapping horizontal reservations where  \textit{Saurav Yadav (2020)\/} has a very sharp
policy implication, as has its predecessors,
the judgment leaves some flexibility for applications with overlapping horizontal reservations.
In this version of the problem, an individual can benefit from multiple HR protections.
We make our most significant conceptual and theoretical  contributions for this general version of the problem.

Consider an individual who is a member of multiple groups, each of which is eligible for HR protections.
For example, a woman with a disability can benefit from HR protections both for \textit{women\/} and also for
\textit{persons with disabilities\/}.
The law does not specify whether this individual accommodates the minimum guarantees for all HR protections
she is qualified for, in this example  both for women and for persons with disabilities, or for only one of them.
We refer to the first convention as  \textit{one-to-all HR matching\/} and the second convention as  \textit{one-to-one HR matching\/}.
While the law is silent on this aspect of the problem, we advocate for the one-to-one HR matching convention for two reasons.
The first reason is technical: Adopting the alternative one-to-all HR matching convention introduces complementarities between individuals,
which in turn renders the problem computationally hard in general and allows for multiplicities.
In the above example, the admission of a man with no disability may depend on the admission of a woman with disability.
The second reason is practical: In many real-life applications in India, the number of positions are announced for
vertical category-horizontal trait pairs, which automatically embeds the one-to-one HR matching convention into the problem.

Under the one-to-one HR matching convention, an additional matching problem is essentially built into the original problem,
where a secondary task matches individuals to different types of HR protections to account for these protections.
Fortunately, this secondary task can be formulated as a \textit{maximum bipartite matching problem\/},
a well-studied problem in the combinatorial optimization literature.
Moreover, this  approach not only allows us to formulate  natural and immediate extensions of all
four axioms,  but also allows for a natural extension of the 2SMG choice rule in the \textit{two-step meritorious horizontal\/} (2SMH) choice rule.
In our main theoretical result (Theorem \ref{thm:2s-envchar}),
we extend our characterization of the 2SMG choice rule for the case of non-overlapping horizontal reservations
to the 2SMH choice rule  for the general case of  overlapping horizontal reservations.

\subsection{Organization of the Rest of the Paper}
After introducing the model
in Section  \ref{sec:model}, we present analysis and policy implications for problems with
non-overlapping horizontal reservations in Section \ref{sec:non-overlapping}.
The failure of the mechanism mandated by \textit{Anil Kumar Gupta (1995)\/},  its resolution by \textit{Saurav Yadav (2020)\/},
and the formal relation between these Supreme Court judgments and our analysis are also presented in this section.
Section \ref{sec:overlapping} presents an analysis of the model in its full generality with overlapping horizontal reservations, along with the
related theoretical literature.
We conclude with an epilogue in Section \ref{sec:Epilogue} and present all proofs in the Appendix.
Finally, we relegate the institutional background on VR and HR policies, extensive evidence from Indian court rulings
on the disruption caused by the SCI-AKG choice rule, and the equivalence of our formulation of the SCI-AKG choice rule with
its original formulation in  \textit{Anil Kumar Gupta (1995)\/}  to the Online Appendix.

\section{Model and Vertical/Horizontal Reservations}\label{sec:model}
There exists a finite set of individuals $\cali$ competing for $q$ identical positions.
Each individual  $i \in \cali$ is in need of a single position,  and has a distinct merit score $\sigma(i)\in \mathbb{R}_{+}$.\footnote{While students
can have the same merit score in practice,  tie-breaking rules are used to strictly rank them. For example, the Union
Public Service Commission uses age and exam scores to break ties.
See \url{https://www.upsc.gov.in/sites/default/files/TiePrinciplesEngl-26022020-R.pdf} (last accessed on 6/7/2020).}
While individuals with higher merit scores have higher claims for a position
in the absence of affirmative action policies, disadvantaged populations are protected
through two types of affirmative action policies,
(i) \textbf{vertical reservation (VR)} policies providing  ``higher level''  \textbf{VR protections},  and
(ii) \textbf{horizontal reservation (HR)} policies providing  ``lower level'' \textbf{HR protections}.

\subsection{Vertical Reservations}\label{sec:ver}
There exists a  set of \textbf{reserve-eligible categories} $\calr$ and a  \textbf{general category} $g \not\in \calr$.
Each individual belongs to a single category in $\calr \cup \{g\}$.
Define the (reserve-eligible) category membership function $\rho: \cali \rightarrow \calr \cup \{\emptyset\}$ such that,
for any individual $i \in \cali$,
\begin{itemize}
\item[] $\rho(i) = c$ \, indicates that $i$ is a member of the reserve-eligible category $c\in \calr$, and
\item[] $\rho(i) = \emptyset$ \, indicates that $i$ is a member of the general category $g$.
\end{itemize}

Given a set of individuals $I \subseteq \cali$ and a reserve-eligible category $c\in \calr$, define
\[ I^c = \{i \in I : \rho(i) = c\} \]
as the set of individuals in $I$ who are members of the reserve-eligible category $c\in \calr$.
Given a set of individuals $I \subseteq \cali$, define
\[ I^g = \{i \in I : \rho(i) = \emptyset\} \]
as the set of individuals in $I$ who are members of the general category $g$.

%Qualification for VR protections is determined through the category membership function $\rho$,
There are $q^c$ positions exclusively set aside  for the members of category $c\in \calr$.
We refer to these positions as \textbf{category-c positions}.
In contrast, members of the general category do not receive any special provisions under the VR policies.
Therefore,
\[ q^o=q-\sum_{c\in \calr} q^c
\]
positions are open for all individuals.  We refer to these positions as \textbf{open-category positions} (or \textbf{category-o positions}).
Let $\calv = \calr \cup \{o\}$ denote the set of \textbf{vertical categories for positions}.

It is important to emphasize that, in contrast to category-$c$ positions that are exclusively reserved for the
members of category $c\in \calr$, open-category positions are available for all, and hence they are not exclusively reserved for the members
of the general category $g$.
\begin{definition}
Given a reserve-eligible category $c \in \calr$,
an individual $i \in \cali$ is \textbf{eligible for category-}{\boldmath$c$} \textbf{positions} if,
\[ \rho(i) = c. \]
Any individual $i \in \cali$ is \textbf{eligible for open-category positions}.
\end{definition}
Given a category $v\in \calv$,
let $\cali^{v} \subseteq \cali$ denote the set of individuals who are eligible for category-$v$ positions.

VR protections have one important property that makes them the ``higher level'' affirmative action policy.
Positions that are earned by the members of reserve-eligible categories without invoking the VR protections, and thus
on the basis of their merit scores only, do not count against the VR-protected positions.
In this sense, VR protections are implemented on an ``over-and-above'' basis.

\subsection{Single-Category Choice Rule, Choice Rule, and Aggregate Choice Rule} \label{subsec-choicerule}
We next formulate the solution concepts used in our paper.

\begin{definition}
Given a category $v \in \calv$, a \textbf{single-category choice rule}
is a function $C^v: 2^{\cali} \rightarrow 2^{\cali^v}$ such that,  for any $I\subseteq \cali$,
\[ C^v(I)\subseteq I \cap \cali^v  \quad \mbox{ and  } \quad  \abs{C^v(I)}\leq q^v.\]
%A \textbf{choice rule} is a profile of interlinked single-category choice rules $C=(C^v)_{v \in \calv}$ such that,
%for any set of individuals $I\subseteq \cali$ and two distinct categories $v,v'\in \calv$,
%\[ C^v(I)\cap C^{v'}(I)=\emptyset. \]
\end{definition}

\begin{definition}
A \textbf{choice rule} is a multidimensional function $C = (C^v)_{v \in \calv} :  2^{\cali} \rightarrow \prod_{v \in \calv}2^{\cali^v}$ such that,  for any $I\subseteq \cali$,
\begin{enumerate}
\item for any category $v \in \calv$,
\[ C^v(I)\subseteq I \cap \cali^v  \quad \mbox{ and  } \quad  \abs{C^v(I)}\leq q^v,
\]
\item for any two two distinct categories $v,v'\in \calv$,
\[ C^v(I)\cap C^{v'}(I)=\emptyset. \]
\end{enumerate}
\end{definition}

In addition to specifying the recipients, our formulation of a choice rule also specifies the categories of their positions.

\begin{definition}
For any choice rule $C=(C^v)_{v \in \calv}$,  the resulting \textbf{aggregate choice rule} $\widehat{C}: 2^{\cali} \rightarrow 2^{\cali}$ is given as
\[ \widehat{C}(I)=\bigcup_{v\in \calv} C^v(I) \qquad \mbox{for any } I\subseteq \cali.\]
\end{definition}
For any set of individuals, the aggregate choice rule yields the set of chosen individuals across all categories.

In the absence of horizontal reservations, which will be introduced in Section \ref{sec:hor},
the following three principles mandated in \textit{Indra Sawhney (1992)\/} uniquely define a choice rule,
thus making the implementation of VR policies straightforward.
First, an allocation must respect \textit{inter se\/} merit: Given two individuals from the same category,
if the lower merit-score individual is awarded a position, then the higher merit-score individual  must also be awarded a position.
Next, VR protections must be allocated on an ``over-and-above'' basis; i.e., positions that can be received without invoking the VR
protections do not count against VR-protected positions.
Finally, subject to eligibility requirements,  all positions have to be filled without contradicting the two principles above.
It is easy to see that these three principles uniquely imply the following choice rule:
First, individuals with the highest merit scores are assigned the open-category positions.
Next, positions reserved for the reserve-eligible categories are assigned to the remaining members of these categories, again based on their merit scores.
We refer to this choice rule as the \textit{over-and-above choice rule\/}.

\subsection{Horizontal Reservations within Vertical Categories}\label{sec:hor}
In addition to the reserve-eligible categories in $\calr$ that are associated with the higher level VR protections,
there is a finite set of traits $\calt$ associated with the lower level HR protections.
Each individual has a (possibly empty) subset of traits,  given by the trait function $\tau : \cali \rightarrow 2^{\calt}$.
Each trait represents a societal disadvantage, and
individuals who have this trait are provided with easier access to positions through a second type of affirmative action policy.

HR protections are provided within each vertical category.\footnote{This is not a federal mandate in India but rather a formal recommendation by
the Supreme Court judgment \textit{Anil Kumar Gupta (1995)\/}. The vast majority of the institutions in India follow this recommendation in
implementing HR policies in this form, also called \textit{compartmentalized horizontal reservations\/}.}
For any reserve-eligible category $c\in \calr$  and trait $t\in \calt$,
subject to the availability of qualified individuals,
a minimum of $q^{c}_t$  category-$c$ positions are to be assigned to  individuals from category $c$ with trait $t$.
We refer to these positions as \textbf{category-}{\boldmath $c$} \textbf{HR-protected positions for trait} {\boldmath$t$}.
Similarly, for any trait $t\in \calt$ and subject to the availability of individuals with trait $t$,
a minimum of $q^{o}_t$  open-category positions are to be assigned to  individuals with trait $t$.
We refer to these positions as \textbf{open-category HR-protected positions for trait} {\boldmath$t$}.

For each vertical category $v \in \calv$, we assume that the total number of category-$v$ HR-protected positions
is no more than the number of positions in category $v$. That is,  for each
category $v\in \calv$,
\[\sum_{t\in \calt} q^v_t \leq q^v.\]

%%%%%% Define overlapping/non-overlapping HR protectionsor horizontal reservations.

We refer to HR policies where  an individual can have at most one trait as  \textbf{non-overlapping HR protections}, and
HR policies where  an individual can have multiple traits as  \textbf{overlapping HR protections}. 
In many field applications in India, HR protections are non-overlapping.\footnote{That is in part because \textit{persons with disabilities\/} are the only
group that is explicitly granted HR protections at the federal level.}
Unlike this version of the problem which is relatively less complex,
analysis of the  problem with overlapping HR protections
introduces a number of subtleties.

In contrast to VR protections, which are provided on an ``over-and-above'' basis,
HR protections are  provided within each vertical category on a ``minimum guarantee'' basis.
This means that positions obtained without invoking any HR protection still  accommodate the HR protections.\footnote{The official language used for
the distinction  between HR protections and VR protections is given in Section  \ref{sec:RKD2007} of the Online Appendix.}

Given a category $v \in \calv$ and assuming that HR policies are non-overlapping,
category-$v$  HR protections can be implemented with the following
(category-$v$) \textit{minimum guarantee choice rule\/} $C^v_{mg}$ \citep{echyen12}. \medskip

\begin{quote}
 \noindent{}{\bf Minimum Guarantee Choice Rule} {\boldmath$C^v_{mg}$}

\noindent Given a set of individuals $I \subseteq \cali^v$,\\
        \noindent{}{\bf Step 1:} for each trait $t\in \calt$,
    choose all individuals in $I$ with trait $t$  if the number of trait-$t$ individuals in $I$ is less than or equal to $q^v_t$, and
$q^v_t$ highest merit-score individuals in $I$ with trait $t$ otherwise.\\
		\noindent{}{\bf Step 2:} For positions unfilled in Step 1, choose unassigned individuals in $I$ with highest merit scores.\medskip
\end{quote}

The reason for restricting attention to problems with non-overlapping HR protections in defining this choice rule
is technical.  It is easy to see that the processing sequence of traits  in Step 1 of the procedure becomes immaterial for this case.
In contrast, the processing sequence of traits can affect the outcome under overlapping HR protections.
Moreover, in this more general case, and even with the additional specification of a trait processing sequence,
it is not clear whether the resulting choice rule is equally plausible for implementing the HR protections.
Indeed, in Section \ref{sec:smarthorizontal} we advocate for an alternative approach in extending the \textit{minimum guarantee choice rule\/}
for problems with overlapping HR protections.

\section{Analysis and Policy Implications with Non-Overlapping HR Protections} \label{sec:non-overlapping}

In this section, we present an analysis of concurrent implementation of VR and non-overlapping HR protections. 
Therefore, throughout this section, each individual is assumed to have at most one trait.
While Indian judgments and legislation on VR and HR policies more broadly apply to applications with 
overlapping HR protections as well, as we show in this section they have sharper implications
for field  applications with  non-overlapping HR protections. Moreover,  choice rules that have been either mandated or endorsed by
the Supreme Court since \textit{Indra Sawhney (1992)\/} all abstract away from any details pertaining to
overlapping HR protections.
Therefore, our analysis of this  more restrictive version of the model in this section has more direct policy implications in India.

%we introduce the \textit{SCI-AKG choice rule\/} that was mandated in India to implement
%its VR and HR policies in the period 1995-2020, present its two important flaws that  are responsible for countless
%litigations and disturbance of recruitment processes throughout India, and propose an easy fix.

\subsection{SCI-AKG Choice Rule and Its Flaws} \label{subsec-flaws}
We start our analysis
by introducing the \textit{SCI-AKG choice rule\/} that was mandated in India  for 25  years until December 2020.
%and formulating two of its flaws that have generated countless litigations and conflicts in the country.
The following definition simplifies the description of the  SCI-AKG choice rule.

\begin{definition}
A member of a reserve-eligible category $i \in \cup_{c\in \calr}\cali^{c}$ is a \textbf{meritorious reserved candidate}
if she  has one of the $q^o$ highest  merit scores among all individuals in $I$.
\end{definition}
Let $I^m$ denote the set of meritorious reserved candidates.

We are ready to formulate the \textit{SCI-AKG choice rule\/},  originally introduced in the Supreme Court judgment  \textit{Anil Kumar Gupta (1995)\/}
for the case of a single trait.\medskip

\begin{quote}
        \noindent{}{\bf SCI-AKG Choice Rule} {\boldmath$C^{SCI}$} = {\boldmath$(C^{SCI,\nu})_{\nu \in \calv}$}

\noindent Given a set of individuals $I \subseteq \cali$,
\begin{eqnarray*}
&&   C^{SCI,o}(I) = C^{o}_{mg}(I^m \cup I^g), \mbox{ and}\\
&&  C^{SCI,c}(I) = C^{c}_{mg}\big(I^c\setminus C^{o}_{mg}(I^m \cup I^g)\big) \quad \mbox{ for any } c \in \calr.
\end{eqnarray*}
       \end{quote}

It is important to emphasize that the formulation of the SCI-AKG choice rule given above is not the original formulation
presented in \textit{Anil Kumar Gupta (1995)\/}. The original formulation is based on first tentatively allocating the positions based on  the
over-and-above choice rule presented in Section \ref{sec:ver},
and subsequently carrying out any necessary adjustments to accommodate the HR protections.
We instead present a simpler formulation of the SCI-AKG choice rule,
using its relation to the \textit{minimum guarantee choice rule\/} introduced in Section \ref{sec:hor}.\footnote{The original description
of the SCI-AKG choice rule  in the Supreme Court judgments  \textit{Anil Kumar Gupta (1995)\/}
and \textit{Rajesh Kumar Daria (2007)\/}, and the result that shows the outcome equivalence of this formulation
can be seen in Section \ref{sec:AKG95} of the Online Appendix.}
It is also important to note that, while the justices formally introduced the SCI-AKG choice rule only for the case of a single trait,
their formulation immediately extends to multiple traits  assuming HR protections are non-overlapping.
Later in Section \ref{sec:singlecategory}, we show that extending the SCI-AKG choice rule to the more general version of problem 
with  overlapping HR protections introduces a number of subtleties, allowing for multiple generalizations
of this rule.

We next show that the SCI-AKG choice rule has two important flaws even for the simple case with a single trait.

\begin{example}\label{ex:ic}
There are VR protections for members of  a reserve-eligible category $c \in \calr$ and HR protections for women.
The set of individuals $I = \{m_1^g, m_2^g, w_1^g, m_1^c, w_1^c\}$ consists of two general category men
$m_1^g$, $m_2^g$, one general-category woman
$w_1^g$, one category-$c$  man  $m_1^{c}$, and  one category-$c$ woman $w_1^{c}$.
There are two open-category positions and one VR-protected position for category $c$.
One of the open-category positions is HR-protected for women.
Individuals have the following merit ranking:
\begin{center}
$\sigma(m_1^g)>\sigma(m_2^g)>\sigma(m_1^{c})>\sigma(w_1^{c})>\sigma(w_1^{g})$.
\end{center}
Since there are two open-category positions and neither of the two highest merit score individuals are from the reserve-eligible
category $c$,  the set of meritorious reserved candidates is $I^m = \emptyset$.
Therefore, the set of individuals under consideration for open positions is $I^m \cup I^g = I^g = \{m_1^g, m_2^g, w_1^g\}$.
Since $w_1^g$ is the only woman in the set $I^m \cup I^g$, she is awarded the
open-category HR-protected position for women despite having the lowest merit score.
Woman $w_1^{c}$ is not eligible for this position, although she would be had she not declared her
category membership for the reserve-eligible category $c$. The other open-category position is awarded to the highest merit score individual $m_1^g$.
Hence,  $C^{SCI,o}(I) = C^{o}_{mg}(I^m \cup I^g) = \{m_1^g, w_1^g\}$.

Since there is no category-$c$ HR-protected position for women, the highest merit score category-$c$ individual receives the only
category-$c$ position, and hence
$C^{SCI,c}(I) = \{m_1^c\}$.
Therefore, the set of individuals who are each awarded a position under the SCI-AKG choice rule is $\widehat{C}^{SCI}(I) =  \{m_1^g, w_1^g, m_1^c\}$.

There are two troubling aspects of this outcome. The first issue is that, even though the category-$c$ woman
$w_1^{c}$ has a higher merit score than the general category woman  $w_1^g$, the latter receives a position while the former does not.
That is, contrary to the philosophy of affirmative action, a lower merit score individual from the (unprotected) general category receives
a position at the expense of a higher merit score individual from a protected category.
The second issue is that, since she is the highest merit score woman among all applicants,
 woman $w_1^{c}$ can receive the open-category HR-protected position for women simply by
not declaring her eligibility for the VR-protected position for category-$c$.
\qed
\end{example}

The shortcomings of the SCI-AKG choice rule presented in Example \ref{ex:ic} are not merely abstract possibilities,
but rather are highly visible flaws that have been responsible for thousands of  litigations that distrupt
recruitment processes throughout India, as documented in Section \ref{sec:challenges} of the Online Appendix.
The root cause of both anomalies is the restriction of the open-category HR protections to general
category individuals only.
This restriction creates an immediate  (and rather obvious) conflict for individuals who qualify  for both 
VR and HR protections: With the exception of meritorious reserved candidates,
any such individual loses her qualifications for open-category HR protections by claiming her VR protections.
Consequently, this conflict reflects itself in the following two deficiencies that go
against the philosophy of affirmative action:
\begin{enumerate}
\item  \textit{Possibility of a higher-merit protected individual losing a position to a lower-merit unprotected individual\/}:
For example, a woman from the VR-protected category Scheduled Castes may remain
unassigned while a lower merit-score woman from the higher-privilege general category  receives a position through open-category HR protections for women.
\item \textit{Necessity to give up VR protections to claim open-category HR protections\/}: For example, a woman from Scheduled Castes  may remain unassigned
by declaring her membership for Scheduled Castes, but she can  receive an open-category HR-protected position for women
by withholding her Scheduled Castes membership, and thus she benefits from not declaring this information.
\end{enumerate}
These deficiencies motivate our axioms of \textit{no justified envy\/} and \textit{incentive compatibility\/}.

The following \textit{HR-maximality function\/} plays a key role not only in our formulation of the axiom of \textit{no justified envy\/}, but also
in our formulation of two additional axioms introduced later in this section.
Moreover,  the extension of our analysis to the more general model with overlapping HR protections later presented in Section \ref{sec:overlapping}
also critically depends on the extension of this function.
\smallskip

\begin{definition} \label{def-HRmaximality}
Given a vertical category  $v \in \calv$, the (category-$v$) \textbf{HR-maximality function} {\boldmath $n^v : 2^{\cali^v} \rightarrow \mathbb{N}$} is defined as,
for any $I \subseteq \cali^v$,
\[ n^v(I) = \sum_{t \in \calt} \min\Big\{\big|\{i\in I \; : \; t\in\tau(i)\}\big| , \; q^v_t\Big\}.
\]
\end{definition}
Observe that, for any set of individuals $I$ who are eligible for category-$v$ positions, the  category-$v$ HR-maximality  function $n^v$ gives the maximum number of
category-$v$ HR-protected positions that can be awarded.\footnote{One way this maximum can be obtained is via the
category-$v$ minimum guarantee choice rule $C^v_{mg}$.}

\begin{definition}
A choice rule $C=(C^{\nu})_{\nu \in \calv}$ satisfies \textbf{no justified envy} if, for every $I\subseteq \cali$,\; $v \in \calv$,\;
$i\in C^v(I)$, and
$j \in \big(I\cap \cali^v\big) \setminus \widehat{C}(I)$, %who is eligible for category $v$,
\[ \sigma(j) \, > \, \sigma(i)    \; \implies \;  n^v\Big(\big(C^v(I)\setminus \{i\}\big) \cup \{j\}\Big) < n^v(C^v(I)).
\]
\end{definition}
This axiom requires that, given two individuals who are both eligible for a position in a category,
the lower merit-score individual can receive a position at the expense of the higher merit-score individual
only if not doing so strictly decreases the number of HR protections that are accommodated in that category.
Therefore, under this axiom, increasing the utilization of HR protections in a category can be the only reason to award
a position at this category to a lower merit-score individual at the expense of an unassigned higher merit-score eligible individual.

We next formulate the axiom of  \textit{incentive compatibility\/}, first introduced by  \cite{aybo16} in their analysis
of the affirmative action policies in Brazilian college admissions:

\begin{definition}
An individual \textbf{withholds some of her reserve-eligible privileges} if she does not declare either
her reserve-eligible category membership or some of her traits.
\end{definition}
In India, individuals are not required to declare their reserve-eligible privileges.

\begin{definition}
A choice rule $C$ is \textbf{incentive compatible} if, for every $I \subseteq \cali$,  any individual $i\in I$ who is selected from $I$ under the aggregate choice rule
$\widehat{C}$ by withholding  some of her reserve-eligible privileges is also selected from $I$ under 
$\widehat{C}$  by declaring all her reserve-eligible privileges.
\end{definition}
Under a choice rule that satisfies this axiom,
privileges that are meant to provide positive discrimination would never produce the opposite effect and thus hurt an individual upon declaring eligibility.
Failure of incentive compatibility is implausible both from a normative perspective, since it is against the philosophy of affirmative
action, and also from a strategic perspective, since it may force individuals to withhold their privileges.
As we document clear evidence in Section \ref{sec:wrongful} of the Online Appendix, it also creates one additional difficulty in India.

Eligibility for VR protections typically depends on an  individual's caste membership.
While this information is supposed to be private information, it can often be inferred by the central planner due to various indications
such as the individual's last name. A central planner can also obtain this information through documents such as a diploma.
Hence, eligibility for VR protections may not be truly private information, and the lack of incentive compatibility of a choice rule
may enable a malicious central planner to exploit this information to deny an applicant her open-category HR protections.
As documented in Section \ref{sec:wrongful} of the Online Appendix, this type of misconduct not only has been widespread in parts of India, but it even appears to
be centrally organized by the local governing bodies in some of its jurisdictions.

\subsection{An Easy Fix:  2SMG Choice Rule} \label{subsec:2SMG}

Apart from its simplicity,
an additional advantage of formulating the SCI-AKG choice rule using its relation to the \textit{minimum guarantee choice rule\/} is that,
unlike its original formulation that obscures a possible remedy,
our equivalent formulation suggests an easy fix.
Both anomalies of the SCI-AKG choice rule are caused by the exclusive access given to the general-category individuals
for open-category HR protections.
This restriction reflects itself in our formulation of the SCI-AKG choice rule during  the derivation of  the open-category assignments
through the formula
\[ C^{SCI,o}(I) = C^{o}_{mg}(I^m \cup I^g).
\]
Observe that, instead of running the choice rule  $C^{o}_{mg}$ for the set of individuals $I^m \cup I^g$,
running it for the set of all individuals $I$ provides us with an immediate and intuitive fix. We refer to this alternative mechanism as the
\textit{two-step minimum guarantee (2SMG) choice rule\/}.

	\begin{quote}
        \noindent{}{\bf Two-Step Minimum Guarantee (2SMG) Choice Rule} {\boldmath$C^{2s}_{mg} = (C^{2s,\nu}_{mg})_{\nu \in \calv}$}\smallskip

\noindent Given a set of individuals  $I \subseteq \cali$,
\begin{eqnarray*}
&& C^{2s,o}_{mg}(I) = C^{o}_{mg}(I), \mbox{ and}\\
&& C^{2s,c}_{mg}(I) = C^{c}_{mg}\big(I^c\setminus C^{o}_{mg}(I)\big) \quad \mbox{ for any } c \in \calr.
\end{eqnarray*}
\end{quote}

Since the SCI-AKG choice rule is formally introduced in \textit{Anil Kumar Gupta (1995)\/} for the case of a single trait,
and in particular when HR protections are non-overlapping, 
it is best to consider the 2SMG choice rule  for the model with non-overlapping HR protections only.\footnote{When HR protections are overlapping, 
the outcome of the 2SMG choice rule depends on the processing sequence of traits at each vertical category of  positions.}

As one would naturally expect, replacing the SCI-AKG choice rule with the 2SMG choice rule results in a
weakly less favorable outcome for members of the general category.
The comparison for members of reserve-eligible categories is less straightforward, because in addition to the VR-protected
positions, these individuals  also compete for the open positions. However, assuming sufficient demand at each reserve-eligible category,
replacing the SCI-AKG choice rule with the 2SMG choice rule results in a
weakly more favorable outcome in aggregate for members of the reserve-eligible categories.

\begin{proposition} \label{prop:SCI-vs-2smg}
For every $I \subseteq \cali$,
\[ \widehat{C}^{2s}_{mg}(I) \cap I^g \subseteq \widehat{C}^{SCI}(I) \cap I^g,\]
and assuming $|I^c| \geq q^o + q^c$ for each reserve-eligible category $c \in \calr$,
\[  \sum_{c\in\calr } \big|\widehat{C}^{2s}_{mg}(I) \cap I^c\big| \geq \sum_{c\in\calr } \big|\widehat{C}^{SCI}(I) \cap I^c\big|.
\]
\end{proposition}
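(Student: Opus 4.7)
The plan is to reduce both inequalities to statements about the open-category allocation and then to invoke a substitutability property of the minimum guarantee choice rule $C^o_{mg}$. Since $C^{SCI,c}(I), C^{2s,c}_{mg}(I) \subseteq I^c$ for every $c \in \calr$, only the open-category selection contributes general-category members to either aggregate outcome, so
\[\widehat{C}^{SCI}(I) \cap I^g = C^o_{mg}(I^m \cup I^g) \cap I^g \quad \text{and} \quad \widehat{C}^{2s}_{mg}(I) \cap I^g = C^o_{mg}(I) \cap I^g.\]

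For the first inclusion I would establish a substitutability lemma: if $J \subseteq J'$, then $C^o_{mg}(J') \cap J \subseteq C^o_{mg}(J)$. Let $A'$ and $A''$ denote the Step 1 outputs of the $J$- and $J'$-runs, and fix $j \in C^o_{mg}(J') \cap J$. If $j \in A''$ via some trait $t \in \tau(j)$, then because the trait-$t$ pool in $J$ is a subset of that in $J'$, the merit rank of $j$ among trait-$t$ candidates weakly improves, so $j \in A'$. If instead $j$ is chosen in Step 2 of the $J'$-run, then any $k \in J \setminus A'$ with $\sigma(k) > \sigma(j)$ must lie outside $A''$ (otherwise the previous case applied to $k$ gives $k \in A'' \cap J \subseteq A'$, contradicting $k \notin A'$), so such $k$ belongs to $J' \setminus A''$; there are strictly fewer than $q^o - |A''|$ of these, and $|A'| \leq |A''|$ by monotonicity of $\min(|\{i \in \cdot : t \in \tau(i)\}|, q^o_t)$ in the candidate pool, hence $j$ is among the top $q^o - |A'|$ Step 2 candidates in $J$, so $j \in C^o_{mg}(J)$. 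Specializing to $J = I^m \cup I^g$ and $J' = I$ yields the first inclusion.

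For the second inequality, a counting argument suffices once I show that under the hypothesis $|I^c| \geq q^o + q^c$ for every $c \in \calr$ both rules fill all $q = q^o + \sum_{c \in \calr} q^c$ positions. Indeed, $|I| \geq q^o$ and $|I^m \cup I^g| \geq q^o$ (the latter because every top-$q^o$ merit candidate of $I$ lies in $I^m \cup I^g$ by construction of $I^m$), so both open-category selections have exactly $q^o$ elements; moreover $|I^c \setminus C^o_{mg}(\cdot)| \geq |I^c| - q^o \geq q^c$, so each category-$c$ selection has exactly $q^c$ elements. Because $\cali$ partitions as $I^g$ and $\bigcup_{c \in \calr} I^c$, we have
\[|\widehat{C}(I) \cap I^g| + \sum_{c \in \calr} |\widehat{C}(I) \cap I^c| = q\]
for each of the two rules, and combining this identity with the cardinality version of the first inclusion yields the desired inequality on the sum over $c$.

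The main obstacle is the Step 2 case of substitutability, which requires tracking both the ``promotion'' of trait-$t$ candidates into the Step 1 output as the pool shrinks from $J'$ to $J$ and the accompanying weak decrease in Step 2 capacity, and showing that the two effects are compatible. The first case of substitutability is immediate from monotonicity of trait-specific rank, and once substitutability is in hand the counting argument for the second inequality is routine.
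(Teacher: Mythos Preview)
Your proof is correct and follows essentially the same route as the paper: reduce both claims to the open-category selections, invoke substitutability of $C^o_{mg}$ for the first inclusion, and finish the second by a counting argument using that all positions are filled under the hypothesis. The only difference is cosmetic---the paper cites the substitutes condition for $C^o_{mg}$ from \cite{echyen12} rather than proving it, whereas you supply a direct argument (which is fine, though you might explicitly split off the trivial sub-case $j\in A'$ before running the Step~2 rank count).
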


\subsection{The Demise of the SCI-AKG Choice Rule and the Rise of the 2SMG Choice Rule} \label{subsec-demiseandrise}

In a rather unexpected development and while this paper was under revision for this journal,
in \textit{Saurav Yadav (2020)\/} a three-judge bench of the Supreme Court declared that
the SCI-AKG choice rule  is a product of misinterpretation of the Court's earlier judgments.
Referring to the failure of the SCI-AKG choice rule to  satisfy \textit{no justified envy\/} as an ``incongruity,'' the
justices annulled this mechanism, since it can result in ``irrational'' results.
Importantly, the same judgment
also endorsed the \textit{2SMG choice rule\/} as a possible replacement  for the abandoned SCI-AKG choice rule.
While  the justices have not mandated the  \textit{2SMG choice rule\/} in  \textit{Saurav Yadav (2020)\/},
they mandated that any choice rule adopted in India satisfy the axiom of \textit{no justified envy\/} and further
brought clarity for one  additional  subtle aspect of the HR protections presented in Section \ref{subsec:implicit}.
Importantly,  the  \textit{2SMG choice rule\/} is the only mechanism that satisfies these new mandates
together with those from \textit{Indra Sawhney (1992)\/} in applications with non-overlapping HR protections.
We next present this significant implication of \textit{Saurav Yadav (2020)\/}, which is not observed
in this important judgment.

\subsection{The Implicit Mandate of  the 2SMG Choice Rule Under  \textit{Saurav Yadav (2020)\/}} \label{subsec:implicit}

We next formulate three additional axioms,  the first of which is originally mandated by \textit{Indra Sawhney (1992)} and
maintained by  \textit{Saurav Yadav (2020)\/}, whereas the latter two are only recently  mandated by \textit{Saurav Yadav (2020)\/}
(as in the case of  the \textit{no justified envy\/} axiom formulated in Section \ref{subsec-flaws})
at their strength formulated below.

\begin{definition}
A choice rule $C=(C^{\nu})_{\nu \in \calv}$ is \textbf{non-wasteful} if,
for  every $I\subseteq \cali$,\; $v \in \calv$, and $j \in I$,
\[j \not\in  \widehat{C}(I) \; \mbox{ and } \;  |C^v(I)| < q^v   \quad \implies \quad    j \not\in \cali^v.
\]
\end{definition}
That is, if an individual $j$ is declined a position from each one of the categories (thus remaining unmatched) while
there is an idle position at some category $v\in \calv$,
then it must be the case that individual $j$ is not eligible for a position at category $v$.
This mild efficiency axiom has been mandated in India since \textit{Indra Sawhney (1992)\/}.

\begin{definition}
A choice rule $C=(C^{\nu})_{\nu \in \calv}$ \textbf{maximally accommodates HR protections}, if
for every $I \subseteq \cali$,  $v \in \calv$,\; and
$j \in \big(I\cap \cali^v\big) \setminus \widehat{C}(I)$,% who is eligible for category $v$,
\[ n^v(C^v(I)) =  n^v(C^v(I) \cup \{j\}).
\]
\end{definition}
In words, an individual who remains unassigned should not be able to increase the utilization of
HR protections at any category where  she has eligibility, if she were to be instead assigned a position in this category.
The only reason this axiom was not mandated in India prior to \textit{Saurav Yadav (2020)\/}  is that
under the previous interpretation of \textit{Anil Kumar Gupta (1995)\/} members of reserve-eligible categories were considered ineligible
for open-category HR protections. This restriction, which has been the root cause of the controversies involving the
SCI-AKG choice rule, has been revoked by  \textit{Saurav Yadav (2020)\/}, and consequently the axiom of \textit{maximum
accommodation of HR protections\/} is mandated in its stronger form as formulated above.

\begin{definition} \label{def-VR}
A choice rule $C=(C^{\nu})_{\nu \in \calv}$  \textbf{complies with VR protections} if,
for  every $I\subseteq \cali$,\; $c \in \calr$, and $i\in C^c(I)$,
\begin{enumerate}
\item $|C^o(I)| = q^o$,
\item for every $j \in C^o(I)$,
\[ \sigma(j) < \sigma(i) \quad \implies \quad n^o\big(C^o(I)\big) > n^o\big((C^o(I)\setminus\{j\}) \cup \{i\}\big), \mbox{ and} \]
\item $n^o\big(C^o(I) \cup \{i\}\big) = n^o\big(C^o(I)\big).$
\end{enumerate}
\end{definition}
Here the first two conditions formulate the idea of a vertical reservation \`{a} la \text{Indra Sawhney (1992)\/}, and they are
directly implied by the concept of ``over-and-above.''
For an individual $i$ to receive a position set aside for a reserve-eligible category (thereby not receiving an open position),
it must be the case that each open position is either assigned to a higher merit-score individual $j$, or to an individual $j$ whose selection
instead of $i$ increases the utilization of open-category HR protections.
The third condition additionally requires that a member of a reserve-eligible category who can improve the utilization of  open-category
HR protections shall not use up a VR-protected position.
Importantly, this third condition is an implication of another mandate in \textit{Saurav Yadav (2020)\/}, and therefore this judgment
enforces the axiom  of \textit{compliance with VR protections\/} in its stronger form as formulated above.\footnote{See Section \ref{app-yadav}  of
the Online Appendix for this important mandate in  \textit{Saurav Yadav (2020)\/}.}

We are ready to present our first main result, one that has important and previously unknown policy implications for India.

\begin{theorem}\label{thm:2smg}
Suppose each individual has at most one trait.
A choice rule
\begin{enumerate}
\item maximally accommodates HR protections,
\item satisfies no justified envy,
\item is non-wasteful, and
\item complies with VR protections
\end{enumerate}
if, and only if, it is the 2SMG choice rule $C^{2s}_{mg}$.
\end{theorem}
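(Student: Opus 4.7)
The plan is to prove the two directions of the biconditional separately. The ``only if'' direction---that $C^{2s}_{mg}$ satisfies the four axioms---is a direct verification. The key structural fact is that each minimum-guarantee rule $C^v_{mg}$ applied to any input maximizes $n^v$ among subsets of size at most $q^v$, which yields maximal accommodation and, together with the identity $|C^v_{mg}(\cdot)| = \min(q^v, |\text{input}|)$, non-wastefulness. For no justified envy, I argue that whenever $j \in I \setminus C^o_{mg}(I)$ has strictly higher merit than some $i \in C^o_{mg}(I)$, the individual $i$ must be a Step 1 (trait-based) pick for some trait $t$; using the fact that Step 2 picks top merit among Step 1 losers, $j$ cannot share trait $t$ and the trait-$t$ quota in $C^o_{mg}(I)$ must be exactly $\min(|\{i\in I : t\in\tau(i)\}|, q^o_t)$, so the swap $i \leftrightarrow j$ strictly lowers the trait-$t$ contribution to $n^o$. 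Compliance with VR is then a short check of its three parts for any $i \in C^c_{mg}(I^c \setminus C^o_{mg}(I))$, using that $|C^o_{mg}(I)| = q^o$ whenever $C^c_{mg}$ is nonempty, that part (2) is exactly the no-justified-envy property of $C^o_{mg}$ with the roles of $i$ and $j$ interchanged, and that any unselected trait-$t$ individual has already had trait $t$ saturated at the quota.

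For the ``if'' direction, let $C$ satisfy the four axioms and fix $I$. The main step is to prove $C^o(I) = C^o_{mg}(I)$. First, non-wastefulness gives $|C^o(I)| = \min(q^o, |I|)$. Next, I show that the count $|\{i \in C^o(I) : t \in \tau(i)\}|$ is at least $\min(|\{i \in I : t \in \tau(i)\}|, q^o_t)$ for every trait $t$, so that $C^o(I)$ maximizes $n^o$. If not, there is some trait-$t$ individual $j \in I \setminus C^o(I)$ whose addition would strictly raise $n^o$: maximal accommodation rules this out when $j$ is unassigned, and part (3) of compliance with VR rules it out when $j \in C^c(I)$ for some $c$. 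With maximality established, an exchange argument compares $C^o(I)$ with $C^o_{mg}(I)$: assuming they differ, I pick a highest-merit $j$ in the symmetric difference $C^o_{mg}(I) \setminus C^o(I)$, and case on whether $j$ is a Step 1 or Step 2 pick of $C^o_{mg}$. In either case the structure of $C^o_{mg}$ produces some $i \in C^o(I) \setminus C^o_{mg}(I)$ with $\sigma(j) > \sigma(i)$ for which the swap $i \leftrightarrow j$ preserves $n^o(C^o(I))$. If $j$ is unassigned under $C$, no justified envy yields the contradiction; if $j \in C^c(I)$ for some $c$, then part (2) of compliance with VR does.

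With $C^o(I) = C^o_{mg}(I)$ in hand, the second step is to show $C^c(I) = C^c_{mg}(I^c \setminus C^o(I))$ for each $c \in \calr$. This is a simpler repetition of the Step~1 argument on the restricted instance $J = I^c \setminus C^o(I)$: only category $c$ competes for these positions, so compliance with VR plays no role and only non-wastefulness, maximal accommodation, and no justified envy within category $c$ are invoked.

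I expect the main difficulty to lie in the exchange argument of Step~1, where the candidate $j$ for swapping may be in a VR-protected category rather than in the unassigned pool, forcing a switch from the no-justified-envy axiom to part (2) of compliance with VR at just the right moment. Properly coordinating these two axioms across all subcases of $j$'s Step 1 versus Step 2 status in $C^o_{mg}(I)$, and verifying that the swap $i\leftrightarrow j$ indeed preserves $n^o$ in each subcase, is the most delicate part of the proof.
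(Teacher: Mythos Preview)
Your proposal is correct and takes a genuinely different route from the paper. The paper does not prove Theorem~\ref{thm:2smg} directly at all: it simply observes that the 2SMH choice rule collapses to the 2SMG choice rule when each individual has at most one trait, and then invokes Theorem~\ref{thm:2s-envchar} (the characterization of 2SMH under overlapping HR protections). That general theorem is in turn proved by first establishing the single-category characterization of the meritorious horizontal rule (Theorem~\ref{thm:envchar}) via matroid theory---transversal matroids, the greedy algorithm, Gale dominance, and the symmetric base-exchange property---and then lifting it to the multi-category setting.

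Your argument, by contrast, is entirely elementary and self-contained for the non-overlapping case. You exploit the additive structure $n^v(S)=\sum_t \min(|S_t|,q^v_t)$ directly, and your exchange argument is really the combination of two facts: (i) any $S$ with $n^o(S)=n^o(I)$ must contain the Step~1 set $A^1$ of $C^o_{mg}$ (otherwise a same-trait swap preserves $n^o$ and violates NJE/compliance), and (ii) once $A^1\subseteq S$, the remaining $|S|-|A^1|$ individuals must be the top-merit residual $A^2$ (otherwise a swap between an $A^2$ element and an $S\setminus A$ element again preserves $n^o$). One suggestion: rather than picking the highest-merit $j$ in $C^o_{mg}(I)\setminus C^o(I)$ and casing on Step~1 versus Step~2, it is cleaner to establish $A^1\subseteq C^o(I)$ first and only then handle $A^2$; this sequencing makes the Step~2 swap argument immediate, since $C^o(I)\supseteq A^1$ guarantees that any $i\in C^o(I)\setminus A$ has its trait strictly over-saturated. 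Your approach buys simplicity and avoids any appeal to matroids; the paper's approach buys generality, since the same machinery handles the overlapping case where the additive formula for $n^v$ fails.
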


Prior to its endorsement by the three-judge bench of the Supreme Court in  \textit{Saurav Yadav (2020)\/}, the 2SMG choice rule  has been
introduced by the justices of the High Court of Gujarat in \textit{Tamannaben Ashokbhai Desai (2020\/)}, an August 2020 judgment
which also mandated the 2SMG choice rule in the state of Gujarat.\footnote{Our introduction and advocacy of  the 2SMG choice rule
predates both of these important judgments.}
However, while this choice rule is merely endorsed and not explicitly mandated by   \textit{Saurav Yadav (2020)\/} throughout India,
our first main result in Theorem \ref{thm:2smg} implies that this important judgment has implicitly mandated this mechanism
in field applications with non-overlapping HR protections.

\section{General Analysis and Policy Recommendations with Overlapping HR Protections} \label{sec:overlapping}

To the best of our knowledge,  the judgments on the implementation of HR policies in India largely abstract away
from any technical complications due to  overlapping HR protections.
Since this more general version of the problem is fairly common in the field, in this section we extend  our analysis to the model with overlapping HR protections.
This version of the problem, however, introduces a subtle but critical technical consideration that allows for at least two approaches
to generalize our model.
Hence, before presenting an analysis of concurrent implementation of VR and overlapping HR protections, 
we first elaborate on this consideration and justify the  modeling choice we make for our generalization.

\subsection{One-to One vs One-to-All HR Matching}
Whether horizontal reservations are overlapping or not,
an individual loses her open-category HR protections upon declaring her VR protections
under the Supreme Court judgment \textit{Anil Kumar Gupta (1995)\/}. Therefore, the main flaws of the SCI-AKG choice rule, originally defined for a single trait,
carry over to any possible generalization with overlapping HR protections.
For this more general and complex case, however,
one technical and subtle aspect of implementation of HR protections has been left unlegislated
and remains at the discretion of the central planner.
The law is silent on whether the admission of an individual with multiple traits
accommodates the minimum guarantee requirements for all her traits or only for one of her traits.
For example, suppose there is one HR-protected position for women and one HR-protected position for persons with disabilities.
If a woman with a disability  is admitted, the law does not specify whether she is to accommodate the minimum guarantee requirements
both for women and also for persons with disabilities, or only for one of these two protected groups.

In our extension, we focus on the second convention of implementing the HR protections, and thus assume that
an individual counts toward the  minimum guarantee requirement for only one of her traits  upon admission.
We refer to this convention of implementing HR protections as \textbf{one-to-one HR matching}, and
the alternative convention (where an individual counts toward the  minimum guarantee requirements for all her traits upon admission) as \textbf{one-to-all HR matching}.
There are two reasons for this important modeling choice.

The first reason is technical. The alternative convention of one-to-all HR matching introduces complementarities between
individuals, making their admissions potentially contingent on each other.
For example, if there is one HR-protected position for women and one HR-protected position for persons with disabilities,
the admission of a man without a disability may depend on the admission
of a woman with a disability who can accommodate the HR protections for both protected groups.
This complementarity, in turn, not only renders the derivation of feasible groups of individuals computationally hard,
but it also makes any possible solution technically less elegant.\footnote{See Section 4 in \cite{sonmez/yenmez:20} for an analysis
under the one-to-all HR matching convention with two traits.}
In contrast, our adopted convention of one-to-one HR matching enables
a fairly clean and computationally simple solution, as we present later in this section.

The second reason is practical.
While either convention appears to be allowed by the Indian judgments and legislation,
we have been unable to find any application with overlapping HR protections
where the allocation rules clearly specify (or imply) the adoption of the one-to-all HR matching convention.
In contrast, in many field applications, the central planner announces the number of positions for each
category-trait pair,\footnote{See for example Table 2 in \textit{Saurav Yadav (2020)\/}.}
which implicitly implies that they adopt the one-to-one HR matching convention.\footnote{We are also able to find a field application,
where the allocation rules explicitly specify the adoption of the one-to-one HR matching convention.}

%This more general case makes implementation of HR protections technically more complex,
%and consequently in Section  in  \ref{sec:singlecategory}
%we  first analyze implementation of overlapping HR protections in the absence of VR protections,  thus
%restricting our analysis to a single-category environment.
%Subsequently in Section \ref{sec:multicategory} we present our analysis in its full generality
%with multiple vertical categories and overlapping horizontal reservations.

\subsection{Single-Category Analysis with Overlapping HR Protections} \label{sec:singlecategory}

Since HR policies are implemented within vertical categories, we  start our analysis 
with the simple case of a single category. This version of the problem  also relates to practical applications other than our
main application in India, such as the allocation of K-12 public school seats in Chile where there are overlapping HR protections  \citep{correa19}.

Throughout Section \ref{sec:singlecategory}, we fix a category $v \in \calv$.

\subsubsection{The Case Against a Fixed Processing Sequence of Traits}

%%%% Why not minimum guarantee rule with fixed processing sequence? Two examples showing the challenges. Examples also motivate our axioms.

The 2SMG choice rule, introduced in Section \ref{subsec:2SMG}, is not well-defined in problems with overlapping HR
protections, because, for any vertical category $v\in \calv$,
the outcome of the category-$v$ minimum guarantee choice rule may depend  on the processing sequence of traits.
Therefore, it may be compelling to resolve this multiplicity by simply specifying
a processing sequence of traits  for each vertical category as additional list of parameters of the choice rule.
However,  we caution against this (admittedly compelling) generalization for it may introduce additional flaws in the system.

\begin{example} \label{Ex1}
There is one category (say open category), three individuals $i_1,i_2,i_3$, and two positions. There are
two traits $t_1, t_2$, with one HR-protected position each.
Individual $i_1$ has both traits, individual $i_2$ has no trait, and individual $i_3$ has trait $t_1$ only. Individuals
are  merit ranked as
\[ \sigma(i_1) \; > \; \sigma(i_2) \; > \; \sigma(i_3).
\]
We next generate the outcome of the (open category) \textit{minimum guarantee\/} choice rule for both processing sequences of the two traits,  first by
processing the trait-$t_1$ minimum guarantee prior to the trait-$t_2$ minimum guarantee, and subsequently
by processing them in the reverse order. \smallskip

\noindent \textit{Trait $t_1$ first, trait $t_2$ next\/}: The highest merit score individual with trait $t_1$ is $i_1$; she
receives a position, accommodating the minimum guarantee for trait $t_1$. No remaining individual has trait $t_2$; therefore only individual $i_1$ receives
a position in Step 1. The highest merit score remaining individual $i_2$ receives the second position in Step 2.  The set of selected individuals
is $\{i_1,i_2\}$, and only the trait $t_1$ minimum guarantee is accommodated under the first processing  sequence of traits. \smallskip

\noindent \textit{Trait $t_2$ first, trait $t_1$ next\/}: The highest merit score individual with trait $t_2$ is $i_1$; she
receives a position, accommodating the minimum guarantee for trait $t_2$. Among the remaining individuals, the highest merit score individual with trait $t_1$ is $i_3$; she
receives a position, accommodating the minimum guarantee for trait $t_1$. No position remains, and therefore the set of selected individuals
is $\{i_1,i_3\}$. Minimum guarantees for both traits are accommodated  under the second processing  sequence of traits.
\qed
\end{example}
Example \ref{Ex1} shows that;
\begin{enumerate}
\item the outcome of the minimum guarantee choice rule, in general, depends on the processing sequence of traits, and
\item for some processing sequences of traits, it may accommodate fewer than the maximum possible HR protections.
\end{enumerate}
Essentially, Example \ref{Ex1}  shows that a fixed processing sequence of traits may result in denial of
HR protections which can be avoided.

Our next example reveals another problematic implication of implementing the minimum guarantee choice rule under
a fixed processing sequence of traits.

\begin{example} \label{Ex2}
There is one category (say the open category), four individuals $i_1,i_2,i_3,i_4$,  and three positions. There are
two traits $t_1, t_2$, with one HR-protected position each.
Individual $i_1$ has both traits, individual $i_2$ has no trait, individual $i_3$ has only trait $t_1$, and individual $i_4$ has only trait $t_2$.
Individuals are  merit ranked as
\[ \sigma(i_1) \; > \; \sigma(i_2) \; > \; \sigma(i_3) \; > \; \sigma(i_4).
\]
We next generate the outcome of the (open category) \textit{minimum guarantee\/} choice rule for both processing sequences of the two traits,  first by
processing the trait-$t_1$ minimum guarantee prior to the trait-$t_2$ minimum guarantee, and subsequently
by processing them in the reverse order. \smallskip

\noindent \textit{Trait $t_1$ first, trait $t_2$ next\/}: The highest merit score individual with trait $t_1$ is $i_1$; she
receives a position, accommodating the minimum guarantee for trait $t_1$.
Among the remaining individuals, the one with the highest merit score with trait $t_2$ is $i_4$; she
receives a position, accommodating the minimum guarantee for trait $t_2$ and finalizing Step 1. The last position is assigned in Step 2 to the
highest merit score remaining individual $i_2$, and therefore  the set of selected individuals
is $\{i_1,i_2,i_4\}$  under the first processing  sequence of traits.  \smallskip

\noindent \textit{Trait $t_2$ first, trait $t_1$ next\/}: The highest merit score individual with trait $t_2$ is $i_1$; she
receives a position, accommodating the minimum guarantee for trait $t_2$.
Among the remaining individuals, the one with the highest merit score with trait $t_1$ is $i_3$; she
receives a position, accommodating the minimum guarantee for trait $t_1$ and finalizing Step 1. The last position is assigned in Step 2 to the
highest merit score remaining individual $i_2$, and therefore  the set of selected individuals
is $\{i_1,i_2,i_3\}$  under the second processing  sequence of traits.
\qed
\end{example}
Example \ref{Ex2} reveals  that, depending on the processing
sequence of traits, the outcome of the minimum guarantee choice rule  may admit lower merit score individuals at the expense
of higher merit score ones without  affecting adherence to the horizontal reservation policies.
In Example \ref{Ex2},   when the merit based outcome of  $\{i_1,i_2,i_3\}$ already accommodates the HR protections,
there is clearly no reason to select a less meritorious group.

These two examples not only  guide us on adjustments of our axioms to account for  overlapping HR protections,
they also motivate the \textit{meritorious horizontal choice rule\/}, introduced in Section \ref{sec:smarthorizontal}, as a natural
extension of the 2SMG choice rule.

\subsubsection{HR Graph and the Generalized HR-maximality Function} \label{sec:scaxioms}

In contrast to the version of our model with non-overlapping HR protections where
maximizing the accommodation of HR protections is a straightforward task, doing the
same for the general version of the model with overlapping HR protections
requires  embedding a \textit{maximum trait matching\/} procedure within each category.
Therefore, we rely on the following construction to generalize our HR-maximality function,  which we will use
\begin{enumerate}
\item  to extend our axioms initially presented in Section \ref{sec:non-overlapping} for the model with non-overlapping HR protections, and
\item to generalize the 2SMG choice rule for the model with 
overlapping HR protections in a way that escapes the shortcomings presented in Examples \ref{Ex1} and \ref{Ex2}.
\end{enumerate}

Given a category $v \in \calv$ and a set of individuals $I \subseteq \cali^v$,
% eligible for category $v$,
construct the following two-sided \textbf{category-}{\boldmath$v$} \textbf{HR graph}. On one side of the
graph, there are individuals in $I$. On the other side, there are HR-protected positions for category $v$.
Let $H_t^v$ denote the set of trait-$t$ HR-protected  positions for category $v$ and let $H^v = \bigcup_{t \in \calt} H^v_t$.
There are $q^v_t$ positions in $H^v_t$ and $\sum_{t\in \calt} q^v_t$ positions in $H^v$.
An individual $i \in I$ and a position $p \in H^v_t$ are \textbf{connected} in this graph if  and only if
individual $i$ has trait $t$.
\begin{definition}
Given a category $v \in \calv$ and a set of individuals $I \subseteq \cali^v$,
a \textbf{trait-matching} of individuals in $I$ with HR-protected  positions in $H^v$  is a
function $\mu: I \rightarrow H^v \cup \{\emptyset\}$ such that,
\begin{enumerate}
\item for any $i \in I$, $t\in \calt$,
\[ \mu(i)\in H_t^v \; \implies \; t \in \tau(i),  \]
\item for any $i,j \in I$,
\[ \mu(i) = \mu(j) \not= \emptyset \; \implies \; i=j.\]
\end{enumerate}
\end{definition}

\begin{definition}
Given a category $v \in \calv$ and a set of individuals $I \subseteq \cali^v$,
a trait-matching of individuals in $I$  with HR-protected  positions in $H^v$
\textbf{has maximum cardinality in a (category-}{\boldmath$v$}) \textbf{HR graph} if
there exists no other trait-matching that assigns a strictly higher number of HR-protected  positions to individuals.
\end{definition}
Let {\boldmath$n^v(I)$} denote the maximum number of category-$v$ HR-protected  positions that can
be assigned to individuals in $I$.\footnote{This number can be found through several polynomial time algorithms such as
\textit{Edmonds' Blossom Algorithm} \citep{edmonds_1965}.}
Observe that function $n^v$ generalizes the category-$v$ HR-maximality function presented in Definition \ref{def-HRmaximality}
for the model with non-overlapping HR protections to the more general version of the model with overlapping HR protections
(under the convention of one-to-one HR matching).

\begin{remark}
All our axioms in Section \ref{sec:non-overlapping} are extended for our more general model with overlapping HR protections
by simply replacing the simpler version of the HR-maximality function given in Definition \ref{def-HRmaximality} with
the generalized version.
\end{remark}

The following terminology is useful for our generalization of the 2SMG choice rule.

\begin{definition}
Given a category $v \in \calv$ and a set of individuals $I \subseteq \cali^v$,
an individual $i\in  \cali^v \setminus I$ \textbf{increases the (category-}{\boldmath$v$}) \textbf{HR utilization of} {\boldmath$I$} if
\[n^v(I \cup\{i\})=n^v\big(I)+1.\]
\end{definition}

\subsubsection{Meritorious Horizontal Choice Rule} \label{sec:smarthorizontal}
We are ready to introduce a single-category choice rule that escapes the shortcomings presented
in Examples \ref{Ex1} and \ref{Ex2}. The main innovation in this choice rule is the optimization it carries
out to determine who is to account for each minimum guarantee when some of the individuals can account for
one or another due to multiple traits they have.
Intuitively, this choice rule  exploits the flexibility in trait-matching in order to accommodate the
HR protections with higher merit-score individuals.

Given a category $v \in \calv$ and a set of individuals $I \subseteq \cali^v$, the outcome of this choice rule
is obtained using the following procedure.
\medskip

	\begin{quote}
        \noindent{}{\bf Meritorious Horizontal Choice Rule} {\boldmath$C^{v}_{\circled{M}}$}\smallskip

		\noindent{}{\bf Step 1.1}:
			Choose the highest merit-score individual in $I$ with a trait for an  HR-protected  position.
			Denote this individual by $i_1$
            and let $I_1 = \{i_1\}$. If no
            such individual exists, proceed to Step 2.
		%\medskip
	
		\noindent{}{\bf Step 1.k} {\boldmath($k\in \{2, \ldots, \sum_{t\in \calt} q^v_t\}$)}:
		Assuming such an individual exists,
		choose the highest merit-score individual in $I \setminus I_{k-1}$
		who increases the HR utilization of $I_{k-1}$.\footnote{This can be done
           with various computationally efficient algorithms; see, for example, the bipartite cardinality matching algorithm \citep[Page 195]{lawler}.}
		Denote this individual by $i_k$ and  let $I_k=I_{k-1}\cup \{i_k\}$ .
		If no such individual exists, proceed to Step 2.

        \noindent{}{\bf Step 2}:
           For unfilled positions, choose unassigned individuals with highest merit scores until either all positions
           are filled or  all individuals are selected.
	\end{quote}
When the number of individuals is less than $q^v$, this procedure selects all individuals.
Otherwise, if there are more than $q^v$ individuals, then it chooses a set with $q^v$
individuals.

\subsubsection{Single-Category Results with Overlapping HR Protections}

We next present two single-category results under overlapping HR protections,
which suggest that the case for the meritorious horizontal choice rule is especially
strong in this framework.

Justifying the naming of this choice rule, our next result shows that the meritorious horizontal choice rule $C^v_{\circled{M}}$
always selects higher merit-score
individuals compared to other choice rules that maximally accommodate HR protections.

\begin{proposition}\label{prop:compenv}
Given a category $v \in \calv$,
let $C^v$ be any single-category choice rule that maximally accommodates HR protections.
Then, for every set of individuals $I \subseteq \cali^v$,
\begin{enumerate}
  \item $|C^v(I)| \leq |C^v_{\circled{M}}(I)|$, and
  \item for every $k\leq |C^v(I)|$, if $i$ is the $k$-th highest merit-score individual in $C^v_{\circled{M}}(I)$ and
  $j$ is  the $k$-th highest merit-score individual  in $C^v(I)$, then
\[
 i=j \quad \mbox{ or } \quad \sigma(i)  >  \sigma(j).
\]
\end{enumerate}
\end{proposition}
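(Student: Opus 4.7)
The plan is to handle parts (1) and (2) separately, with the heavy lifting done via the fact that $n^v$ is the rank function of the transversal matroid induced by the category-$v$ HR graph.

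Part (1) is a direct cardinality argument. The construction of $C^v_{\circled{M}}$ forces $|C^v_{\circled{M}}(I)| = \min\{|I|, q^v\}$, whereas any single-category choice rule satisfies $|C^v(I)| \leq |I \cap \cali^v| \leq |I|$ and $|C^v(I)| \leq q^v$, yielding $|C^v(I)| \leq |C^v_{\circled{M}}(I)|$.

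For part (2), enumerate $I$ in decreasing merit order as $i_1, i_2, \ldots$ and set $I_p = \{i_1, \ldots, i_p\}$. I would reduce the claim to the prefix dominance $|C^v_{\circled{M}}(I) \cap I_p| \geq |C^v(I) \cap I_p|$ for every $p$: writing $\pi_M(k)$ (resp.\ $\pi_B(k)$) for the merit-position within $I$ of the $k$-th highest-merit element of $C^v_{\circled{M}}(I)$ (resp.\ $C^v(I)$), this dominance forces $\pi_M(k) \leq \pi_B(k)$, so the $k$-th elements either coincide or the one in $C^v_{\circled{M}}(I)$ has strictly higher merit.

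The prefix inequality is then established via matroid theory. Let $r = n^v(I)$ and $r_p = n^v(I_p)$. Since ``maximally accommodates HR protections'' combined with the closure property of matroid rank forces $n^v(C^v(I)) = r$, submodularity gives $r = n^v(C^v(I)) \leq n^v(C^v(I) \cap I_p) + n^v(C^v(I) \setminus I_p) \leq r_p + |C^v(I) \setminus I_p|$, so $|C^v(I) \cap I_p| \leq q^v - r + r_p$; combined with the trivial bound $p$, this gives $|C^v(I) \cap I_p| \leq \min(p, r_p + q^v - r)$. On the other side, Step 1 of $C^v_{\circled{M}}$ is the matroid greedy algorithm in decreasing merit order, and after processing $I_p$ it has selected a basis of the matroid restricted to $I_p$, contributing exactly $r_p$ elements of $I_p$; Step 2 then contributes a further $\min(p - r_p, q^v - r)$ elements of $I_p$, giving $|C^v_{\circled{M}}(I) \cap I_p| = \min(p, r_p + q^v - r)$, matching the upper bound.

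The main conceptual step is recognising $n^v$ as a matroid rank function, after which greedy correctness, submodularity, and the closure characterisation of rank do all the work. The principal technical care needed is verifying that each stage in Step 1 of $C^v_{\circled{M}}$ preserves matroid independence (so that ``increases HR utilization'' coincides with ``independence is maintained'') and that the truncation of greedy on $I$ to the first $p$ merit ranks coincides with greedy on $I_p$---both follow from standard matroid facts but should be stated explicitly.
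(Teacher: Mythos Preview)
Your proof is correct and takes a genuinely different route from the paper's.

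The paper proves part~(2) by induction on the number of individuals chosen in Step~2 of $C^v_{\circled{M}}$. The base case (Step~2 empty) is Gale's classical result that the greedy output Gale-dominates every independent set. For the inductive step the paper invokes the strong base-exchange property (B2$'$) to locate a pair $j\in K$, $j'\in K'$ with $\sigma(j)\geq\sigma(j')$ (their Lemma~4), removes that pair, drops capacity by one, and appeals to the inductive hypothesis on a smaller instance.

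Your argument avoids both the induction and the base-exchange machinery. By reducing Gale domination to the prefix inequality $|C^v_{\circled{M}}(I)\cap I_p|\geq |C^v(I)\cap I_p|$, you only need the rank axioms: monotonicity and submodularity of $n^v$ give the upper bound $|C^v(I)\cap I_p|\leq \min\bigl(p,\,r_p+|C^v(I)|-r\bigr)\leq \min(p,\,r_p+q^v-r)$, while the greedy structure of Step~1 together with the top-up in Step~2 pins down $|C^v_{\circled{M}}(I)\cap I_p|$ \emph{exactly} as $\min(p,\,r_p+q^v-r)$ (in the nontrivial case $|I|\geq q^v$). This is shorter and more self-contained---you never need Gale's theorem or B2$'$ as black boxes---and it yields the sharper quantitative statement that $C^v_{\circled{M}}$ actually attains the extremal prefix profile. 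The paper's approach, by contrast, makes the link to the classical matroid-optimisation literature more explicit and modular, which is what the authors wanted to emphasise. The two auxiliary facts you flag (that ``increases HR utilization'' coincides with ``remains independent'' along the greedy run, and that greedy on $I$ truncated to $I_p$ equals greedy on $I_p$) are exactly the points that need a line of justification, and both follow from $I_{k-1}$ being independent by induction together with the hereditary property of matroid rank.
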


We next present a characterization of the meritorious horizontal  choice rule  $C^v_{\circled{M}}$.

\begin{theorem}\label{thm:envchar}
Given a category $v \in \calv$,
a single-category choice rule
\begin{enumerate}
\item maximally accommodates HR protections,
\item satisfies no justified envy, and
\item is non-wasteful
\end{enumerate}
if, and only if, it is the meritorious horizontal choice rule $C^v_{\circled{M}}$.
\end{theorem}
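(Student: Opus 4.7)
My plan is to prove both directions. For the ``if'' direction, I would verify that $C^v_{\circled{M}}$ satisfies each of the three axioms. Non-wastefulness is immediate from the procedure filling $\min(|I\cap\cali^v|,q^v)$ slots, and maximal accommodation follows from the Step-1 termination condition together with the submodularity of $n^v$ (the rank function of the transversal matroid on the category-$v$ HR graph): submodularity extends the non-augmentation property from the final Step-1 set $I_K$ to all supersets, hence to $C^v_{\circled{M}}(I)$ itself. For no justified envy, given $i\in C^v_{\circled{M}}(I)$ and $j\in(I\cap\cali^v)\setminus C^v_{\circled{M}}(I)$ with $\sigma(j)>\sigma(i)$, I would argue that $i$ must be a Step-1 pick $i_m$ (since Step 2 selects by merit from $I\setminus I_K$, any higher-merit $j\in I\setminus I_K$ would be picked first), and that every Step-2 pick likewise has merit strictly greater than $\sigma(j)$. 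Hence at Step~1.$m$ each of $j$ and every Step-2 pick was available yet not selected, so none augments $n^v(I_{m-1})$; submodularity then forces $n^v((C^v_{\circled{M}}(I)\setminus\{i_m\})\cup\{j\})=n^v(I_K\setminus\{i_m\})\leq n^v(I_K)-1<n^v(C^v_{\circled{M}}(I))$, using that $I_K$ admits a perfect HR matching by construction.

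For the uniqueness direction, suppose $C$ satisfies the three axioms. Non-wastefulness together with maximal accommodation yields $|C(I)|=\min(|I\cap\cali^v|,q^v)=:M=|C^v_{\circled{M}}(I)|$ and $n^v(C(I))=n^v(C^v_{\circled{M}}(I))=:n^*$. Proposition~\ref{prop:compenv} then provides the pointwise merit-domination $\sigma(b_k)\geq\sigma(a_k)$ for the decreasing-merit enumerations $b_1>\cdots>b_M$ of $C^v_{\circled{M}}(I)$ and $a_1>\cdots>a_M$ of $C(I)$. If $C(I)\neq C^v_{\circled{M}}(I)$, let $k^*$ be the smallest rank at which they differ; then $\sigma(b_{k^*})>\sigma(a_{k^*})$, $b_{k^*}\in C^v_{\circled{M}}(I)\setminus C(I)$, and $a_{k^*}\in C(I)$. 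The plan is to contradict the no-justified-envy requirement $n^v((C(I)\setminus\{a_{k^*}\})\cup\{b_{k^*}\})<n^*$ by exhibiting an HR matching of size $n^*$ in the swapped set.

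Producing that matching is the main obstacle. When $a_{k^*}$ is unmatched in some max matching of $C(I)$, the swap trivially preserves $n^v$; otherwise, I would use an alternating-path argument on the symmetric difference of a max matching of $C(I)$ and one of $C^v_{\circled{M}}(I)$ chosen so that $b_{k^*}$ is matched whenever possible (automatic when $b_{k^*}\in I_K$; otherwise handled separately via a direct exchange on the unmatched ``excess'' parts of the two sets, using that Step-2 picks are the top-merit elements of $I\setminus I_K$). The alternating path starting at $b_{k^*}$ terminates at some individual $x\in C(I)$, and rerouting delivers $n^v((C(I)\setminus\{x\})\cup\{b_{k^*}\})\geq n^*$. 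By no justified envy applied to the pair $(x,b_{k^*})$, one obtains $\sigma(x)\geq\sigma(b_{k^*})$, which combined with the merit-domination forces $x\in\{a_1,\dots,a_{k^*-1}\}=\{b_1,\dots,b_{k^*-1}\}\subseteq C^v_{\circled{M}}(I)$. Iterating the exchange---each iteration strictly shrinks the set of individuals in $C(I)\cap C^v_{\circled{M}}(I)$ reachable by alternating paths from $b_{k^*}$, so the process terminates after at most $k^*-1$ steps---eventually produces an alternating path ending at $a_{k^*}$ itself, delivering the desired contradiction.
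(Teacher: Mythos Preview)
Your ``if'' direction is sound and close in spirit to the paper's: non-wastefulness is immediate, maximal accommodation follows from submodularity of the rank function $n^v$, and your no-justified-envy argument (forcing $i$ into Step~1, then showing all Step-2 picks and $j$ fail to augment $I_{m-1}$, hence fail to augment $I_K\setminus\{i_m\}$) is correct.

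The uniqueness direction has a genuine gap in the ``iterating the exchange'' step. Your alternating-path argument correctly shows that the path from $b_{k^*}$ in $\mu\mathbin{\triangle}\nu$ terminates at some $x\in C(I)$ with $n^v\bigl((C(I)\setminus\{x\})\cup\{b_{k^*}\}\bigr)\geq n^*$, and no justified envy then forces $\sigma(x)>\sigma(b_{k^*})$, so $x\in\{a_1,\dots,a_{k^*-1}\}$. But this is exactly the problem: the path \emph{always} ends at a high-merit element of $A\cap B$, never at $a_{k^*}$ (which has merit below $\sigma(b_{k^*})$). You do not specify what ``iterating'' means---which matching you modify, how, or why the modified matchings exist. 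A natural reading is that you replace $\nu$ by a max matching of $B$ that matches $b_{k^*}$ together with all previously found $x_1,\dots,x_{i-1}$, but nothing guarantees such a matching exists: the $x_\ell$ lie in $B_{<k^*}$ and may be Step-2 picks with no traits (hence unmatchable), and even if matchable, the set $\{b_{k^*},x_1,\dots,x_{i-1}\}$ need not be independent in the transversal matroid. The claimed termination at $a_{k^*}$ is therefore unsupported.

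The paper takes a different and cleaner route for uniqueness. Rather than comparing $C(I)$ to $C^v_{\circled{M}}(I)$ element by element, it applies the greedy choice rule $C^v_G$ to the \emph{output} $C^v(I)$, defining $D^v(I)=C^v_G(C^v(I))$, and verifies that $D^v$ is independent, rank-maximal, and satisfies no justified envy as a matroid choice rule. By the abstract characterization of greedy (Lemma~\ref{lem:greedy}), $D^v=C^v_G$, so $C^v_G(I)\subseteq C^v(I)$; non-wastefulness and no justified envy then pin down the remaining elements as exactly the Step-2 picks. This avoids any alternating-path bookkeeping and sidesteps the obstruction you encountered.
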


\subsection{Two-Step Meritorious Horizontal Choice Rule \& Its Characterization} \label{sec:multicategory}

We are ready to formulate and propose a choice rule for our model in its full generality. The following choice rule
uses the meritorious horizontal choice rule multiple times,  first to allocate  open-category positions, and next
for each reserve-eligible category to allocate VR-protected positions.
\smallskip

	\begin{quote}
        \noindent{}{\bf Two-Step Meritorious Horizontal (2SMH) Choice Rule} {\boldmath$C^{2s}_{\circled{M}} = (C^{2s,\nu}_{\circled{M}})_{\nu \in \calv}$}\smallskip

\noindent For every $I \subseteq \cali$,
\begin{eqnarray*}
&& C^{2s,o}_{\circled{M}}(I) = C^{o}_{\circled{M}}(I), \mbox{ and}\\
&& C^{2s,c}_{\circled{M}}(I) = C^{c}_{\circled{M}}\big(I^c\setminus C^{o}_{\circled{M}}(I)\big) \quad \mbox{ for any } c \in \calr.
\end{eqnarray*}
        \end{quote}

We next present our main characterization result, extending our analogous characterization of the 2SMG choice rule
under non-overlapping HR protections to its generalization the 2SMH choice rule under overlapping HR protections.

\begin{theorem}\label{thm:2s-envchar}
A choice rule
\begin{enumerate}
\item maximally accommodates HR protections,
\item satisfies no justified envy,
\item is non-wasteful, and
\item complies with VR protections
\end{enumerate}
if, and only if, it is the 2SMH choice rule $C^{2s}_{\circled{M}}$.
\end{theorem}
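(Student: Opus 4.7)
The plan is to follow the proof blueprint of Theorem \ref{thm:2smg}, but with the single-category characterization in Theorem \ref{thm:envchar} as the main engine in place of the more elementary single-category argument available under non-overlapping HR protections. Both directions need to be established: (a) $C^{2s}_{\circled{M}}$ satisfies the four axioms, and (b) any choice rule satisfying the four axioms coincides with $C^{2s}_{\circled{M}}$.

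For direction (a), since $C^{2s,o}_{\circled{M}} = C^o_{\circled{M}}$ and $\cali^o = \cali$, Theorem \ref{thm:envchar} immediately supplies non-wastefulness, no justified envy, and maximal accommodation of HR protections for the open-category component. The analogous properties for each reserve-eligible category $c$ follow by applying Theorem \ref{thm:envchar} to $C^c_{\circled{M}}$ on its input $I^c \setminus C^o_{\circled{M}}(I)$, after noting that any unmatched $j \in I \cap \cali^c$ must lie in this input and must have been rejected from it. Compliance with VR protections is the most delicate axiom: for $i \in C^{2s,c}_{\circled{M}}(I)$ we have $i \in \cali \setminus C^o_{\circled{M}}(I)$, so the three single-category axioms of $C^o_{\circled{M}}$ applied to the unchosen individual $i$ supply precisely the three clauses of Definition \ref{def-VR}.

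For direction (b), let $C$ satisfy all four axioms. I will first prove $C^o(I) = C^o_{\circled{M}}(I)$ by verifying the three single-category axioms of Theorem \ref{thm:envchar} for $C^o$ viewed as a single-category rule on $\cali^o = \cali$. The key case analysis splits any $j \in I \setminus C^o(I)$ according to whether $j \in C^c(I)$ for some $c \in \calr$ or $j$ is fully unmatched. In the first case, the three clauses of VR compliance deliver exactly the single-category non-wastefulness, no justified envy, and maximal accommodation conditions for $j$; in the second case, the ambient full-choice-rule axioms apply directly. Theorem \ref{thm:envchar} then gives $C^o = C^o_{\circled{M}}$. Next, for each $c \in \calr$, set $J = I^c \setminus C^o_{\circled{M}}(I)$ and observe $C^c(I) \subseteq J$, since $C^c(I) \subseteq I^c$ by eligibility and is disjoint from $C^o(I)$ by the definition of a choice rule. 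Any $j \in J \setminus C^c(I)$ lies in $I^c$, so by the mutual exclusivity of reserve-eligible categories it cannot belong to $C^{c'}(I)$ for any $c' \neq c$, and it is not in $C^o(I)$ by construction; hence $j \notin \widehat{C}(I)$. The full-choice-rule axioms then translate cleanly into the three single-category axioms for $C^c$ on input $J$, and Theorem \ref{thm:envchar} yields $C^c(I) = C^c_{\circled{M}}(J)$.

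The main obstacle is the open-category step in direction (b): the ambient axioms only constrain $C^o(I)$ relative to individuals who are completely unmatched, whereas the single-category characterization requires constraints relative to every individual in $I \setminus C^o(I)$, including those picked up by reserve-eligible categories. Compliance with VR protections was crafted precisely to close this gap, and aligning its three clauses with the three single-category axioms is the technical heart of the proof.
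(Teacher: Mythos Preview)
Your proposal is correct and follows essentially the same approach as the paper: both reduce the problem to the single-category characterization of Theorem~\ref{thm:envchar}, first for the open category by case-splitting on whether the relevant individual $j$ lies in some $C^c(I)$ (where the three clauses of VR compliance supply the needed single-category conditions) or is fully unmatched (where the ambient axioms apply directly), and then for each reserve-eligible category $c$ by observing that any $j \in (I^c \setminus C^o(I)) \setminus C^c(I)$ is necessarily unmatched. You are in fact slightly more thorough than the paper in explicitly outlining direction (a) and in articulating how the three clauses of Definition~\ref{def-VR} align one-to-one with the three single-category axioms of $C^o_{\circled{M}}$ applied to an unchosen $i$; the paper proves only the uniqueness direction in detail and leaves the verification that $C^{2s}_{\circled{M}}$ satisfies the axioms largely implicit.
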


In addition to being the only choice rule that satisfies each of the four axioms in Theorem \ref{thm:2s-envchar},
our proposed  2SMH choice rule $C^{2s}_{\circled{M}}$ also satisfies the axiom of incentive compatibility defined in Section \ref{subsec-flaws}.

\begin{proposition} \label{prop:ic}
The 2SMH choice rule $C^{2s}_{\circled{M}}$ satisfies incentive compatibility.
\end{proposition}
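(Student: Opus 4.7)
The plan is to deduce incentive compatibility from a trait-monotonicity property of the single-category rule $C^v_{\circled{M}}$, combined with a short case analysis. The key lemma I would establish is the following: for every category $v \in \calv$, set $I \subseteq \cali^v$, and individual $i \in I$, weakening $i$'s declared trait set (all other data held fixed) cannot cause $i$ to be selected in $C^v_{\circled{M}}(I)$ if she was not before; and whenever $i$ is unselected under the stronger declaration, the two outputs of $C^v_{\circled{M}}$ coincide as sets.

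To prove this monotonicity lemma, I would identify Step~1 of $C^v_{\circled{M}}$ with the classical greedy algorithm for a maximum-weight basis of the transversal matroid on the category-$v$ HR graph of Section~\ref{sec:scaxioms}. Weakening $i$'s traits deletes only edges incident to $i$, so independence is unchanged for every subset not containing $i$. Processing individuals in decreasing merit order therefore yields a common prefix $S_{<i}$ at $i$'s turn. If $i$ is added under the weaker declaration, the matching witnessing independence of $S_{<i} \cup \{i\}$ transfers verbatim to the richer graph, so $i$ is added under the full declaration as well; contrapositively, if $i$ is rejected under the full declaration she must be rejected under the weaker one too, and from that point on every greedy decision involves a running selection that excludes $i$ and so is governed by non-$i$ edges alone. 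Hence the Step~1 outputs coincide, and consequently so do the Step~2 selections of highest-merit unassigned individuals.

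Using this lemma, Proposition~\ref{prop:ic} follows by factoring any withholding as a pure category withholding composed with a pure trait withholding. A pure category withholding leaves the open-category selection unchanged since $C^o_{\circled{M}}$ ignores categories, and under the withheld declaration $i$ is only eligible for open, so any selection of $i$ persists under full declaration. For a pure trait withholding, suppose $i$ is selected under $I'$. If $i \in C^o_{\circled{M}}(I; \tau')$, the lemma immediately yields $i \in C^o_{\circled{M}}(I; \tau)$. Otherwise $i \in C^c_{\circled{M}}(R'; \tau')$ for $c = \rho(i)$ and $R' = I^c \setminus C^o_{\circled{M}}(I; \tau')$; if $i \in C^o_{\circled{M}}(I; \tau)$ we are done, and if not, the second clause of the lemma forces $C^o_{\circled{M}}(I; \tau) = C^o_{\circled{M}}(I; \tau')$, so the VR-$c$ pool $R := I^c \setminus C^o_{\circled{M}}(I; \tau)$ equals $R'$, and a second application of the lemma to category $c$ on $R$ delivers $i \in C^c_{\circled{M}}(R; \tau)$. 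The substantive hurdle is the second clause of the monotonicity lemma: proving the two outputs coincide whenever $i$ is rejected requires tracking that the entire post-$i$ greedy trajectory is invariant under edge deletions at a rejected vertex, and this invariance is precisely what lets the pool equality $R = R'$ go through in the reserve-category step of the main deduction.
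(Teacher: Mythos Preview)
Your approach is correct and mirrors the paper's own proof: both argue by cases on where $i$ is selected under the withheld declaration and rely on the observation that more declared traits mean more edges in the HR graph, which can only help under the greedy Step~1 of $C^v_{\circled{M}}$. Your treatment is more careful than the paper's terse sketch---in particular, you make explicit and prove the pool-invariance step (the second clause of your monotonicity lemma) that the paper leaves implicit when passing to the reserve-category subproblem.
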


\subsection{Related Literature}

Our theoretical analysis of reservation policies differs from its predecessors in two ways:
\begin{enumerate}
\item \textit{concurrent\/} implementation of VR and HR protections, and
\item potentially \textit{overlapping\/} structure of HR protections.
\end{enumerate}
While there is a rich literature on affirmative action policies in India and elsewhere,
%to the best of our knowledge,
our paper is the first one to formally analyze vertical and
horizontal reservation policies when they are implemented concurrently.

There are a number of recent papers on reservation policies, most in the context of school choice.
\cite{abdulson03} study affirmative action policies that limit the number of admitted students of
a given type through hard quotas. \cite{koj12} shows that a policy of limiting the number of majority
students through hard quotas can hurt minority students, the intended beneficiaries. To overcome the detrimental effect of
affirmative action policies based on majority quotas, \cite{hayeyi13} introduce 
policies based on \textit{minority reserves\/}.
In the absence of overlapping reservations,   \cite{echyen12} present an axiomatic characterization
of the minimum guarantee choice rule.
Most recently, \cite{pathak/sonmez/unver/yenmez:20} consider a general model of reservation policies to
balance various ethical principles for pandemic medical resource allocation, although their model is not equipped
to analyze concurrent implementation of vertical and overlapping horizontal reservation policies.

A few papers study the implementation of vertical or (non-overlapping) horizontal reservations individually
in various real-life applications.
These include  \cite{dur_boston} for school choice in Boston, \cite{dur16} for school choice in Chicago, and
\cite{prs1:20} for H-1B visa allocation in the US.
All these models  are applications of the more general model in
\cite{komson16}, where the authors introduce a matching model with \textit{slot-specific priorities\/}.
In contrast, our model is  independent from \cite{komson16}.
Three additional papers on reservation policies include \cite{aygtur16,aygtur17}, where the authors study admissions to engineering colleges in India, and  \cite{aybo16}, where the authors study admissions to Brazilian public universities. While the application in \cite{aygtur16,aygtur17} is closely related to ours, their analysis is independent because not only horizontal reservations are assumed away altogether in these papers, but also
analyses in these papers largely abstract away from the legal requirements in India.\footnote{See
also the discussion of Indian college admissions in \cite[Appendix C.1]{echyen12}.}
In contrast, the presence of horizontal reservations is of key importance for our analysis
that is build on Indian legislation.
The Brazilian affirmative action application studied by  \cite{aybo16} relates to ours in that it also includes multi-dimensional reservation policies,
but unlike our models their application is a special case of \cite{komson16}.
There is, however, one important element in our paper that directly builds on   \cite{aybo16}.
The two desiderata  that play an important role in our proposed reform in India,
\textit{no justified envy\/} and \textit{incentive compatibility\/} are originally introduced by
 \cite{aybo16}.  
Evidence from aggregate data  suggesting 
that the presence of justified envy is widespread in Brazil is also presented in this paper.
As in  \cite{aybo16}, we also present extensive evidence of justified envy in the field, but  in addition we also
document the large scale disruption this anomaly creates in the field. Other less related papers on
reservation policies include
\cite{westkamp10},  \cite{ehayeyi14},
\cite{kamakoji-basic}, and \cite{frapet17}.

In the absence of vertical reservations, 
 analysis of overlapping horizontal  reservations has received some attention in the literature
(\cite{kurata17}), albeit for a different variant of the problem  where individuals have strict preferences
for whether and which protection is invoked in securing a position.
When applied in an environment where individuals are indifferent between all positions,
choice rules recommended in  \cite{kurata17} result in the limitations presented in Section \ref{sec:singlecategory}.
Building on the literature in matroid  theory,  we overcome these difficulties with the meritorious horizontal choice rule. 
More specifically,  Proposition \ref{prop:compenv} and Theorem \ref{thm:envchar} are conceptually related 
to abstract  results in matroid theory. 
Proposition \ref{prop:compenv} can be seen as a generalization of a result
in \cite{gale1968} which shows that the outcome of the Greedy algorithm ``dominates'' any independent set of a matroid. In our
appendix, we refer to this domination relation as ``Gale domination.'' 
The first step of our meritorious horizontal choice rule corresponds to the Greedy algorithm defined on an adequately defined matroid, 
and  Proposition \ref{prop:compenv} shows that this choice rule Gale dominates any choice rule that maximally complies with HR protections.
The proof uses mathematical induction on the number of individuals chosen at the second step of our
choice rule and uses Gale's result for the base case. 
Parts of the proof of Theorem \ref{thm:envchar} uses the properties of the Greedy algorithm. 

More broadly, our paper contributes to the field of market design, where
economists are increasingly taking advantage of advances in
technology to design new or improved allocation mechanisms in
applications as diverse as entry-level labor markets \citep{roth99}, school choice \citep{balson99,abdulson03},
spectrum auctions \citep{milgr00b}, kidney exchange \citep{roth04,roth:sonmez:unver:2005}, internet auctions \citep{edossc07,varian07},
course allocation \citep{sonunv10,budish:2011},  cadet-branch matching \citep{sonmez:switzer:2013,sonmez_rotc2011},
assignment of airline arrival slots \citep{schummer:vohra:2013,schummer:abizada:2017}, and
refugee matching \citep{jones:teyt:2017,del16,ander17}.

\section{Epilogue: Life Imitates Science with the December 2020 Supreme Court judgment \textit{Saurav Yadav v State of Uttar Pradesh (2020)\/} }  \label{sec:Epilogue}
As our paper was under revision for this journal, a December 2020 Supreme Court judgment in
\textit{Saurav Yadav v State of Uttar Pradesh (2020)\/} became headline news in India.\footnote{See, for example,
 \textit{The Indian Express\/} opinion dated 12/26/2020 ``SC verdict exposes fallacy of using general category as reservation for upper castes,'' available in
\url{https://indianexpress.com/article/opinion/columns/casteism-supreme-court-saurav-yadav-reservation-7120348/}, and
\textit{The Wire\/} analysis dated 12/23/2020 ``How the Supreme Court Blocked Attempts to Dilute Merit Under the Open Category,'' available in
\url{https://thewire.in/law/supreme-court-reservation-merit}. Both links last accessed on 06/03/2021.}
Using arguments parallel to our analysis presented in Section \ref{sec:non-overlapping} and
the evidence we documented  from high court cases presented in Section \ref{sec:justifiedenvy} of the Online Appendix,
a three-judge bench of the highest court reached much of the same conclusions we had reached earlier in the March 2019 working version of this paper
in \cite{sonyen19}.
Most notably, similar to our policy recommendations, with this judgment
\begin{enumerate}
\item all allocation rules for public recruitment are federally mandated to satisfy \textit{no justified envy\/}, and thereby
\item the SCI-AKG choice rule, mandated for 25 years, becomes rescinded.
\end{enumerate}
Using several of the same judgments we present in Section \ref{sec:justifiedenvy} of the Online Appendix, the justices also highlighted
the inconsistencies between several high court judgments in relation to desiderata  we formulated as the axiom of \textit{no justified envy\/}.
The justices also declared that while the ``first view'' that enforces \textit{no justified envy\/}
by  the high court judgments  of Rajasthan, Bombay, Gujarat, and Uttarakhand is ``correct and rational,''
 the ``second view'' that  allows for  \textit{justified envy\/} by the high court judgments of Allahabad and
Madhya Pradesh is not.\footnote{It is important to emphasize that, prior to this ruling,
the second view\textemdash now deemed incorrect and irrational\textemdash  was in line with the SCI-AKG choice rule,
whereas the first view\textemdash now deemed correct and rational\textemdash deviated from the previously mandated choice rule.}
%We have already discussed the inconsistencies  between the four of these High Court cases,
%those from High Courts of Rajasthan, Bombay, Allahabad and Madhya Pradesh in
%Section \ref{sec:justifiedenvy}. The judgment of

While the axiom of \textit{no justified envy\/} becomes federally enforced with \textit{Saurav Yadav v State of Uttar Pradesh (2020)\/},
unlike in \textit{Anil Kumar Gupta (1995)\/} no explicit procedure is mandated with this new Supreme Court ruling.
Two points, however, are important to emphasize in this regard.
The first one is that  prior to \textit{Saurav Yadav (2020)\/},  the 2SMG choice rule became mandated
in the state of Gujarat with the August 2020 high court judgment \textit{Tamannaben Ashokbhai Desai v. Shital Amrutlal
Nishar (2020\/)}.\footnote{The mandated choice rule in Gujarat is described for a
single group of beneficiaries (women) for horizontal reservations
under this High Court ruling. See Section \ref{sec:Gujarat} in the Online Appendix for the description of the procedure in
 \textit{Tamannaben Ashokbhai Desai (2020\/).}}
While the  justices of the Supreme Court have not enforced any specific rule in their December 2020 judgment, they endorsed
the  \textit{2SMG} choice rule given in \textit{Tamannaben Ashokbhai Desai v. Shital Amrutlal Nishar (2020\/)}:
\begin{indquote}
36. Finally, we must say that the steps indicated by the High Court of Gujarat in para 56 of its judgment
in Tamannaben Ashokbhai Desai contemplate the correct and appropriate procedure for considering and
giving effect to both vertical and horizontal reservations. The illustration given by us deals with only one possible dimension.
There could be multiple such possibilities. Even going by the present illustration, the first female candidate allocated
in the vertical column for Scheduled Tribes may have secured higher position than the candidate at Serial No.64.
In that event said candidate must be shifted from the category of Scheduled Tribes to Open / General category causing
a resultant vacancy in the vertical column of Scheduled Tribes. Such vacancy must then enure to the benefit of the
candidate in the Waiting List for Scheduled Tribes - Female.

The steps indicated by Gujarat High Court will take care of every such possibility. It is true that the exercise of laying
down a procedure must necessarily be left to the concerned authorities but we may observe that one set out in said
judgment will certainly satisfy all claims and will not lead to any incongruity as highlighted by us in the preceding paragraphs.
\end{indquote}
Since both the Supreme Court's and the Gujarati High Court's judgments abstract away from any issues in relation to
overlapping horizontal reservations, these rulings are parallel to our recommendations in Section \ref{sec:non-overlapping}.
There is, however, a potentially misleading aspect in the last sentence of the above quote in  \textit{Saurav Yadav (2020)\/},
which brings us to our second point.

Apart from enforcing the axiom of \textit{no justified envy\/} and rescinding the SCI-AKG choice rule, \textit{Saurav Yadav (2020)\/} also
brought clarity to a subtle  aspect of implementation of  vertical reservations in the presence of horizontal reservations.  When the
concept of vertical reservations was
originally introduced in \textit{Indra Sawhney (1992)\/}, positions awarded to individuals selected in the open competition on the
basis of their merit were prohibited from counting against vertically reserved positions. Since then, this aspect of vertical reservations has been used
as its key defining characteristic in India.
However, no judgment of the Supreme Court prior to \textit{Saurav Yadav (2020)\/} explicitly
formulated what it means to get selected in the open competition on the basis of merit in the presence of horizontal reservations.
To a large extent, much of the disarray in India in relation to concurrent implementation of VR and HR policies boils down to this ambiguity.
This important gap is now clarified under \textit{Saurav Yadav (2020)\/}, where an individual who qualifies for
an open-category HR-protected position on the basis of her merit is  explicitly considered as an individual who gets selected  in the open competition on the
basis of merit. This clarification is of key importance, because with the resolution of this ambiguity the 2SMG choice rule
remains the only choice rule by Theorem \ref{thm:2smg} that satisfies all mandates of the Supreme Court for applications in the field with non-overlapping horizontal reservations.
Therefore, while the justices have not explicitly mandated the 2SMG choice rule with \textit{Saurav Yadav (2020)\/}  and they merely endorsed it emphasizing that
``the exercise of laying down a procedure must necessarily be left to the concerned authorities,''
they have indirectly enforced it when individuals have at most one trait.

Finally, while the judgments of the Supreme Court offer some flexibility for the more general case of overlapping
horizontal reservations, we have advocated in Section \ref{sec:overlapping} for a specific choice rule, the \textit{two-step meritorious horizontal\/} choice
rule, for this more general case, and characterized it in Theorem \ref{thm:2s-envchar} with axioms which can be considered natural extensions of the
simpler versions mandated by the Supreme Court.

%While the primary objective of these judgments are eliminating justified envy, they  also restored the incentive compatibility of the system
%and eliminated a major  discord with the One Hundred and Third Amendment of the Constitution of India.

\bibliographystyle{aer}
\bibliography{matching}

\appendix

\begin{center}
\textbf{\Large Appendix}
\end{center}

\section{Proofs} \label{Appendix:preliminariesandproofs}
In this Appendix, we present the proofs of our results. Some of our results for the more general version of
the model in Section \ref{sec:overlapping}, most notably  Proposition \ref{prop:compenv} and Theorem \ref{thm:envchar},
are conceptually related to abstract results in \textit{matroid theory\/}. Although these results have more
direct proofs that rely on the literature on \textit{maximum matchings in bipartite graphs\/}, we present
proofs that highlight the conceptual connection between our results and the literature on matroid theory.

Before we present the proofs of our results in Section \ref{appendix:proofs}, we present preliminaries 
in matroid theory in Sections \ref{sec:matroid} and \ref{app:greedy}.

\subsection{Preliminary Definitions and Results in Matroid Theory}\label{sec:matroid}
In this section we provide some basic definitions and results in matroid theory.
We follow \cite{oxley}.

A \emph{matroid} is a pair $(E,\calm)$ where $E$ is a finite set and $\calm$ is a
collection of subsets of $E$ that satisfies the following three properties:
\begin{description}
  \item[M1] $\emptyset \in \calm$.
  \item[M2] If $M\in \calm$ and $M'\subseteq M$, then $M'\in \calm$.
  \item[M3] If $M_1, M_2 \in \calm$ and $|M_1|<|M_2|$, then there is $m \in M_2 \setminus M_1$ such that
  $M_1\cup \{m\} \in \calm$.
\end{description}

Set $E$ is called the \emph{ground set} of the matroid. Each set in $\calm$ is called an \emph{independent set}.
An independent set $M$ is \emph{maximal} if there is no proper superset of $M$ that is independent.
A maximal independent set of a matroid is called a \emph{base}. All bases of a matroid
have the same cardinality by M3. The set of bases $\mathcal{B}$ satisfies the following two properties:
\begin{description}
  \item[B1] $\mathcal{B}$ is non-empty.
  \item[B2] If $B_1$ and $B_2$ are in $\mathcal{B}$ and $e_1 \in B_1 \setminus B_2$, then there
  exists an element $e_2$ of $B_2 \setminus B_1$ such that $(B_1\setminus \{e_1\}) \cup \{e_2\} \in \mathcal{B}$.
\end{description}
The stronger version of B2 where the implication is $(B_1\setminus \{e_1\}) \cup \{e_2\} \in \mathcal{B}$ and
$(B_2\setminus \{e_2\}) \cup \{e_1\} \in \mathcal{B}$ also holds \citep{brualdi1969}. An analogous statement
holds when instead of individual elements in $B_1\setminus B_2$ and $B_2\setminus B_1$, we consider sets of elements
\citep{brylawski1973,greene1973,woodall1974}:
\begin{description}
  \item[B2'] If $B_1$ and $B_2$ are in $\mathcal{B}$ and $E_1 \subseteq B_1 \setminus B_2$, then there
  exists $E_2\subseteq B_2 \setminus B_1$ such that $(B_1\setminus E_1) \cup E_2 \in \mathcal{B}$
  and $(B_2\setminus E_2) \cup E_1 \in \mathcal{B}$.
\end{description}

The \emph{restriction} of matroid $(E,\calm)$ to $E' \subseteq E$ is a matroid $(E',\calm')$
where $\calm'=\{X\subseteq E' : X\in \calm\}$. The \emph{rank} of $X\subseteq E$
is defined as the cardinality of a maximal independent set in the restriction of $(E,\calm)$ to $X$.
Since all maximal independent sets have the same cardinality, the rank of a set is well-defined.
The rank of $X\subseteq E$ is is denoted by $r(X)$. The rank function satisfies the following
properties:
\begin{description}
  \item[R1] If $X\subseteq E$, then $0\leq r(X) \leq |X|$.
  \item[R2] If If $X\subseteq Y\subseteq E$, then $r(X) \leq r(Y)$.
  \item[R3] If $X,Y \subseteq E$, then
            \[r(X\cup Y)+r(X\cap Y)\leq r(X)+r(Y).\]
  \end{description}

\subsection{Greedy Choice Rule and Its Properties} \label{app:greedy}

For a given weight function $w:E\rightarrow \mathbb{R}_+$ that takes distinct values, the
\emph{greedy algorithm} chooses the element with the highest weight subject to the constraint
that the chosen set of elements is independent. %For simplicity, suppose that weights of all
%elements are distinct.

	\begin{quote}
        \noindent{}{\bf Greedy Algorithm} \smallskip

		\noindent{}{\bf Step 1}:
			Set $X_0=\emptyset$ and $i=0$.
	
		\noindent{}{\bf Step 2}:
		If there exists $e \in E\setminus X_i$ such that $X_i \cup \{e\} \in \calm$, then
        choose such an element $e_{i+1}$ of maximum weight, let $X_{i+1}=X_i \cup \{e_{i+1}\}$,
        and go to Step 3; otherwise let $B=X_i$ and go to Step 4.

        \noindent{}{\bf Step 3}:
           Add 1 to $i$ and go to Step 2.

        \noindent{}{\bf Step 4}:
           Stop.
	\end{quote}

The textbook definition of the Greedy algorithm takes $w$ to be any weight function that can take
same values for different elements of $E$. In this case, the Greedy algorithm can
select different sets depending on how elements are chosen when they have the same weight.
To avoid this issue, we assume that distinct elements of $E$ have different weights.

The greedy algorithm is defined on matroid $(E,\calm)$. However, it can be applied
to any restriction of this matroid. Therefore, the greedy algorithm can be viewed as
a single-category choice rule on $2^E$ \citep{fleiner2001}. For the rest of the paper,
we view it as a single-category choice rule and refer to it as the \emph{greedy choice rule}.

The greedy algorithm chooses an independent set that has the maximum weight, where
the weight of a set is the sum of weights of individual elements. Before we introduce
a stronger property of the greedy algorithm, we need the following definition.

Let elements of the sets $X,Y \subseteq E$ be enumerated such that,
\begin{align*}
\mbox{for every} \; i,j\in \{1,\ldots,|X|\}, \qquad  i \leq j \; &\implies \; w(x_i) \geq w(x_j), \; \mbox{ and}\\
\mbox{for every} \; i,j\in \{1,\ldots,|Y|\}, \qquad  i \leq j \; &\implies \; w(y_i) \geq w(y_j).
\end{align*}
Then,  the set $X=\{x_1,\ldots,x_{|X|}\} \subseteq E$ \emph{Gale dominates} the set $Y=\{y_1,\ldots,y_{|Y|}\} \subseteq E$ if $|X|\geq |Y|$ and,
for every $i\in \{1,\ldots,|Y|\}$,
\[
w(x_i) \geq w(y_i).
\]
We use the notation $X \succeq^G Y$ to denote set $X$ Gale dominates set $Y$.\smallskip

The following property of the greedy choice rule is the driving force for a similar property
of the meritorious horizontal choice rule that is presented in
Proposition \ref{prop:compenv}.

\begin{lemma}\citep{gale1968} \label{lem:Gale}
For every $E' \subseteq E$, the outcome of the
greedy choice rule for $E'$ Gale dominates any independent subset of $E'$.
\end{lemma}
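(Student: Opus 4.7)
Let $X$ denote the set produced by the greedy choice rule applied to $E'$, and let $Y$ be an arbitrary independent subset of $E'$. The plan is to argue cardinality dominance first and weightwise dominance second, in each case leveraging the exchange axiom \textbf{M3} applied to the restriction of $(E,\calm)$ to $E'$.

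For the cardinality statement, I would observe that $X$ is by construction a maximal independent set inside the restriction to $E'$: the greedy algorithm only terminates when no element of $E' \setminus X$ can be appended while preserving independence. If $|Y| > |X|$ held, then \textbf{M3} would yield some $y \in Y \setminus X$ with $X \cup \{y\}$ independent, contradicting maximality of $X$. Hence $|X| \geq |Y|$.

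Next, enumerate $X = \{x_1,\ldots,x_{|X|}\}$ and $Y = \{y_1,\ldots,y_{|Y|}\}$ in order of strictly decreasing weight (recall weights are distinct). Suppose, toward a contradiction, that Gale domination fails, and let $k$ be the smallest index in $\{1,\ldots,|Y|\}$ with $w(y_k) > w(x_k)$. Set $X_{k-1} = \{x_1,\ldots,x_{k-1}\}$ and $Y_k = \{y_1,\ldots,y_k\}$; both are independent subsets of $E'$, with $|Y_k| = k > k-1 = |X_{k-1}|$. Applying \textbf{M3} in the restriction to $E'$ produces an element $y^\ast \in Y_k \setminus X_{k-1}$ such that $X_{k-1} \cup \{y^\ast\} \in \calm$. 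Since $y^\ast \in Y_k$, its weight satisfies $w(y^\ast) \geq w(y_k) > w(x_k)$.

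The contradiction follows by examining iteration $k$ of the greedy algorithm. At that iteration the partial set was exactly $X_{k-1}$ (this is where my plan relies on the greedy algorithm's inductive description), and the algorithm selected the highest-weight element $e \in E' \setminus X_{k-1}$ with $X_{k-1} \cup \{e\}$ independent, namely $x_k$. But $y^\ast$ is a feasible candidate at that iteration with strictly larger weight, so $x_k$ would not have been chosen. The principal conceptual step, and the place where one must be careful, is the application of \textbf{M3} to obtain $y^\ast$ as an element of $Y_k$ (not merely of $Y$), which is what ensures the weight bound $w(y^\ast) \geq w(y_k)$; once this is in place, the rest is a direct appeal to the defining property of the greedy rule.
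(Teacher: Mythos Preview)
Your proof is correct and is the standard argument for Gale's result: the cardinality claim follows because the greedy output is a base of the restriction, and the weightwise claim follows by applying the exchange axiom \textbf{M3} to the truncated sets $X_{k-1}$ and $Y_k$ at the first failure index, producing a feasible candidate of strictly larger weight than $x_k$ at iteration $k$. One small point you leave implicit is that the greedy algorithm necessarily adds elements in strictly decreasing weight order (so that $X_{k-1}$ really is the state at iteration $k$); this follows because $X_{k-2}\cup\{x_k\}\subseteq X_{k-1}\cup\{x_k\}$ is independent by \textbf{M2}, making $x_k$ a feasible candidate at iteration $k-1$ and hence $w(x_{k-1})>w(x_k)$.

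As for comparison with the paper: the paper does not give its own proof of this lemma but simply cites it from \cite{gale1968} as a known result. Your argument is exactly the classical one and would serve as a self-contained proof if one were desired.
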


The following property of choice rules plays an important role in market design.
\begin{definition}\citep{kelso82}
A choice rule $C:2^{E} \rightarrow 2^{E}$ satisfies the \textbf{substitutes} condition, if, for
every $E' \subseteq E$,
\[e\in C(E') \text{ and } e' \in E'\setminus \{e\} \; \implies \;  e\in C(E' \setminus \{e'\}).\]
\end{definition}

We use the following result in some of our proofs.

\begin{lemma}\citep{fleiner2001}\label{lem:subst}
The greedy choice rule satisfies the substitutes condition.
\end{lemma}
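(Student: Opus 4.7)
The plan is to characterize membership in the greedy choice via the rank function $r$ of the matroid and then invoke submodularity (property R3). For any $E' \subseteq E$ and $e \in E'$, let $F(e, E') = \{x \in E' : w(x) > w(e)\}$ denote the set of elements of $E'$ with weight strictly greater than that of $e$. Because the greedy algorithm processes elements in decreasing order of weight, by the time $e$ is considered the algorithm has built a maximal independent subset $X$ of $F(e, E')$, and $e$ is added to $X$ precisely when $X \cup \{e\} \in \calm$. Combining this with the augmentation property M3 yields the characterization
\[
e \in C(E') \iff r\bigl(F(e, E') \cup \{e\}\bigr) = r\bigl(F(e, E')\bigr) + 1.
\]

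Fix $e \in C(E')$ and $e' \in E' \setminus \{e\}$; the goal is to show the analogous rank inequality after replacing $E'$ by $E' \setminus \{e'\}$. If $w(e') \leq w(e)$, then $F(e, E' \setminus \{e'\}) = F(e, E')$ verbatim, and there is nothing to prove. The substantive case is $w(e') > w(e)$, in which $F(e, E' \setminus \{e'\}) = A$ where $A := F(e, E') \setminus \{e'\}$. Under this case, the hypothesis $e \in C(E')$ reads $r(A \cup \{e, e'\}) = r(A \cup \{e'\}) + 1$, and the desired conclusion is $r(A \cup \{e\}) = r(A) + 1$.

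To derive the desired conclusion, I would apply submodularity R3 with $X = A \cup \{e\}$ and $Y = A \cup \{e'\}$. Since $X \cup Y = A \cup \{e, e'\}$ and $X \cap Y = A$, R3 gives
\[
r(A \cup \{e, e'\}) + r(A) \leq r(A \cup \{e\}) + r(A \cup \{e'\}).
\]
Substituting the hypothesis $r(A \cup \{e, e'\}) = r(A \cup \{e'\}) + 1$ and cancelling $r(A \cup \{e'\})$ from both sides yields $r(A \cup \{e\}) \geq r(A) + 1$. The reverse inequality $r(A \cup \{e\}) \leq r(A) + 1$ is immediate from R1 and R2 (adjoining a single element raises the rank by at most one). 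Hence $r(A \cup \{e\}) = r(A) + 1$, and the rank characterization gives $e \in C(E' \setminus \{e'\})$, establishing substitutes.

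The only step that requires care is justifying the rank characterization used at the outset; the remainder is a short algebraic manipulation. I would prove the characterization by a straightforward induction on the position of $e$ in the weight order: after the algorithm finishes processing all elements of $F(e, E')$, the assembled set $X$ is a maximal independent subset of $F(e, E')$, so $|X| = r(F(e, E'))$; then $e$ is added to $X$ iff $X \cup \{e\} \in \calm$, which by M3 (and M2, applied to any maximum independent subset of $F(e, E') \cup \{e\}$) is equivalent to $r(F(e, E') \cup \{e\}) > r(F(e, E'))$.
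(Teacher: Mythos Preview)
Your argument is correct. The rank characterization $e \in C(E') \iff r\bigl(F(e,E')\cup\{e\}\bigr) = r\bigl(F(e,E')\bigr)+1$ is valid for exactly the reasons you sketch, and the submodularity step via R3 applied to $X=A\cup\{e\}$ and $Y=A\cup\{e'\}$ cleanly delivers $r(A\cup\{e\})=r(A)+1$ in the nontrivial case $w(e')>w(e)$.

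There is nothing to compare against in the paper itself: the lemma is stated with a citation to \cite{fleiner2001} and no proof is supplied. Your write-up therefore goes beyond what the paper does by providing a self-contained argument using only the matroid axioms (M2, M3, R1--R3) already introduced in the preliminaries. One small stylistic point: since the paper assumes distinct weights, the case split ``$w(e')\le w(e)$'' is really just ``$w(e')<w(e)$'' (equality is impossible as $e'\neq e$); you may wish to say so explicitly.
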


Fix a matroid $(E,\calm)$ with rank function $r$. We next formulate some properties of choice rules.
The first two properties extend the notion of independence for sets and the maximality for independent sets to choice rules.

\begin{definition}
A choice rule $C:2^{E} \rightarrow 2^{E}$ is \textbf{independent} if, for every $E'\subseteq E$,
$C(E')$ is an independent set.
\end{definition}

\begin{definition}
A choice rule $C:2^{E} \rightarrow 2^{E}$ is \textbf{rank maximal} if, for every $E'\subseteq E$,
\begin{center}
$r(C(E'))=r(E')$.
\end{center}
\end{definition}

The next property is a reformulation of our axiom \textit{no justified envy\/} in the abstract context of matroids.

\begin{definition}\label{def:matjustified}
A choice rule $C:2^{E} \rightarrow 2^{E}$ satisfies \textbf{no justified envy} if, for every $E'\subseteq E$,
$e\in C(E')$, and $e'\in E'\setminus C(E')$
\[ w(e') \, > \, w(e)    \; \implies \;  r\big((C(E')\setminus \{e\}) \cup \{e'\}\big) < r(C(E')).
\]
\end{definition}

The following result follows from the well-known properties of the greedy algorithm. We will rely on 
it extensively in the proof of Theorem  \ref {thm:envchar} to highlight the conceptual similarities 
between our characterization of the meritorious horizontal choice rule and some of the key properties 
of the greedy algorithm.

\begin{lemma}\label{lem:greedy}
A choice rule $C:2^{E} \rightarrow 2^{E}$ is independent, rank maximal, and satisfies no justified envy if, and only if, it is the greedy choice rule.
\end{lemma}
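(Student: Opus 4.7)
The plan is to prove both directions by leveraging standard matroid machinery together with the interplay between the weight ordering and matroid augmentation.

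For the ``if'' direction, I would verify each property of the greedy choice rule in turn. Independence is immediate from the construction, since at each step the algorithm only adds an element that preserves membership in $\calm$. Rank maximality follows because the algorithm terminates exactly when no element of $E'\setminus G(E')$ can be added while maintaining independence; hence $G(E')$ is a maximal independent subset of $E'$, and $r(G(E')) = r(E')$. For no justified envy, suppose $e\in G(E')$, $e'\in E'\setminus G(E')$, and $w(e')>w(e)$. Because greedy processes elements in decreasing weight, it considered $e'$ before $e$. Let $G^{<e'}$ be the greedy set just before $e'$ was considered; since $e'$ was rejected, $G^{<e'}\cup\{e'\}$ is dependent. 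But $G^{<e'}\subseteq G(E')\setminus\{e\}$ because $w(e)<w(e')$, so the dependent set $G^{<e'}\cup\{e'\}$ is contained in $(G(E')\setminus\{e\})\cup\{e'\}$. By M2 this superset is also dependent, so its rank is strictly less than $|G(E')| = r(G(E'))$, as required.

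For the ``only if'' direction, fix $E'\subseteq E$ and let $C$ be any choice rule satisfying the three properties. Since $C$ is independent and rank maximal, $|C(E')|=r(E')=|G(E')|$, so both sets are bases of the restriction of $(E,\calm)$ to $E'$. Suppose toward a contradiction that $C(E')\neq G(E')$, and let $e^{*}$ be the element of highest weight in the symmetric difference $C(E')\triangle G(E')$. By maximality of $w(e^{*})$, the sets $C(E')$ and $G(E')$ agree on $H := \{e\in E' : w(e)>w(e^{*})\}$; call this common intersection $H^{*}$.

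I would then split into two cases. In Case 2, where $e^{*}\in C(E')\setminus G(E')$, the greedy algorithm rejected $e^{*}$ at the step when its running set equaled $H^{*}$, so $H^{*}\cup\{e^{*}\}$ is dependent. But $H^{*}\cup\{e^{*}\}\subseteq C(E')$, contradicting independence of $C$. In Case 1, where $e^{*}\in G(E')\setminus C(E')$, I would use the matroid exchange/circuit argument: since $C(E')$ is a base and $e^{*}\notin C(E')$, the set $C(E')\cup\{e^{*}\}$ contains a unique circuit through $e^{*}$. Because $H^{*}\cup\{e^{*}\}\subseteq G(E')$ is independent, this circuit must contain some $e\in C(E')\setminus H^{*}$, and then $(C(E')\setminus\{e\})\cup\{e^{*}\}$ is independent of size $|C(E')|$, so has rank $r(C(E'))$. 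Since $e\notin H^{*}$ and $e\neq e^{*}$, we have $w(e)<w(e^{*})$, so $e^{*}$ witnesses justified envy against $e$, contradicting the no justified envy property of $C$.

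The main obstacle I anticipate is the exchange step in Case 1: I must be careful to locate an element of $C(E')\setminus G(E')$ of weight strictly less than $w(e^{*})$ whose removal restores independence after adding $e^{*}$. The circuit argument (or equivalently B2 applied to the bases $C(E')$ and $G(E')$) combined with the fact that $H^{*}\cup\{e^{*}\}$ is already independent guarantees the exchange partner lies outside $H^{*}$, which via the maximality of $w(e^{*})$ in the symmetric difference forces it to have strictly lower weight than $e^{*}$. This is exactly what no justified envy prohibits, closing the argument.
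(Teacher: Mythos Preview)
Your proposal is correct, and both directions go through cleanly. However, your argument takes a genuinely different route from the paper's.

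For the ``if'' direction, the paper establishes no justified envy indirectly: it argues that if envy were justified then $(C(E')\setminus\{e\})\cup\{e'\}$ would be an independent set of the same size as $C(E')$, and then invokes Gale's result (Lemma~\ref{lem:Gale}) to conclude $w(e)>w(e')$, a contradiction. You instead argue directly from the step-by-step behavior of greedy, observing that the partial greedy set $G^{<e'}$ together with $e'$ is already dependent and sits inside $(G(E')\setminus\{e\})\cup\{e'\}$. This is more elementary and avoids Gale domination entirely.

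For the ``only if'' direction, the paper picks an arbitrary $e_1\in D(E')\setminus C(E')$, applies the symmetric base exchange property B2$'$ to obtain a partner $e_2\in C(E')\setminus D(E')$, and then uses Gale domination of the greedy output to deduce $w(e_2)>w(e_1)$, producing the envy contradiction. You instead isolate the highest-weight element $e^*$ of the symmetric difference, which lets you compare weights without any appeal to Gale domination; the circuit argument (or B2) then supplies the exchange partner below $e^*$ in weight. Your Case~2 handles the situation $e^*\in C(E')\setminus G(E')$ directly via hereditariness, which the paper does not need to treat separately because B2$'$ is symmetric.

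Both approaches are sound. Yours is self-contained and does not rely on Lemma~\ref{lem:Gale}; the paper's approach is less elementary but deliberately routes through Gale domination because that relation is the workhorse for the subsequent proof of Proposition~\ref{prop:compenv}.
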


\begin{proof}
Let $C$ be the greedy choice rule. Then by construction it is independent. Rank maximality follows easily because if
$C(E')$ is not rank maximal for some $E'\subseteq E$, then there exists $e\in E$ such that $C(E')\cup \{e\}$
is independent. Thus, the greedy choice rule cannot produce $C(E')$.

Suppose, for contradiction, that $C$ fails to satisfy no justified envy. Then there exists $E'\subseteq E$,
$e\in C(E')$, and $e'\in E'\setminus C(E')$ with $w(e')>w(e)$ such that
\[r\big((C(E')\setminus \{e\}) \cup \{e'\}\big) \geq r(C(E')).\]
Since $C$ is rank maximal, $r(C(E'))=r(E')$. By R2, $r((C(E')\setminus \{e\}) \cup \{e'\} ) \leq r(E')$.
Therefore, $r(C(E')\setminus \{e\} \cup \{e'\})=r(E')$. Furthermore, since $|C(E')|=r(C(E'))$ because $C(E')$
is independent, we get $r(C(E')\setminus \{e\} \cup \{e'\})=|C(E')\setminus \{e\} \cup \{e'\}|$, so
$(C(E')\setminus \{e\}) \cup \{e'\}$ is also an independent set.
By Lemma \ref{lem:Gale}, $C(E')$ Gale dominates $(C(E')\setminus \{e\}) \cup \{e'\}$, so we get
$w(e)>w(e')$, which is a contradiction.

We next show that any choice rule satisfying the properties has to be the greedy choice rule. Let $D$
be a choice rule that satisfies the three axioms.
Suppose, for contradiction, that $D(E')\neq C(E')$ for some $E'\subseteq E$. Since both
$D$ and $C$ are independent and rank maximal, $D(E')$ and
$C(E')$ are bases in the matroid restriction of $(E,\calm)$ to $E'$ and so $|D(E')|=|C(E')|$.
Therefore, there exists $e_1$ in $D(E')\setminus C(E')$. Then, by B2',
there exists $e_2 \in C(E')\setminus D(E')$ such that
$(D(E')\setminus \{e_1\})\cup \{e_2\}$ and $(C(E')\setminus \{e_2\}) \cup \{e_1\}$ are also bases.
By Lemma \ref{lem:Gale}, $C(E')$ Gale dominates $(C(E')\setminus \{e_2\}) \cup \{e_1\}$, which implies
that $w(e_2)>w(e_1)$. Since $D$ satisfies no justified envy, $e_1\in D(E')$, $e_2\in E' \setminus D(E')$,
and $w(e_2)>w(e_1)$, we get
\[r\big((D(E')\setminus \{e_1\}) \cup \{e_2\}\big) < r(D(E')).\]
This is a contradiction because $(D(E')\setminus \{e_1\}) \cup \{e_2\}$ and $D(E')$ are
both bases and, therefore, they have the same rank since they have the same cardinality.
\end{proof}

\subsection{Proofs of Main Results}\label{appendix:proofs}
Using the HR graph for a category $v\in \calv$, we can study the \emph{transversal matroid} with the ground set
$\cali^v$ \citep{edmondsfulkerson65}. In this matroid, a set of individuals is
independent if they can be matched with distinct positions and therefore the
rank of a set of individuals is equal to the maximum number of distinct positions
they can be matched with, which is the $n^v$ function that we have defined for a category $v\in \calv$.
Furthermore, the weight of an individual can be defined as their merit score. With this setup, Step 1 of the meritorious
horizontal choice rule is the same as the greedy choice rule for the transversal matroid.
We use this important observation in proofs of Proposition  \ref{prop:compenv} and Theorem \ref{thm:envchar} presented below.

\subsection*{Proof of Proposition \ref{prop:SCI-vs-2smg}}
Let $I\subseteq \cali$ be a set of individuals and $I^m\subseteq I$ be the set of reserve-eligible
individuals considered at Step 1 of $\widehat{C}^{SCI}$ when $I$ is the set of applicants.

Let $i\in \widehat{C}^{2s}_{mg}(I)\cap I^g$. Then $i\in C^o_{mg}(I) \cap I^g$ because
$\widehat{C}^{2s}_{mg}(I) \cap I^g=C^o_{mg}(I) \cap I^g$. Since
$C^o_{mg}$ satisfies the substitutes condition \citep{echyen12}, $i\in C^o_{mg}(I^m \cup I^g)$ because $i\in I^g$ and
$i\in C^o_{mg}(I)$. Therefore, $i \in C^o_{mg}(I^m \cup I^g) \cap I^g$, which implies
$i \in \widehat{C}^{SCI}(I) \cap I^g$ because $\widehat{C}^{SCI}(I) \cap I^g =C^o_{mg}(I^m \cup I^g) \cap I^g$.
Therefore, we conclude that $\widehat{C}^{2s}_{mg}(I)\cap I^g \subseteq \widehat{C}^{SCI}(I) \cap I^g$.

The assumption that $|I^c|\geq q^o+q^c$ for each reserve-eligible category $c\in \calr$ implies that
all category-$c$ positions are filled under both $C^{2s}_{mg}$ and $C^{SCI}$. In addition, the
first part of the proposition implies that there are weakly more individuals with reserved categories assigned
to open-category positions under $C^{2s}_{mg}$ than under $C^{SCI}$. Therefore,
\[  \sum_{c\in\calr} \big|\widehat{C}^{2s}_{mg}(I) \cap I^c\big| \geq \sum_{c\in\calr} \big|\widehat{C}^{SCI}(I) \cap I^c\big|.
\]

\qed

\medskip

\subsection*{Proof of Theorem \ref{thm:2smg}} Since the 2SMH choice rule reduces to  the 2SMG choice rule
in the absence of overlapping horizontal reservations, the result is a direct corollary of  Theorem \ref{thm:2s-envchar}. \qed

\medskip

\subsection*{Proof of Proposition \ref{prop:compenv}}
Let $I\subseteq \cali^v$ be a set of individuals. To show part (1), note that $|\cenv ^v (I)|=\min\{q^v,|I|\}$
since $\cenv ^v$ is non-wasteful.
Furthermore, for single-category choice rule $C^v$, $C^v(I) \subseteq I$ and $|C^v(I)|\leq q^v$ imply
\[|C^v(I)|\leq \min\{q^v,|I|\} = |\cenv^v(I)|. \]

We show the second part using mathematical induction on the number of individuals chosen at the second step
of $\cenv^v$. For the inductive step, we decrease $q^v$ by one so one less individual
is chosen at the second step of $\cenv^v$ and we also consider a subset of $I$.

For the base case, when no individuals are chosen at the second step of $\cenv^v$, Lemma \ref{lem:Gale} states
$\cenv^v(I) \succeq^G C^v(I)$ since the first step of $\cenv^v$ is the greedy choice rule for the transversal matroid and
$C^v(I)$ is another base since $C^v$ maximally accommodates HR protections and $|\cenv^v(I)|=|C^v(I)|$ in this case.

Now assume that the claim holds when the number of individuals chosen at the second step of $\cenv^v$ is
less than $k>0$. Let $C^v_R$ be the choice rule corresponding to the second step of $\cenv^v$. Consider a set of
individuals $I$ such that $|C^v_R(I)|=k$.
Let $J=\cenv^v(I)$, $K=C^v_R(I)$, $J'=\cenv^v(C^v(I))$, and $K'=C^v(I) \setminus J'$.
If $|K'|=0$, then the proof is complete as in the base case using Lemma \ref{lem:Gale}. For the rest of
the proof suppose that $|K'|>0$.

\begin{lemma} \label{Lemma4}
There exists $j\in K$ and $j'\in K'$ such that $\sigma(j)\geq \sigma(j')$.
\end{lemma}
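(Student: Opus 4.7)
My plan is to argue by contradiction: suppose $\sigma(j) < \sigma(j')$ for every pair $(j,j') \in K \times K'$. I will derive a contradiction in two moves — first a merit-based placement argument that forces $K' \subseteq J \setminus K$, and then an appeal to the substitutes property of the greedy choice rule that forces $K' \subseteq J'$, contradicting the definition $K' = C^v(I) \setminus J'$ together with $|K'|>0$.

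First I will show $K' \subseteq J \setminus K$. Let $j^{*}$ be the element of $K$ of highest merit. Under the contradiction hypothesis, every $j' \in K'$ satisfies $\sigma(j') > \sigma(j^{*})$. The set $K = C^v_R(I)$ is produced by the second step of $\cenv^v(I)$, which processes unassigned individuals in strictly decreasing order of merit; since $j^{*}$ is the very first member of $K$ selected in Step 2, every individual in $I$ whose merit exceeds $\sigma(j^{*})$ must already have been taken in Step 1 of $\cenv^v(I)$ and therefore lies in $J \setminus K$. Since $K' \subseteq C^v(I) \subseteq I$ and every element of $K'$ has merit exceeding $\sigma(j^{*})$, this yields $K' \subseteq J \setminus K$.

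For the second move, I will exploit the fact that Step 1 of $\cenv^v$ coincides with the greedy choice rule on the transversal matroid associated with the category-$v$ HR graph, with merit serving as the weight. Under this identification, $J \setminus K$ is the greedy selection on the ground set $I$, while $J'$ plays the role of the greedy selection on the restricted ground set $C^v(I) \subseteq I$. The substitutes property of the greedy choice rule (Lemma \ref{lem:subst}) then implies that any element greedy-selected from $I$ that happens to belong to $C^v(I)$ is also greedy-selected from $C^v(I)$, so $(J \setminus K) \cap C^v(I) \subseteq J'$. Combining with $K' \subseteq J \setminus K$ and $K' \subseteq C^v(I)$ gives $K' \subseteq J'$, which contradicts $K' = C^v(I) \setminus J'$ since $|K'|>0$.

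I expect the delicate step to be the second one: one must correctly cast $J'$ as the greedy output on the restricted ground set $C^v(I)$ so that substitutes can be invoked. The merit argument in the first step is essentially routine, relying only on the ``highest-merit-first'' structure of Step 2 of $\cenv^v$ together with $j^{*}$ being the first pick of that step.
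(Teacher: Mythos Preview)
Your proof is correct, and the first move (showing $K' \subseteq J\setminus K = C^v_G(I)$ via the highest-merit element $j^*$ of $K$) is essentially identical to the paper's opening argument. Where you diverge is in the second move. The paper, having established $K' \subseteq J \setminus J'$, invokes the symmetric base-exchange property B2' to extract a set $K'' \subseteq J' \setminus J$ with $(J\setminus K')\cup K''$ and $(J'\setminus K'')\cup K'$ both bases, and then applies Gale's lemma twice to force $K' \succeq^G K''$ and $K'' \succeq^G K'$, hence $K'=K''$, which is impossible since these sets are disjoint and nonempty. Your route is shorter: you observe that iterating the substitutes property (Lemma~\ref{lem:subst}) of the greedy rule gives $C^v_G(I)\cap C^v(I) \subseteq C^v_G(C^v(I)) = J'$ directly, so $K' \subseteq J'$ contradicts $K' = C^v(I)\setminus J'$ with $|K'|>0$. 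This sidesteps both B2' and the double Gale-dominance argument. The paper's approach is more in keeping with its stated aim of foregrounding matroid-theoretic machinery, whereas yours leans on a single elementary monotonicity property already available in the paper; in terms of economy of argument yours is the cleaner of the two. Your caveat about needing to cast $J'$ as the greedy output on $C^v(I)$ is apt---that identification is exactly what the paper's surrounding proof relies on (it treats $J$ and $J'$ as matroid bases throughout), so you are reading the setup correctly.
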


\begin{proof}\renewcommand{\qedsymbol}{$\blacksquare$}
Suppose, for contradiction, that for every $j\in K$ and $j'\in K'$
we have $\sigma(j)<\sigma(j')$. Since $j\in K=C^v_R(I)$, $j'\notin K$, and $\sigma(j')>\sigma(j)$,
we must have $j'\in J$. Therefore, every individual in $K'$ is also in $J$, which means $K'\subseteq J$.
Since $K'\cap J'=\emptyset$ we have $K'\subseteq J\setminus J'$. Therefore, by B2' there exists
$K''\subseteq J'\setminus J$ such that $(J\setminus K')\cup K''$ and $(J'\setminus K'')\cup K'$
are also bases. By Lemma \ref{lem:Gale} we get
\[J \succeq^G (J \setminus K') \cup K''\]
and
\[J' \succeq^G (J' \setminus K'') \cup K'.\]
These are equivalent to $K'\succeq^G K''$ and $K'' \succeq^G K'$, respectively.
Therefore, we must have $K'=K''$, which is a contradiction because $K'\neq \emptyset$
and $K'\cap K''=\emptyset$.
\end{proof}

Now we use Lemma \ref{Lemma4} to finish the proof. Consider $j\in K$ and $j'\in K'$ such that $\sigma(j)\geq \sigma(j')$.
If $\sigma(j)=\sigma(j')$, then $j=j'$ since different individuals have distinct merit scores. In this case,
consider the set of individuals $I\setminus \{j\}$ and reduce the capacity $q^v$ to $\min\{q^v,|\cenv^v(I)|\}-1$.
For $\cenv^v$, the same set of individuals is chosen at Step 1 by the greedy choice rule and the same set
of individuals except $j$ is chosen at the second step. Consider a choice rule $D^v$ such that
$D^v(I\setminus \{j\})=C^v(I)\setminus \{j\}$ and, for other $I'\neq I$, let $D^v(I')$ be such
that $n^v(D^v(I'))=n^v(I')$. Then $D^v$ maximally accommodates HR protections. By the mathematical induction
hypothesis, we get that $\cenv^v(I)\setminus \{j\} \succeq^G D^v(I\setminus \{j\})=C^v(I)\setminus \{j\}$. Therefore,
$\cenv^v(I)\succeq^G C^v(I)$ because $j\in \cenv^v(I)$ and $j\in C^v(I)$.\smallskip

Next consider the case when $\sigma(j)>\sigma(j')$. We need the following result.

\begin{lemma}
Individual $j'$ is not a member of $J$.
\end{lemma}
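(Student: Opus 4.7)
The plan is to rule out $j'\in J=\cenv^v(I)$ by treating separately the two disjoint components of $J$: the Step-1 output $J\setminus K$, and the Step-2 output $K=C^v_R(I)$. Establishing both exclusions forces $j'\notin J$.

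For the first exclusion $j'\notin J\setminus K$, I would invoke the observation made in the opening paragraphs of Section \ref{appendix:proofs}: Step 1 of $\cenv^v$ is exactly the greedy choice rule on the transversal matroid associated with the category-$v$ HR graph, with merit scores playing the role of weights. Thus $J\setminus K$ is the greedy choice rule's output on ground set $I$, and $J'$ is its output on ground set $C^v(I)$. Since $j'\in K'=C^v(I)\setminus J'$, the greedy choice rule applied to $C^v(I)$ does not select $j'$. I would then apply the substitutes property of the greedy choice rule (Lemma \ref{lem:subst}) in contrapositive form: if an element of a ground set is not chosen and we delete some other element, the original element remains unchosen. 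Iterating this contrapositive across the elements of $I\setminus C^v(I)$ (equivalently, starting from $C^v(I)$ and \emph{adding} elements to reach $I$), I conclude that $j'$ is not chosen by the greedy choice rule on $I$ either, i.e., $j'\notin J\setminus K$.

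For the second exclusion $j'\notin K$, the plan is to leverage the flexibility Lemma \ref{Lemma4} leaves in choosing the pair $(j,j')$. Suppose $K\cap K'\neq\emptyset$; then for any $k\in K\cap K'$ the pair $(k,k)$ satisfies $k\in K$, $k\in K'$, and $\sigma(k)=\sigma(k)$, placing the proof in the already-handled first sub-case. Hence without loss of generality, reaching the second sub-case with $\sigma(j)>\sigma(j')$ requires $K\cap K'=\emptyset$, in which case $j'\in K'$ immediately yields $j'\notin K$. Combining with the first exclusion gives $j'\notin J$.

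The main obstacle I anticipate is making this without-loss-of-generality pair-selection argument airtight within the stated proof layout: strictly speaking, Lemma \ref{Lemma4} as written only asserts \emph{existence} of some pair, so I would reorganize the case split so that the disposal of the $K\cap K'\neq\emptyset$ scenario via the first sub-case precedes (or is absorbed into) the second sub-case. Once that is done, the greedy substitutes argument and the disjointness of $K$ and $K'$ combine cleanly to give the lemma.
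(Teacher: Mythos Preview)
Your argument is correct but proceeds quite differently from the paper's. The paper argues by contradiction: treating $J$ and $J'$ as bases of the transversal matroid (so in the paper's own proof $J$ is effectively the Step-1 greedy output $C^v_G(I)$, notwithstanding the earlier line writing $J=\cenv^v(I)$), it invokes the basis-exchange property B2' to produce $j''\in J'\setminus J$ with both swapped sets $(J\setminus\{j'\})\cup\{j''\}$ and $(J'\setminus\{j''\})\cup\{j'\}$ bases, then applies Gale's Lemma~\ref{lem:Gale} twice to force $\sigma(j')=\sigma(j'')$ and hence $j'=j''$, contradicting $j'\in K'$ and $j''\in J'$. Your first exclusion sidesteps this machinery entirely: from $j'\notin J'=C^v_G(C^v(I))$ and $C^v(I)\subseteq I$, the contrapositive of the substitutes property (Lemma~\ref{lem:subst}) directly gives $j'\notin C^v_G(I)$. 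This is shorter and more elementary, and under the reading $J=C^v_G(I)$ it already completes the lemma. Your second exclusion $j'\notin K$ is needed only under the literal reading $J=\cenv^v(I)$; the WLOG-by-pair-selection argument you sketch is valid once the case split is reorganized as you note, and it is not wasted effort in any case, since the subsequent inductive step (claiming that on $I\setminus\{j,j'\}$ with reduced capacity the Step-2 output equals $K\setminus\{j\}$) tacitly requires $j'\notin K$ as well.
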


\begin{proof}
Suppose, for contradiction, that $j'\in J$. Since $j'\in K'$, we have $j'\notin J'$. Therefore,
$j'\in J\setminus J'$. By B2', there exists $j''\in J'\setminus J$ such that both
$(J \setminus \{j'\}) \cup \{j''\}$ and $(J' \setminus \{j''\}) \cup \{j'\}$ are bases.
By Lemma \ref{lem:Gale}, $J \succeq^G (J \setminus \{j'\}) \cup \{j''\}$ and
$J' \succeq^G (J' \setminus \{j''\}) \cup \{j'\}$, which are equivalent to
$\{j'\} \succeq^G  \{j''\}$ and $\{j''\} \succeq^G  \{j'\}$, respectively.
The last two inequalities can only hold when $j'=j''$, which is a contradiction
because $j'\in K'$, $j''\in J'$, and $J'\cap J''=\emptyset$.
\end{proof}

We apply the inductive hypothesis to the market with the set of individuals $I \setminus \{j,j'\}$ and the capacity
$\min\{q^v,|\cenv^v(I)|\}-1$ as in the previous case (when $j=j'$). In this reduced market, at the first step of
$\cenv^v$, the greedy choice rule selects the same set of individuals as in the original market and
the responsive choice rule selects the same set of individuals except $j$. Construct choice rule
$D^v$ such that $D^v(I\setminus \{j,j'\})=C^v(I)\setminus \{j'\}$. For any other $I'\neq I$ let
$D^v(I')$ be such that $n^v(D^v(I'))=n^v(I')$. Since $D^v(I\setminus \{j,j'\}) \supseteq J'$,
$n^v(D^v(I\setminus \{j,j'\}))\geq n^v(J')=n^v(I)$, where the inequality follows from monotonicity
of $n^v$ and the equality follows since $C^v$ maximally satisfies HR protections. Therefore,
$n^v(D^v(I\setminus \{j,j'\}))=n^v(I\setminus \{j,j'\})$, and hence $D^v$ maximally accommodates HR protections.
By the mathematical induction hypothesis, $\cenv^v(I)\setminus \{j\} \succeq^G C^v(I)\setminus \{j'\}$. Furthermore,
since $\sigma(j)>\sigma(j')$, we conclude that $\cenv^v(I) \succeq^G C^v(I)$.

\qed
%\end{proof}

\medskip

\subsection*{Proof of Theorem \ref{thm:envchar}}
We first show $\cenv^v$ satisfies the stated properties in several lemmas and then show that the
unique category-$v$ choice rule satisfying these properties is $\cenv^v$. Let
$C^v_G$ be the greedy choice rule that corresponds to the first step of $\cenv^v$. Let
$C^v_R$ be the choice rule that corresponds to the second step of $\cenv^v$: For every
$I\subseteq \cali^v$, $C^v_R(I)$ consists of $\min\{q^v-|C^v_G(I)|,|I|-|C^v_G(I)|$ individuals
in $I\setminus C^v_G(I)$ with the highest merit scores. Therefore, we have
\[\cenv^v(I)=C^v_G(I) \cup C^v_R(I).\]
%Category $v$ is fixed throughout the proof, therefore, for notational simplicity, we drop
%$v$ from the choice rules.

\begin{lemma}\label{lem:compliance}
$\cenv^v$ maximally accommodates HR protections.
\end{lemma}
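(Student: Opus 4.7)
The plan is to exploit the fact that Step 1 of $\cenv^v$ is the greedy choice rule on the transversal matroid associated with the category-$v$ HR graph, and therefore produces a maximum-rank independent set; Step 2 then only adds individuals on top of this, and monotonicity of $n^v$ takes care of the rest. The single-category analogue of \textit{maximally accommodates HR protections\/} that I would verify is: for every $I \subseteq \cali^v$ and every $j \in I \setminus \cenv^v(I)$, $n^v\big(\cenv^v(I)\cup\{j\}\big) = n^v(\cenv^v(I))$.

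First I would recall that, under the identification explained in Section \ref{appendix:proofs}, $n^v$ is the rank function of the transversal matroid on $\cali^v$ with the merit score $\sigma$ as the weight. The construction in Step 1 of $\cenv^v$ coincides exactly with the greedy choice rule for this matroid: at each sub-step $1.k$, we pick the highest merit-score individual outside $I_{k-1}$ whose addition strictly increases the rank, and we stop when no such individual exists. By the standard matroid property that the greedy algorithm produces a base of the restriction to $I$, I would conclude that $n^v(C^v_G(I)) = n^v(I)$, where $C^v_G(I)$ is the set produced after Step 1. Since no individual selected in Step 2 can raise $n^v$ further (we are already at the rank of $I$), and since $\cenv^v(I) \supseteq C^v_G(I)$, monotonicity of the rank function R2 yields $n^v(\cenv^v(I)) \geq n^v(C^v_G(I)) = n^v(I)$.

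Now for any $j \in I \setminus \cenv^v(I)$, I would combine two applications of R2: on the one hand $\cenv^v(I) \subseteq \cenv^v(I) \cup \{j\}$ gives $n^v(\cenv^v(I)) \leq n^v(\cenv^v(I) \cup \{j\})$; on the other hand $\cenv^v(I) \cup \{j\} \subseteq I$ gives $n^v(\cenv^v(I) \cup \{j\}) \leq n^v(I)$. Chaining these with the equality $n^v(\cenv^v(I)) = n^v(I)$ established above forces $n^v(\cenv^v(I) \cup \{j\}) = n^v(\cenv^v(I))$, which is exactly the statement of maximal accommodation.

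I do not expect any serious obstacle here, since the lemma is essentially a direct translation of the rank-maximality of the greedy algorithm on a matroid (Lemma \ref{lem:greedy}) into the language of HR utilization. The only subtle point worth stating carefully is that the termination condition of Step 1, ``no remaining individual increases the HR utilization of $I_{k-1}$,'' is precisely the condition that $C^v_G(I)$ is a maximal independent set in the transversal matroid restricted to $I$, and hence a base, so its rank equals $n^v(I)$. Once this identification is made, the rest is two lines of monotonicity.
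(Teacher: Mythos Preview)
Your proposal is correct and follows essentially the same approach as the paper: invoke rank-maximality of the greedy choice rule (Lemma~\ref{lem:greedy}) to get $n^v(C^v_G(I))=n^v(I)$, then use monotonicity of $n^v$ to conclude $n^v(\cenv^v(I))=n^v(I)$. The paper stops there and leaves the final verification of the definition implicit, whereas you spell out the two applications of R2 that pin down $n^v(\cenv^v(I)\cup\{j\})$; this extra detail is fine but not a substantive difference.
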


\begin{proof}\renewcommand{\qedsymbol}{$\blacksquare$}
For every $I\subseteq \cali^v$, by Lemma \ref{lem:greedy}, $n^v(C_G^v(I))=n^v(I)$.
Furthermore, by monotonicity of $n^v$, $n^v(\cenv^v(I))\geq n^v(C_G^v(I))=n^v(I)$,
which implies $n^v(\cenv^v(I))=n^v(I)$. We conclude that $\cenv^v$ maximally accommodates HR protections.
\end{proof}

\begin{lemma}
$\cenv^v$ satisfies no justified envy.
\end{lemma}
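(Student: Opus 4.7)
My plan is to reduce the claim to the no-justified-envy property of the greedy choice rule on the transversal matroid, which is already established in Lemma \ref{lem:greedy}. Recall that Step 1 of $\cenv^v$ outputs the greedy choice $C^v_G(I)$ on the transversal matroid associated with the category-$v$ HR graph, while Step 2 ($C^v_R(I)$) simply fills any unused capacity using merit rank.

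Fix $I \subseteq \cali^v$, $i \in \cenv^v(I)$, and $j \in (I \cap \cali^v) \setminus \cenv^v(I)$ with $\sigma(j) > \sigma(i)$. I would split on whether $i$ came from Step 1 or Step 2. If $i \in C^v_R(I)$, then Step 2 selected $i$ from $I \setminus C^v_G(I)$ while rejecting $j$, so distinctness of merit scores gives $\sigma(i) > \sigma(j)$, which contradicts the hypothesis. This sub-case is therefore vacuous and only $i \in C^v_G(I)$ remains.

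For $i \in C^v_G(I)$, I would argue by contradiction. Assume $n^v(S) = n^v(\cenv^v(I)) = n^v(I)$, where $S = (\cenv^v(I) \setminus \{i\}) \cup \{j\}$ and the second equality is Lemma \ref{lem:compliance}. Then $S$ contains a base $B$ of the transversal matroid restricted to $I$; since $C^v_G(I)$ is also such a base and $i \in C^v_G(I) \setminus B$, basis exchange (property B2) yields $f \in B \setminus C^v_G(I) \subseteq C^v_R(I) \cup \{j\}$ such that $(C^v_G(I) \setminus \{i\}) \cup \{f\}$ is again a base. In either case $\sigma(f) > \sigma(i)$: if $f = j$, directly by hypothesis; if $f \in C^v_R(I)$, then $j \notin C^v_R(I)$ forces $C^v_R(I)$ to be at capacity, so every element of $C^v_R(I)$ outranks $j$, hence also $i$, in merit. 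But $\sigma(f) > \sigma(i)$ together with $(C^v_G(I) \setminus \{i\}) \cup \{f\}$ being a base of the restricted matroid contradicts the no-justified-envy property of the greedy choice rule (Lemma \ref{lem:greedy}, applied to $C^v_G$).

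The main delicate point is the second sub-case, where I must argue that any $f \in C^v_R(I)$ strictly outranks $i$. The observation needed is that $j \notin C^v_R(I)$ necessarily means $C^v_R(I)$ is at its quota $q^v - |C^v_G(I)|$ --- otherwise Step 2 would have absorbed every remaining individual in $I \setminus C^v_G(I)$, including $j$ --- and once it is full it retains exactly the top merit-scorers of $I \setminus C^v_G(I)$. With this in hand, the argument reduces mechanically to applying basis exchange and invoking Lemma \ref{lem:greedy}.
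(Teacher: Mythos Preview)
Your proof is correct and follows essentially the same line as the paper's: both first rule out $i\in C^v_R(I)$ by observing that $j\notin\cenv^v(I)$ forces every element of $C^v_R(I)$ to outrank $j$ (hence $i$), and both then exhibit a base of the form $(C^v_G(I)\setminus\{i\})\cup\{k\}$ with $k\in C^v_R(I)\cup\{j\}$, contradicting a property of the greedy rule. The only technical difference is in how that base is produced: the paper re-runs greedy on $S=(\cenv^v(I)\setminus\{i\})\cup\{j\}$ and invokes the substitutes condition (Lemma~\ref{lem:subst}) to conclude $C^v_G(S)\supseteq C^v_G(I)\setminus\{i\}$, then appeals to Gale domination (Lemma~\ref{lem:Gale}); you instead pick any base $B\subseteq S$ and use basis exchange (B2) directly, then invoke the no-justified-envy clause of Lemma~\ref{lem:greedy}. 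Your route is marginally lighter in that it avoids the substitutes lemma, but the two arguments are interchangeable.
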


\begin{proof}\renewcommand{\qedsymbol}{$\blacksquare$}
Suppose, for contradiction, that $\cenv^v$ fails no justified envy. Therefore, there exist
a set of individuals $I \subseteq \cali^v$, individuals $i\in \cenv^v(I)$, $j \in I \setminus \cenv^v(I)$ with
$\sigma(j) > \sigma(i)$ and $n^v\left((\cenv^v(I)\setminus \{i\}) \cup \{j\}\right) \geq n^v(\cenv^v(I))$.
Since $\cenv^v(I)$ maximally accommodates HR protections, the last inequality implies
$n^v((\cenv^v(I)\setminus \{i\}) \cup \{j\})=n^v(I)$.

Since every individual in $C^v_R(I)$ has a higher merit score than $j$, we must have $i\in C^v_G(I)$. Furthermore,
every individual in $C^v_R(I)$ has a higher merit score than $i$ as well.

Let $I_1=C^v_G((\cenv^v(I)\setminus \{i\}) \cup \{j\})$. Since $C^v_G$ is rank maximal (Lemma \ref{lem:greedy}),
$n^v(I_1)=n^v(I)$. Furthermore, since $C^v_G$ is independent (Lemma \ref{lem:greedy}), $I_1$ is indepedent. Therefore,
we get $|I_1|=n^v(I)$. In addition, since $C^v_G$ satisfies the substitutes condition (Lemma \ref{lem:subst}),
$I_1 \supseteq C^v_G(I)\setminus \{i\}$. Therefore, $I_1=(C^v_G(I)\setminus \{i\}) \cup \{k\}$ where
$k \in C^v_R(I)\cup \{j\}$. This gives us a contradiction since $I_1$ is an independent set, so
by Lemma \ref{lem:Gale}, $C^v_G(I) \succeq^{G} I_1$ which is equivalent to $\sigma(i)>\sigma(k)$
but every individual in $C^v_R(I)\cup \{j\}$ has a higher merit score than $i$.
\end{proof}

\begin{lemma}
$\cenv^v$ is non-wasteful.
\end{lemma}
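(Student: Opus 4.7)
The plan is to prove non-wastefulness directly from the construction of $\cenv^v$, since in the single-category setting non-wastefulness reduces to the statement that for every $I \subseteq \cali^v$ we have $|\cenv^v(I)| = \min\{q^v, |I|\}$. That is, the choice rule never leaves a position idle while some eligible individual remains unassigned.

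First, I would recall from the construction that $\cenv^v(I) = C^v_G(I) \cup C^v_R(I)$, where $C^v_G(I)$ is the set chosen by iterated HR-utilization-increasing selections in Step 1, and $C^v_R(I)$ is the set selected in Step 2. Step 1 terminates either when no remaining individual in $I \setminus C^v_G(I)$ increases the HR utilization or after at most $\sum_{t\in\calt} q^v_t \le q^v$ iterations, so in particular $|C^v_G(I)| \le q^v$. Step 2 then selects exactly $\min\{q^v - |C^v_G(I)|,\; |I| - |C^v_G(I)|\}$ individuals from $I \setminus C^v_G(I)$ by merit score. Combining the two steps,
\[
|\cenv^v(I)| \;=\; |C^v_G(I)| + \min\{q^v - |C^v_G(I)|,\; |I| - |C^v_G(I)|\} \;=\; \min\{q^v,\; |I|\}.
\]

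To conclude, suppose toward the non-wastefulness statement that $|\cenv^v(I)| < q^v$. By the identity above this forces $|\cenv^v(I)| = |I|$, and since $\cenv^v(I) \subseteq I$ this yields $\cenv^v(I) = I$. Hence there is no $j \in I$ with $j \notin \cenv^v(I)$, so the non-wastefulness condition holds vacuously for any vertical category where $\cenv^v$ is applied, establishing the lemma. I expect no substantive obstacle here: the only subtlety is making explicit that Step 2 cannot run out of capacity before exhausting $I$, which follows immediately from the termination rule that ``either all positions are filled or all individuals are selected.''
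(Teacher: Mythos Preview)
Your proof is correct and takes essentially the same approach as the paper's: both arguments rest on the termination rule of Step 2, which guarantees that either all $q^v$ positions are filled or all individuals in $I$ are selected. You simply make the cardinality computation $|\cenv^v(I)| = \min\{q^v, |I|\}$ explicit, whereas the paper dispatches the lemma in a single sentence invoking that termination rule directly.
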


\begin{proof}\renewcommand{\qedsymbol}{$\blacksquare$}
$\cenv^v$ is non-wasteful because at the second step all the unfilled positions
are filled with the unmatched individuals until all positions are filled or all individuals are assigned
to positions.
\end{proof}

\begin{lemma}
If a category-$v$ choice rule maximally accommodates HR protections,
satisfies no justified envy, and is non-wasteful, then it has to be $\cenv^v$.
\end{lemma}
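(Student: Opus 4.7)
The plan is to show that $C^v(I) = C^v_{\circled{M}}(I)$ for every $I \subseteq \cali^v$ by decomposing $C^v_{\circled{M}}(I) = C^v_G(I) \cup C^v_R(I)$ into its greedy and residual pieces and matching each against $C^v(I)$. Non-wastefulness together with maximal accommodation of HR protections forces $|C^v(I)| = \min\{q^v,|I|\} = |C^v_{\circled{M}}(I)|$ and $n^v(C^v(I)) = n^v(I) =: r$, so $C^v(I)$ contains some base $B$ of the transversal matroid restricted to $I$. It then suffices to prove (i) $C^v_G(I) \subseteq C^v(I)$, after which (ii) $C^v(I) \setminus C^v_G(I) = C^v_R(I)$ will follow almost immediately.

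For (i), I proceed by induction on $k$, showing that the $k$-th individual $i_k$ selected by the greedy algorithm lies in $C^v(I)$. Assume $\{i_1,\ldots,i_{k-1}\} \subseteq C^v(I)$ and, toward a contradiction, suppose $i_k \notin C^v(I)$. Since $\{i_1,\ldots,i_{k-1}\}$ is independent and contained in $C^v(I)$, augment it via M3 to a base $B \subseteq C^v(I)$ of size $r$. Because $\{i_1,\ldots,i_k\}$ is also independent (by construction of greedy) and sits inside the rank-$r$ restriction of the matroid to $B \cup \{i_k\}$, augmenting it within that restriction yields another base, necessarily of the form $(B \setminus \{b'\}) \cup \{i_k\}$ for some $b' \in B \setminus \{i_1,\ldots,i_{k-1}\}$. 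Then $\{i_1,\ldots,i_{k-1},b'\} \subseteq B$ is independent, so $b'$ was a valid candidate at the $k$-th greedy step; since greedy picked $i_k$ and $b' \neq i_k$, distinct merit scores give $\sigma(i_k) > \sigma(b')$. Finally $(C^v(I) \setminus \{b'\}) \cup \{i_k\}$ contains the independent set $(B \setminus \{b'\}) \cup \{i_k\}$ of size $r$, so $n^v$ is preserved under this swap, contradicting no justified envy.

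For (ii), suppose toward a contradiction that some $b \in C^v(I) \setminus C^v_G(I)$ and $j \in (I \cap \cali^v) \setminus C^v(I)$ satisfy $\sigma(j) > \sigma(b)$. Then $(C^v(I) \setminus \{b\}) \cup \{j\}$ still contains $C^v_G(I)$, which is independent of size $r$, so the swap preserves $n^v$ and violates no justified envy. Hence the elements of $C^v(I) \setminus C^v_G(I)$ are exactly the $|C^v(I)| - r$ highest merit-ranked individuals in $I \setminus C^v_G(I)$, i.e.\ $C^v_R(I)$. Combining (i) and (ii) gives $C^v(I) = C^v_{\circled{M}}(I)$.

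The main obstacle is the matroid exchange inside step (i): I need a single element $b' \in C^v(I)$ that simultaneously (a) lies outside $\{i_1,\ldots,i_{k-1}\}$ so the inductive hypothesis is not disturbed, (b) can be swapped with $i_k$ while keeping an independent set of size $r$, and (c) has strictly lower merit than $i_k$. Simultaneously securing (a)--(c) is what forces the argument to augment the specific independent set $\{i_1,\ldots,i_k\}$ within the restricted matroid on $B \cup \{i_k\}$, and then to appeal to greedy's own optimality to establish (c).
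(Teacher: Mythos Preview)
Your proof is correct, and it takes a genuinely different route from the paper's.

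The paper does not argue directly that each greedy selection $i_k$ lies in $C^v(I)$. Instead, it introduces the auxiliary rule $D^v(I) := C^v_G(C^v(I))$ and shows that $D^v$ is independent, rank maximal, and satisfies no justified envy; it then invokes the earlier characterization of the greedy rule (Lemma~\ref{lem:greedy}) to conclude $D^v = C^v_G$, which immediately gives $C^v_G(I) = C^v_G(C^v(I)) \subseteq C^v(I)$. The finishing step for $C^v(I)\setminus C^v_G(I) = C^v_R(I)$ is essentially identical to your part (ii).

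Your approach trades the appeal to Lemma~\ref{lem:greedy} (and indirectly to Gale's domination result) for a hands-on matroid exchange: you augment $\{i_1,\ldots,i_{k-1}\}$ to a base $B \subseteq C^v(I)$, then augment $\{i_1,\ldots,i_k\}$ inside the restriction to $B\cup\{i_k\}$ to locate a single swap element $b'$ that is simultaneously outside the previously selected greedy set, exchangeable with $i_k$ without losing rank, and---by greedy's own selection criterion---of strictly lower merit score. This is more self-contained (it uses only M3 and the definition of the meritorious horizontal rule) but less modular than the paper's argument, which packages the matroid work into the reusable greedy characterization. Either way, the two proofs converge at the same decomposition $C^v(I) = C^v_G(I)\cup C^v_R(I)$.
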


\begin{proof}\renewcommand{\qedsymbol}{$\blacksquare$}
Let $C^v$ be a category-$v$ choice rule that maximally accommodates HR protections,
satisfies no justified envy, and is non-wasteful. Construct the following choice rule
$D$, where for any $I\subseteq \cali^v$,
\[D^v(I)=C^v_G(C^v(I)).\]
We show that $D^v$ is the greedy choice rule by showing that $D^v$ is independent,
rank maximal, and satisfies no justified envy.

First, $D^v$ is independent because $C^v_G$ is independent (Lemma \ref{lem:greedy}).

Second, since $C^v$ maximally accommodates HR protections, $n^v(C^v(I))=n^v(I)$. In addition,
since the greedy choice rule is rank maximal $n^v(D^v(I))=n^v(C^v(I))$. Therefore,
$n^v(D^v(I))=n^v(I)$, which means that $D^v$ is rank maximal.

Finally, suppose for contradiction that $D^v$ fails to satisfy no justified envy. Then there
exist $I\subseteq \cali$, $i\in D^v(I)$, and $j\in I \setminus D^v(I)$ with $\sigma(j)>\sigma(i)$
such that
\[n^v\big((D^v(I)\setminus \{i\}) \cup \{j\}\big) \geq n^v(D^v(I)).\]
Since $C^v_G$ satisfies no justified envy (Lemma \ref{lem:greedy}), $j$ has to be in $I\setminus C^v(I)$.
Furthermore $n^v(D^v(I))=n^v(I)$ by rank maximality of $D^v(I)$, so we get
$n^v((D^v(I)\setminus \{i\}) \cup \{j\})=n^v(I)$. As a result,
$n^v((C^v(I)\setminus \{i\}) \cup \{j\})=n^v(I)$ as well.
This gives a contradiction to the assumption that $C^v$ satisfies no justified envy
because $i\in C^v(I)$, $j\in I\setminus C^v(I)$, $\sigma(j)>\sigma(i)$, and
$n^v((C^v(I)\setminus \{i\}) \cup \{j\})=n^v(I)=n^v(C^v(I))$.

Since $D^v(I)$ is independent, rank maximal, and satisfies no justified envy, we conclude
by Lemma \ref{lem:greedy} that $D^v=C_G^v$. Now consider $C^v(I)\setminus C_G^v(I)$.
Since $C^v$ is non-wasteful, $|C^v(I)\setminus C_G^v(I)|=\min\{q^v-C_G^v(I),|I|-C_G^v(I)\}$.
Furthermore, by no justified envy, there cannot be an individual in $I\setminus C^v(I)$ who has a higher
merit score than any individual in $C^v(I)\setminus C_G^v(I)$. Therefore, we get
\[C^v(I)\setminus C^v_G(I) = C^v_R(I).\]
Since $C^v(I)\supseteq C^v_G(I)$, we conclude that $C^v(I)=C^v_G(I)\cup C^v_R(I)$ is the meritorious horizontal choice rule.

\end{proof}
This finishes the proof of Theorem \ref{thm:envchar}.
\qed
%\end{proof}
\medskip

\subsection*{Proof of Theorem \ref{thm:2s-envchar}}
Let $C=(C^v)_{v\in \calv}$ be a choice rule that complies
with VR protections, maximally accommodates HR protections,
satisfies no justified envy, and is non-wasteful.
We show this result using the following lemmas.

\begin{lemma}
$C^o=\cenv^{2s,o}$.
\end{lemma}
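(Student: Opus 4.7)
The plan is to reduce the lemma to Theorem \ref{thm:envchar} by showing that $C^o$, viewed as a single-category choice rule, satisfies the three axioms that uniquely characterize $\cenv^o$: maximal accommodation of HR protections, no justified envy, and non-wastefulness. Since every individual is eligible for open-category positions, so $\cali^o = \cali$, no domain restriction is needed, and the characterization yields $C^o = \cenv^o = \cenv^{2s,o}$ directly.

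The key issue to confront is a quantifier gap: the multi-category axioms in the hypothesis are stated for individuals $j \in I \setminus \widehat{C}(I)$, whereas the single-category axioms require corresponding conditions for all $j \in I \setminus C^o(I)$. The missing case is when $j \in C^c(I)$ for some $c \in \calr$, i.e., when $j$ is chosen but not at the open category. The axiom of compliance with VR protections is precisely what bridges this gap, with each of its three parts feeding one of the single-category axioms.

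Concretely, for each axiom I would perform a two-case analysis on whether $j \in \widehat{C}(I)$. When $j \notin \widehat{C}(I)$, the multi-category version of the axiom applies directly to $C^o$. When $j \in C^c(I)$ for some $c \in \calr$, the appropriate part of compliance with VR supplies the needed inequality: part 3 for maximal HR accommodation (it says adding such $j$ to $C^o(I)$ does not raise $n^o$); part 2 for no justified envy (with the roles of $i$ and $j$ swapped relative to the axiom statement, it says the lower-merit open selectee cannot be replaced by $j$ without strictly losing HR utilization); and part 1 for non-wastefulness. The non-wastefulness case is the only one that requires a contrapositive flavor: if $|C^o(I)| < q^o$, then part 1 of compliance with VR forces $C^c(I) = \emptyset$ for every $c \in \calr$, so $\widehat{C}(I) = C^o(I)$, and the multi-category axiom together with $\cali^o = \cali$ yields a contradiction for any hypothetical unassigned $j \in I$.

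There is no substantial obstacle here; the lemma is designed so that the compliance with VR protections axiom exactly plugs the holes left by the multi-category formulations when specialized to the open category. Once the three single-category axioms are verified, Theorem \ref{thm:envchar} immediately delivers $C^o = \cenv^o$, and the definition of $\cenv^{2s,o}$ finishes the proof.
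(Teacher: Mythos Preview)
Your proposal is correct and follows essentially the same approach as the paper: verify that $C^o$ satisfies the three single-category axioms by a two-case split on whether the rival $j$ lies in $I\setminus\widehat{C}(I)$ or in some $C^c(I)$, using the multi-category axiom in the first case and the corresponding part of compliance with VR protections in the second, then invoke Theorem~\ref{thm:envchar}. Your identification of which part of the VR axiom feeds which single-category axiom, including the role swap for part~2 and the contrapositive use of part~1 for non-wastefulness, matches the paper's argument exactly.
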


\begin{proof}\renewcommand{\qedsymbol}{$\blacksquare$}
We prove that $C^o$ maximally accommodates category-$o$ HR protections,
satisfies no justified envy, and is non-wasteful.

First, we show that $C^o$ maximally accommodates category-$o$ HR protections.
Suppose, for contradiction, that $n^o(C^o(I))<n^o(I)$ for some $I\subseteq \cali$.
Then there exists $i\in I\setminus C^o(I)$ such that
$n^o(C^o(I)\cup \{i\})=n^o(C^o(I))+1$. If $i\in I\setminus \widehat C(I)$, then
we get a contradiction with the assumption that $C$ maximally accommodates HR protections.
Otherwise, if $i\in C^c(I)$ where $c\in \calr$, then we get
a contradiction with the assumption that $C$ complies with VR protections.
Therefore, $C^o$ maximally accommodates category-$o$ HR protections.

Next, we show that $C^o$ satisfies no justified envy.
Let $i\in C^o(I)$ and $j \in I\setminus C^o(I)$
such that $\sigma(j) > \sigma(i)$. If $j \in I \setminus \widehat C(I)$, then
\[n^o\left((C^o(I)\setminus \{i\})\cup \{j\}\right)<n^o\left(C^o(I)\right)\]
because $C$ satisfies no justified envy. However, if $i\in C^c(I)$ for category $c\in \calr$,
then
\[n^o\left((C^o(I)\setminus \{i\})\cup \{j\}\right)<n^o\left(C^o(I)\right)\]
because $C$ complies with VR protections. Therefore, $C^o$ satisfies no
justified envy.

Now, we show that $C^o$ is non-wasteful, which means that
$|C^o(I)|=\min\{|I|,q^o\}$ for every $I\subseteq \cali$.
If there exists an individual $i\in I$ such that $i\notin \widehat C(I)$,
then $|C^o(I)|=q^o$ because $C$ is non-wasteful. If there exists an
individual $i\in I$ such that $i\in C^c(I)$ where $c=\rho(i)\in \calr$, then $|C^o(I)|=q^o$ because
$C$ complies with VR protections. If these two conditions do not hold, then all the
individuals are allocated open-category positions, i.e., $I=C^o(I)$. Therefore,
under all possibilities, we get $|C^o(I)|=\min\{|I|,q^o\}$, which means
that $C^o$ is non-wasteful.

Since $C^o$ maximally accommodates category-$o$ HR protections, satisfies no justified envy,
and is non-wasteful, we get $C^o=\cenv^o$ (Theorem \ref{thm:envchar}), and hence
$C^o=\cenv^{2s,o}$.
\end{proof}

Let $c \in \calr$, $I\subseteq \cali$, and $\bar I^c = \{i\in I\setminus \cenv^o(I)|\rho(i)=c\}$.
%Furthermore, let $C^c_j$ be a choice rule such that $C^c_j(I^c)=R_j^c(I)$. Since $R_j^c(I)\subseteq I^c$,
%$C^c_j(I^c)\subseteq I^c$. The following lemmas establish properties of $C^c_j(I^c)$.

\begin{lemma}
$C^c(I)$ maximally accommodates category-$c$ HR protections for $\bar I^c$.
\end{lemma}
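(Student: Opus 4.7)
The plan is to argue by contradiction, using the axiom that $C$ maximally accommodates HR protections together with the already established fact that $C^o = \cenv^{2s,o}$. Suppose $n^c(C^c(I)) < n^c(\bar I^c)$. Then, by the definition of the HR-maximality function as the maximum cardinality of a trait-matching in the category-$c$ HR graph, there must exist some individual $j \in \bar I^c$ with $j \notin C^c(I)$ whose addition strictly increases HR utilization, i.e.\ $n^c(C^c(I) \cup \{j\}) = n^c(C^c(I)) + 1$.

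Next I would locate $j$ precisely. By the definition of $\bar I^c$, we have $\rho(j) = c$ and $j \notin C^o(I) = \cenv^{2s,o}(I)$. In particular $j \in \cali^c$ but, because $\rho(j) = c$ means $j$ is ineligible for any category $c' \in \calr$ with $c' \neq c$, the only vertical categories where $j$ could be assigned by $C$ are $o$ and $c$. Since $j \notin C^o(I)$ and $j \notin C^c(I)$, and disjointness of $C^v(I)$ across $v$ forces $j$ to lie in at most one category's allocation, we conclude $j \in I \cap \cali^c$ but $j \notin \widehat C(I)$.

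Now I would invoke the axiom that $C$ maximally accommodates HR protections: applied with the category $v = c$ and the individual $j \in (I \cap \cali^c) \setminus \widehat C(I)$, this yields
\[
n^c\big(C^c(I) \cup \{j\}\big) = n^c\big(C^c(I)\big),
\]
which contradicts the strict improvement produced by $j$. Hence $n^c(C^c(I)) = n^c(\bar I^c)$, which is precisely the claim that $C^c(I)$ maximally accommodates category-$c$ HR protections for $\bar I^c$.

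The only subtle step is the placement of $j$: one must rule out $j \in C^{c'}(I)$ for $c' \neq c$, and this is where the VR-eligibility structure (an individual belongs to exactly one reserve-eligible category) combined with the established identity $C^o = \cenv^{2s,o}$ does the work. Everything else is a direct application of the axioms stated in the model.
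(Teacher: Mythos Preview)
Your proof is correct and follows essentially the same contradiction argument as the paper: assume $n^c(C^c(I)) < n^c(\bar I^c)$, extract an individual $j$ whose addition strictly increases HR utilization, argue that $j \in (I \cap \cali^c) \setminus \widehat C(I)$, and contradict maximal accommodation of HR protections. The only cosmetic difference is that the paper first rewrites $\bar I^c = C^c(I) \cup \{i \in I \setminus \widehat C(I) : \rho(i) = c\}$ and then picks the augmenting individual directly from the second set, whereas you pick $j \in \bar I^c \setminus C^c(I)$ first and then argue $j \notin \widehat C(I)$ via eligibility; the underlying reasoning is identical.
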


\begin{proof}\renewcommand{\qedsymbol}{$\blacksquare$}
Suppose, for contradiction, that $n^c(C^c(I))<n^c(\bar I^c)$.
This is equivalent to
\[n^c(C^c(I))<n^c(\bar I^c)=n^c\Big(C^c(I) \cup \{i\in I\setminus \widehat C(I)|\rho(i)=c\}\Big),\]
which implies that there exists $i \in I\setminus \widehat C(I)$ who is eligible for category $c$
such that
\[n^c(C^c(I\cup \{i\}))=n^c(C^c(I))+1.\]
This equation contradicts the assumption that $C$ maximally accommodates HR protections.
 Therefore, $C^c(I)$ maximally accommodates category-$c$ HR protections for $\bar I^c$.
\end{proof}

\begin{lemma}
$C^c(I)$ satisfies no justified envy for $\bar I^c$.
\end{lemma}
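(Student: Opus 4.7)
The plan is to verify the no-justified-envy condition for the single-category choice rule $C^c$ restricted to the reduced pool $\bar I^c$ by reducing it to the no-justified-envy axiom of the full choice rule $C$. Fix an arbitrary $I \subseteq \cali$, and take any $i \in C^c(I)$ and any $j \in \bar I^c \setminus C^c(I)$ with $\sigma(j) > \sigma(i)$. What I need to show is
\[
n^c\Big(\big(C^c(I) \setminus \{i\}\big) \cup \{j\}\Big) \; < \; n^c\big(C^c(I)\big).
\]
Since $C$ satisfies no justified envy by hypothesis, it suffices to verify that $j$ is a member of $\big(I \cap \cali^c\big) \setminus \widehat{C}(I)$, i.e., that $j$ is eligible for category-$c$ positions and is not assigned any position under $C$.

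The eligibility is immediate: because $j \in \bar I^c$ we have $\rho(j) = c$, and therefore $j \in \cali^c$. For the unassigned claim, I will rule out each category $v \in \calv$ one by one. First, $j \notin C^c(I)$ by assumption. Second, by the previous lemma in this proof $C^o = \cenv^{2s,o}$, so $C^o(I) = \cenv^o(I)$; but $j \in \bar I^c$ means precisely that $j \in I \setminus \cenv^o(I)$, and hence $j \notin C^o(I)$. Third, for any other reserve-eligible category $c' \in \calr \setminus \{c\}$, since $\rho(j) = c \neq c'$ we have $j \notin \cali^{c'}$, and because $C^{c'}(I) \subseteq I \cap \cali^{c'}$ by the definition of a choice rule, we conclude $j \notin C^{c'}(I)$. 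Assembling these three points yields $j \notin \widehat{C}(I)$.

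With $j \in (I \cap \cali^c) \setminus \widehat{C}(I)$, $i \in C^c(I)$, and $\sigma(j) > \sigma(i)$, the no justified envy property of $C$ applied at category $v = c$ delivers the required strict inequality
\[
n^c\Big(\big(C^c(I) \setminus \{i\}\big) \cup \{j\}\Big) \; < \; n^c\big(C^c(I)\big),
\]
which is exactly the statement that $C^c(I)$ satisfies no justified envy on $\bar I^c$.

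I do not anticipate a real obstacle here; the argument is essentially a bookkeeping exercise that leverages the decomposition of $\widehat{C}(I)$ over the categories in $\calv$ and the fact that the previous lemma has already pinned down $C^o$ as $\cenv^{2s,o}$, so that the defining condition $j \notin \cenv^o(I)$ built into $\bar I^c$ translates directly into $j \notin C^o(I)$. The only subtlety worth flagging is making explicit that an individual with $\rho(j) = c$ cannot leak into $C^{c'}(I)$ for any other reserved category $c'$, which follows from the eligibility restriction $C^{c'}(I) \subseteq \cali^{c'}$ in the definition of a choice rule.
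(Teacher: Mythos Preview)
Your proof is correct and follows essentially the same approach as the paper: both reduce the single-category no-justified-envy condition on $\bar I^c$ to the no-justified-envy axiom of the full choice rule $C$ at category $c$. Your version is in fact more complete than the paper's terse argument, since you spell out explicitly why $j \notin \widehat{C}(I)$ by ruling out each category (using the previous lemma for $C^o$ and the eligibility restriction for the other reserve-eligible categories), whereas the paper simply invokes the axiom directly.
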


\begin{proof}\renewcommand{\qedsymbol}{$\blacksquare$}
Let $i\in C^c(I)$ and $j \in \bar I^c\setminus C^c(\bar I^c)$ be such that $\sigma(j)>\sigma(i)$.
Note that $i\in \bar I^c$.
Since $C$ satisfies no justified envy, we have
\[n^c\left(C^c(I)\right) > n^c\left((C^c(I)\setminus \{j\})\cup \{i\}\right).\]
Hence, $C^c$ satisfies no justified envy for $\bar I^c$.
\end{proof}

\begin{lemma}
$|C^c(I)|=\min \{|\bar I^c|,q^c\}$.
\end{lemma}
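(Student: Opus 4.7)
The plan is to derive this from the non-wastefulness of $C$ together with the structural constraints on choice rules, splitting into two cases based on whether $|C^c(I)| = q^c$ or $|C^c(I)| < q^c$.

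First, I would establish the containment $C^c(I) \subseteq \bar I^c$. By the definition of a choice rule, $C^c(I) \subseteq I \cap \cali^c$, so every member of $C^c(I)$ has $\rho(i)=c$. Moreover, $C^c(I) \cap C^o(I) = \emptyset$, and by the preceding lemma $C^o(I) = \cenv^{2s,o}(I) = \cenv^o(I)$, so $C^c(I) \subseteq \{i \in I : \rho(i)=c\} \setminus \cenv^o(I) = \bar I^c$. Also, by the choice-rule definition, $|C^c(I)| \leq q^c$.

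Next, consider the case $|C^c(I)| = q^c$. Since $C^c(I) \subseteq \bar I^c$, this forces $|\bar I^c| \geq q^c$, and hence $\min\{|\bar I^c|,q^c\} = q^c = |C^c(I)|$, as desired.

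The remaining case is $|C^c(I)| < q^c$. Here I would argue that $\bar I^c \subseteq C^c(I)$, which combined with the previous containment yields $C^c(I) = \bar I^c$ and hence $|C^c(I)| = |\bar I^c| = \min\{|\bar I^c|,q^c\}$. Suppose for contradiction that some $j \in \bar I^c \setminus C^c(I)$ exists. Then $\rho(j)=c$, so $j \notin \cali^{c'}$ for any $c' \in \calr \setminus \{c\}$, whence $j \notin C^{c'}(I)$ for all such $c'$. Also $j \notin \cenv^o(I) = C^o(I)$ by the definition of $\bar I^c$, and $j \notin C^c(I)$ by assumption. Therefore $j \in I \setminus \widehat C(I)$. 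But $j \in \cali^c$ and $|C^c(I)| < q^c$, contradicting non-wastefulness of $C$ at category $c$. This contradiction closes the argument.

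The only mildly delicate point is the reduction to non-wastefulness in the second case, where one needs to verify that an unchosen $j \in \bar I^c$ truly fails to appear in any component of $\widehat C(I)$; this rests on $\rho(j)=c$ ruling out membership in any other reserved category and on the already-established identification $C^o = \cenv^o$ ruling out membership in the open component. No deeper combinatorial input is required.
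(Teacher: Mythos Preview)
Your proof is correct and follows essentially the same approach as the paper: both rely on the containment $C^c(I)\subseteq \bar I^c$ (which you make explicit, while the paper leaves it implicit) together with non-wastefulness. The only cosmetic difference is the case split---you split on whether $|C^c(I)|=q^c$, whereas the paper splits on whether $C^c(I)=\bar I^c$---but the underlying argument is identical, and your version is in fact more careful in spelling out why an unchosen $j\in\bar I^c$ lies in $I\setminus\widehat C(I)$.
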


\begin{proof}\renewcommand{\qedsymbol}{$\blacksquare$}
We consider two cases. First, if $C^c(I)=\bar I^c$, then $|C^c(I)|=\min \{|\bar I^c|,q^c\}$
because $|C^c(I)|\leq q^c$. Otherwise, if $C^c(I) \neq \bar I^c$, then there exists
$i \in \bar I^c\setminus C^c(I)$. Therefore, $i\in I \setminus \widehat C(I)$.
Since $C$ is non-wasteful, we get $|C^c(I)|=q^c$
Since $i\in \bar I^c \setminus C^c(I)$ and $|C^c(I)|=q^c$, $|\bar I^c|>q^c$.
Therefore,   $|C^c(I)|=q^c=\min\{|\bar I^c|,q^c\}$.
\end{proof}

Therefore, $C^c(I)$ maximally accommodates category-$c$ HR protections for $\bar I^c$,
$C^c(I)$ satisfies no justified envy for $\bar I^c$, and $C^c(I)$ is non-wasteful for $\bar I^c$. By
Theorem \ref{thm:envchar}, $C^c(I)=\cenv^c(\bar I^c)$ and, thus,
\[C^c(I)=\cenv^{c}(\bar I^c)  =  \cenv^{c}(\{i\in I\setminus \cenv^o(I)|\rho(i)=c\})
= \cenv^{2s,c}(I).\]

\qed

\medskip

\subsection*{Proof of Proposition \ref{prop:ic}}
Suppose that $i$ is chosen by $\hc2s$ when she withholds some of her reserve-eligible privileges. If
$i$ is chosen by $\cenv^o$ for an open-category position, then $i$ will still be chosen by declaring
all her reserve-eligible privileges because $\cenv^o$ does not use the category information of individuals
and an individual can never benefit from not declaring some of her traits under $\cenv^o$ because she
will have more edges in the category-$o$ HR graph. Otherwise, if $i$ is chosen by $\cenv^c$
where $\rho(i)=c\in \calr$ then she must have declared her reserve-eligible category $c$. In addition,
by declaring all her traits she will still be chosen by $\cenv^c$ if she is not chosen before for
the open-category positions because she will have more edges in the HR graph for category-$c$ positions.

\qed

\medskip

\newpage

\begin{center}
\textbf{\Large Online Appendix}
\end{center}

\section{Institutional Background on Vertical and Horizontal Reservations} \label{sub:Institutional}

In this section of the Online Appendix, we present:
\begin{enumerate}
\item  the description of the concepts of vertical reservation and horizontal
reservation as they are quoted in the Supreme Court judgments  \textit{Indra Sawhney (1992)\/} and
\textit{Rajesh Kumar Daria (2007)\/} in Sections \ref{sub:VH} and \ref{sec:RKD2007},
\item the main quotes from the Supreme Court judgments  \textit{Anil Kumar Gupta (1995)\/} and
\textit{Rajesh Kumar Daria (2007)\/} that allow us to formulate the SCI-AKG choice rule in Section \ref{sec:AKG95},
\item the revised mandates of the Supreme Court judgment \textit{Saurav Yadav (2020)}, which imply the 2SMG
choice rule is the only mechanism that remains lawful for the case of non-overlapping horizontal reservations in Section \ref{app-yadav}, and
\item the description of the 2SMG choice rule that is mandated in the State of Gujarat
as it is quoted in the August 2020
High Court of Gujarat judgment \textit{Tamannaben Ashokbhai Desai  (2020)\/} in Section \ref{sec:Gujarat}.
\end{enumerate}

\subsection{Indra Sawhney (1992): Introduction of Vertical and Horizontal Reservations} \label{sub:VH}

The terms \emph{vertical reservation} and \emph{horizontal reservation} are coined by the Constitution bench of the Supreme Court of India, in
 the historical judgment  \textit{Indra Sawhney (1992)\/}, where
 \begin{itemize}
\item  the former was formulated as a  policy tool to accommodate the higher-level protective provisions sanctioned by Article 16(4) of the Constitution of India, and
\item the latter was formulated as a  policy tool to accommodate the lower-level protective provisions sanctioned by Article 16(1)
of the Constitution of India.
\end{itemize}
The description of these two affirmative action policies and how they are intended  to
interact with each other is given in the judgment with following quote:\smallskip
\begin{indquote}
A little clarification is in order at this juncture: all reservations are not of the same nature.
There are two types of reservations, which may,
for the sake of convenience, be referred to as `vertical reservations' and `horizontal reservations'. The reservation in favour of scheduled castes,
scheduled tribes and other backward classes [under Article 16(4)] may be called vertical reservations whereas reservations in favour of physically
handicapped [under clause (1) of Article 16] can be referred to as horizontal reservations. Horizontal reservations cut across the
vertical reservations -- what is called interlocking reservations. To be more precise, suppose 3\% of the vacancies are reserved in favour of physically handicapped persons;
this would be a reservation relatable to clause (1) of Article 16. The persons selected against his quota will be placed in the appropriate category;
if he belongs to SC category he will be placed in that quota by making necessary adjustments; similarly, if he belongs to open competition
(OC) category, he will be placed in that category by making necessary adjustments.
\smallskip
\end{indquote}
It is further emphasized in the judgment that
vertical reservations in favor of backward classes SC, ST, and OBC  (which the judges refer to
as \emph{reservations proper}) are ``set aside'' for these classes.
\begin{indquote}
In this connection it is well to remember that the reservations under Article 16(4) do not operate like a communal reservation.
It may well happen that some members belonging to, say Scheduled Castes get selected in the open competition field on the basis of their own merit;
they will not be counted against the quota reserved for Scheduled Castes; they will be treated as open competition candidates.\smallskip
\end{indquote}

\subsection{Rajesh Kumar Daria (2007): The Distinction Between Vertical Reservation and Horizontal Reservation} \label{sec:RKD2007}
The distinction between vertical reservations and horizontal reservations, i.e. the ``over-and-above'' aspect of the former and
the ``minimum guarantee'' aspect of the latter,
is further elaborated in the Supreme Court judgment
\textit{Rajesh Kumar Daria (2007)\/}.

\begin{indquote}
The second relates to the difference between the nature of vertical reservation and horizontal reservation. Social reservations in favour of SC, ST and OBC under Article 16(4) are `vertical reservations'. Special reservations in favour of physically handicapped, women etc., under Articles 16(1) or 15(3) are `horizontal reservations'. Where a vertical reservation is made in favour of a backward class under Article 16(4), the candidates belonging to such backward class, may compete for non-reserved posts and if they are appointed to the non-reserved posts on their own merit, their numbers will not be counted against the quota reserved for the respective backward class. Therefore, if the number of SC candidates, who by their own merit, get selected to open competition vacancies, equals or even exceeds the percentage of posts reserved for SC candidates, it cannot be said the reservation quota for SCs has been filled. The entire reservation quota will be intact and available in addition to those selected under Open Competition category. [Vide - Indira Sawhney (Supra), R. K. Sabharwal vs. State of Punjab (1995 (2) SCC 745), Union of India vs. Virpal Singh Chauvan (1995 (6) SCC 684 and Ritesh R. Sah vs. Dr. Y. L. Yamul (1996 (3) SCC 253)]. But the aforesaid principle applicable to vertical (social) reservations will not apply to horizontal (special) reservations. Where a special reservation for women is provided within the social reservation for Scheduled Castes, the proper procedure is first to fill up the quota for scheduled castes in order of merit and then find out the number of candidates among them who belong to the special reservation group of `Scheduled Castes-Women'. If the number of women in such list is equal to or more than the number of special reservation quota, then there is no need for further selection towards the special reservation quota. Only if there is any shortfall, the requisite number of scheduled caste women shall have to be taken by deleting the corresponding number of candidates from the bottom of the list relating to Scheduled Castes. To this extent, horizontal (special) reservation differs from vertical (social) reservation. Thus women selected on merit within the vertical reservation quota will be counted against the horizontal reservation for women.
\end{indquote}

\subsection{Anil Kumar Gupta (1995): Implementation of Horizontal Reservations Compartmentalized within Vertical Reservations} \label{sec:AKG95}

While horizontal reservations can be implemented
either as \textit{overall horizontal reservations\/}  for the entire set of positions, or
as \textit{compartment-wise horizontal reservations} within each vertical category including the open category (OC),
the Supreme Court recommended the latter in their judgment  of \textit{Anil Kumar Gupta (1995)\/}:\smallskip
\begin{indquote}
We are of the opinion that in the interest of avoiding any
complications and intractable problems, it would be better that in future the horizontal reservations are
comparmentalised in the sense explained above. In other words, the notification inviting applications
should itself state not only the percentage of horizontal reservation(s) but should also specify the
number of seats reserved for them in each of the social reservation categories, viz., S.T., S.C., O.B.C. and O.C.
\smallskip
\end{indquote}
The procedure to implement compartmentalized horizontal reservation is described in \textit{Anil Kumar Gupta (1995)\/} as follows:
\begin{indquote}
The proper and correct course is to first fill up the O.C. quota (50\%) on the basis of merit: then fill up each of the
social reservation quotas, i.e., S.C., S.T. and B.C; the third step would be to find out how many
candidates belonging to special reservations have been selected on the above basis. If the quota fixed
for horizontal reservations is already satisfied - in case it is an over-all horizontal reservation - no
further question arises. But if it is not so satisfied, the requisite number of special reservation
candidates shall have to be taken and adjusted/accommodated against their respective social
reservation categories by deleting the corresponding number of candidates therefrom. (If, however, it
is a case of compartmentalised horizontal reservation, then the process of verification and
adjustment/accommodation as stated above should be applied separately to each of the vertical
reservations.
\end{indquote}
The adjustment phase of the procedure for implementation of horizontal reservation is further elaborated in the Supreme Court
judgment \textit{Rajesh Kumar Daria (2007)\/} as follows:
\begin{indquote}
If 19 posts are reserved for SCs (of which the quota for women is four), 19 SC candidates shall have
to be first listed in accordance with merit, from out of the successful eligible candidates. If such list of
19 candidates contains four SC women candidates, then there is no need to disturb the list by including
any further SC women candidate. On the other hand, if the list of 19 SC candidates contains only two
woman candidates, then the next two SC woman candidates in accordance with merit, will have to be
included in the list and corresponding number of candidates from the bottom of such list shall have to
be deleted, so as to ensure that the final 19 selected SC candidates contain four women SC candidates.
[But if the list of 19 SC candidates contains more than four women candidates, selected on own merit,
all of them will continue in the list and there is no question of deleting the excess women candidate on
the ground that `SC-women' have been selected in excess of the prescribed internal quota of four.]
\end{indquote}

\subsection{Saurav Yadav (2020): Revised Mandates on Implementation of Vertical and Horizontal Reservations \&
Indirect Enforcement of the 2SMG Choice Rule} \label{app-yadav}
Focusing on the model with non-overlapping HR protections,
the most visible mandates of the December 2020 Supreme Court judgment \textit{Saurav Yadav (2020)\/} in relation to our axioms are,
\begin{enumerate}
\item the clarification that all individuals are to be considered for  open-category HR-protected positions, thus
enforcing the axiom of \textit{maximal accommodation of HR-protections\/}, and
\item the enforcement of \textit{no justified envy\/} as its primary mandate.
\end{enumerate}
The following quote from the judgment, however, is also critical, because it implies that
our axiom of \textit{compliance with VR protection\/} in Definition  \ref{def-VR}
is also enforced in its stronger form with Condition 3:  \smallskip
\begin{indquote}
36. Finally, we must say that the steps indicated by the High Court of Gujarat in para 56 of its
judgment in Tamannaben Ashokbhai Desai contemplate the correct and appropriate procedure for considering
and giving effect to both vertical and horizontal reservations. The illustration given by us deals with only one possible dimension.
There could be multiple such possibilities. Even going by the present illustration, the first female candidate allocated in the vertical column
for Scheduled Tribes may have secured higher position than the candidate at Serial No.64.
In that event said candidate must be shifted from the category of Scheduled Tribes
to Open / General category causing a resultant vacancy in the vertical column of Scheduled Tribes.
Such vacancy must then enure to the benefit of the candidate in the Waiting List for Scheduled Tribes -- Female.\smallskip
\end{indquote}
More specifically the quote formulates the mandate that a member of a reserve-eligible category
(Scheduled Tribes in the example) has to be considered for open-category HR-protected positions
(for women HR protections in the example) before using up a VR-protected position.
Apart from its enforcement of our axiom \textit{compliance with VR protections\/}, this quote also brings clarity
for the following defining characteristic of VR protections, originally formulated in \textit{Indra Sawhney (1992)\/}, 
in the presence of HR protections: \\

\begin{indquote}
It may well happen that some members belonging to, say Scheduled Castes get selected in the open competition field on the basis of their own merit;
they will not be counted against the quota reserved for Scheduled Castes; they will be treated as open competition candidates.\smallskip
\end{indquote}
Prior to \textit{Saurav Yadav (2020)\/},
a formal interpretation was never provided for the following question:
What does it mean to be \textit{selected in the open competition field on the basis of one's own merit\/} in the presence of HR protections?
For the case of non-overlapping horizontal reservations, this question is answered as follows in \textit{Saurav Yadav (2020)\/}:
Any individual who is entitled to an open position based on her merit score, \textit{including those who are entitled to one
due to the adjustments to accommodate the HR protections\/}, is considered as an individual  who is selected
in the open competition field on the basis of her own merit.

Since the fourth axiom \textit{non-wastefulness\/} has always been enforced since \textit{Indra Sawhney (1992)\/}, our Theorem \ref{thm:2smg}
implies that the 2SMG choice rule is the only mechanism that satisfies all mandates of \textit{Saurav Yadav (2020)\/} for 
in field applications with non-overlapping HR protections.

\subsection{Tamannaben Ashokbhai Desai  (2020): High Court Mandate on Adoption of the
2SMG Choice Rule in the State of Gujarat} \label{sec:Gujarat}

With its August 2020 High Court  judgment  \textit{Tamannaben Ashokbhai Desai  (2020)\/},
the two-step minimum guarantee choice rule (2SMG) is now mandated for allocation of state public jobs  in the State of Gujarat.
While the choice rule is given in the judgment only for a single horizontal trait (women),
it  is also well-defined and well-behaved for multiple (but non-overlapping) traits as
presented in Theorem \ref{thm:2smg}.
Originally introduced in \cite{sonyen19} prior to the judgment of the High Court of Gujarat in December 2020,
the 2SMG choice rule  is  endorsed by the Supreme Court judgment  \textit{Saurav Yadav (2020)\/} for
the entire country.\footnote{The  two-step minimum guarantee choice rule is referred to as $C^{hor}_{2s}$ in  \cite{sonyen19}.}
Paragraph 56 of the High Court of Gujarat judgment
\textit{Tamannaben Ashokbhai Desai  (2020)\/}  describes the mandated procedure  as follows:

\begin{indquote}
For the future guidance of the State Government, we would
like to explain the proper and correct method of implementing
horizontal reservation for women in a more lucid manner.\\
PROPER AND CORRECT METHOD OF IMPLEMENTING HORIZONTAL RESERVATION FOR WOMEN   \\

\mbox{} $\vdots$ \\
%\noindent No. of posts available for recruitment. ..... 100\\ \smallskip

%\noindent Social Reservation quota (50\%)\\ \smallskip
%\noindent Open Competition (OC) ... 51\\ \smallskip
%Scheduled Caste (SC ) ... 12\\ \smallskip
%Scheduled Tribe (ST) ... 17\\ \smallskip
%Socially and Educationally Backward Classes (SEBC) ... 20\\ \smallskip
%\noindent Horizontal Reservation for Women (33\% in each of the above categories)\\ \smallskip
%\noindent OC ... 17\\ \smallskip
%SC ... 04\\ \smallskip
%ST ... 06\\ \smallskip
%SEBC ... 07\\ \smallskip

\noindent Step 1: Draw up a list of at least 100 candidates
(usually a list of more than 100 candidates is
prepared so that there is no shortfall of
appointees when some candidates don’t join
after offer) qualified to be selected in the
order of merit. This list will contain the
candidates belonging to all the aforesaid
categories.\\ \smallskip
\noindent  Step 2: From the aforesaid Step 1 List, draw up a list
of the first 51 candidates to fill up the OC quota
(51) on the basis of merit. This list of 51
candidates may include the candidates
belonging to SC, ST and SEBC. \\ \smallskip
\noindent  Step 3: Do a check for horizontal reservation in OC
quota. In the Step 2 List of OC category, if
there are 17 women (category does not matter),
women’s quota of 33\% is fulfilled. Nothing more
is to be done. If there is a shortfall of women
(say, only 10 women are available in the Step 2
List of OC category), 7 more women have to be
added. The way to do this is to, first, delete the
last 7 male candidates of the Step 2 List.
Thereafter, go down the Step 1 List after item
no. 51, and pick the first 7 women (category
does not matter). As soon as 7 such women
from Step 1 List are found, they are to be
brought up and added to the Step 2 List to
make up for the shortfall of 7 women. Now, the
33\% quota for OC women is fulfilled. List of OC
category is to be locked. Step 2 List list
becomes final.  \\ \smallskip
\noindent  Step 4: Move over to SCs. From the Step 1 List, after
item no. 51, draw up a list of 12 SC candidates
(male or female). These 12 would also include
all male SC candidates who got deleted from the
Step 2 List to make up for the shortfall of
women.  \\ \smallskip
\noindent  Step 5: Do a check for horizontal reservation in the
Step 4 List of SCs. If there are 4 SC women,
the quota of 33\% is complete. Nothing more is
to be done. If there is a shortfall of SC women
(say, only 2 women are available), 2 more
women have to be added. The way to do this is
to, first, delete the last 2 male SC candidates of
the Step 4 List and then to go down the Step 1
List after item no. 51, and pick the first 2 SC
women. As soon as 2 such SC women in Step 1
List are found, they are to be brought up and
added to the Step 4 List of SCs to make up for
the shortfall of SC women. Now, the 33\% quota
for SC women is fulfilled. List of SCs is to be
locked. Step 4 List becomes final. If 2 SC
women cannot be found till the last number in
the Step 1 List, these 2 vacancies are to be
filled up by SC men. If in case, SC men are also
wanting, the social reservation quota of SC is to
be carried forward to the next recruitment
unless there is a rule which permits conversion
of SC quota to OC.  \\ \smallskip
\noindent  Step 6: Repeat steps 4 and 5 for preparing list of STs.  \\ \smallskip
\noindent Step 7: Repeat steps 4 and 5 for preparing list of SEBCs.
\end{indquote}

\section{Documentation of Evidence from Indian Court Rulings on\\ Disruption Caused by the Flaws of the SCI-AKG Choice Rule} \label{sec:India}

In this section we present extensive evidence on the disarray caused by the shortcomings of the SCI-AKG choice rule  in India.
Much of our analysis,  the  High Court judgments presented Section \ref{sec:justifiedenvy},
and our policy recommendations parallel  the arguments and the decision of the December 2020  Supreme Court
judgment  \textit{Saurav Yadav v State of Uttar Pradesh (2020)\/}. Our entire analysis and policy recommendations predate this important judgment,
and it was already presented in  an earlier draft of this paper in \cite{sonyen19}.

\subsection{Litigations on the SCI-AKG Choice Rule}\label{sec:challenges}
As we have argued in Section \ref{subsec-flaws},  the SCI-AKG choice rule fails our axioms of \textit{no justified envy\/}.
Moreover, it also fails \textit{incentive compatibility\/} due to backward class candidates losing their open-category
HR protections upon claiming their VR protections by declaring their backward class status.

The failure of SCI-AKG choice rule to satisfy no  justified envy is
fairly straightforward to observe. All it takes is a rejected backward class candidate   to realize that
her merit score is higher than an accepted general-category candidate, even though she qualifies for
all the HR protections the  less-deserving (but still accepted) candidate qualifies for.
Since the primary role of the reservation policy is positive discrimination
for candidates with more vulnerable backgrounds, this situation is very counterintuitive, and it
often results in legal action.
Focusing on complications caused by either anomaly,
we next present several court cases to document how they handicap concurrent implementation of vertical and
horizontal reservation policies in India.

\subsubsection{\textbf{High Court Cases Related to Justified Envy}}  \label{sec:justifiedenvy}

The possibility of justified envy under the  SCI-AKG choice rule has resulted in numerous court
cases throughout India for more than two decades, and since the presence of justified envy in the system is highly implausible,
these legal challenges often result in controversial rulings.
In addition, there are also cases where authorities who implement a better-behaved version of the
choice rule, one that does not suffer from this shortcoming, are nonetheless challenged in court,
on the basis that their adopted choice rules differ from those mandated by the Supreme Court.
These court cases are not restricted to lower courts, and include several cases in state high courts.  Even at the level of state high courts,
the judgments on this issue are highly inconsistent, largely due to the disarray created by
the possibility of justified envy under the SCI-AKG choice rule.
%In cases where a state insists on a choice rule that eliminates justified  envy,
%---which coincides with the version we present in Section \ref{sec:just}---
%sometimes they are forced by the courts to revert to the Supreme Court-mandated version.
We next present several representative cases from high courts:
\begin{enumerate}
\item \textbf{\textit{Rajeshwari vs State (Panchayati Raj Dep) Ors, 15 March, 2013\/},
Rajasthan High Court}.\footnote{The case is available at \url{https://indiankanoon.org/doc/128221069/}
(last  accessed on 03/07/2019).}
This case combines 120 petitions against the State  of Rajasthan where the petitioners seek legal action
from the Rajasthan High Court by a large number of petitioners against the state government,
on the basis that reserve category women are allowed to benefit from open-category horizontally reserved positions for women.
The high court ruled that the state is at fault, and it must abandon its choice rule, adopting  the one mandated by the Supreme Court.
The following quote is from a story published in The Times of India covering this court case:\footnote{The Times of India story is available at
\url{https://timesofindia.indiatimes.com/city/jaipur/Womens-seats-on-open-merit-cant-be-filled-from-SC/ST-quota-High-court/articleshow/19101277.cms}
(last  accessed on 03/07/2019).}
\smallskip
\begin{indquote}
In a judgment that would affect all recruitments in the state government, the Rajasthan high court has ruled that posts
reserved for women in the open/general category cannot be filled with women from reserved categories even if the latter are
placed higher on the merit list$\ldots$

Women candidates who contested for different positions in at least three government departments, including the panchayati raj,
education and medical, last year had challenged the government move to allow ``migration'' of reserved category women to fill
the open category seats. The positions applied for included that of teachers Grade-II and III, school lecturers, headmasters and
pharmacists.\smallskip
\end{indquote}
Ironically, while the High Court's decision is correct, it also means that the better-behaved version of the choice rule has to be abandoned by the state.

\item \textbf{\textit{Ashish Kumar Pandey And 24 Others vs State Of U.P. And 29 Others on 16 March, 2016},
Allahabad High Court}.\footnote{The case is available at
\url{https://indiankanoon.org/doc/74817661/} (last  accessed on 03/07/2019).}
This lawsuit was brought to Allahabad High Court by 25
petitioners, disputing the mechanism employed by the State of
Uttar Pradesh---the most populous state in India with more than 200 million residents---to apply the provisions of horizontal reservations
in their allocation of more than 4,000 civil police and platoon commander positions.
Of these positions, 27\%, 21\%, and 2\%  are each vertically reserved for members of Other Backward Classes
(OBC), Scheduled Castes (SC), and Scheduled Tribes (ST), respectively, and
20\%, 5\%, and 2\% are each horizontally reserved for women,
ex-servicemen, and dependents of freedom fighters, respectively.
While only 19 women are selected for open-category positions based on their merit scores,
the total number of female candidates is less than even the number of open-category
horizontally reserved positions for women,
and as such all remaining women are  selected.
However, instead of assigning them positions from their respective backward class categories (as it is mandated under the SCI-AKG choice rule),
all of them are assigned positions from the open category.
Similarly, backward class candidates are deemed eligible to use horizontal reservations for dependents of freedom fighters and ex-servicemen as well.
The counsel for the petitioners argues that not only did the State of U.P. make an error in its
implementation of horizontal reservations,
but also that the error was intentional. The following quote is from the court case:\smallskip
\begin{indquote}
Per contra, learned counsel appearing for the petitioners would submit that fallacy was committed by the Board deliberately,
and with malafide intention to deprive the meritorious candidates their rightful placement in the open category.
The candidates seeking horizontal reservations belonging to OBC and SC category were wrongly adjusted in the open category,
whereas, they ought to have been adjusted in their quota provided in respective social category.
The action of the Board is not only motivated, but purports to take forward the unwritten agenda of the State Government
to accommodate as many number of OBC/SC candidates in the open category.\smallskip
\end{indquote}
The judge sides with the petitioners, and rules that the State of  Uttar Pradesh must correct
its erroneous application of the provisions of horizontal reservations.
The judge further emphasizes that the State has played foul, stating:\smallskip
\begin{indquote}
There is merit in the submission of the learned counsel for the petitioners that the conduct of the members of the
Board appears not only mischievous but motivated to achieve a calculated agenda by deliberately keeping meritorious
candidates out of the select list. The Board and the officials involved in the recruitment process were fully
aware of the principle of horizontal reservations enshrined in Act, 1993 and Government Orders which were being
followed by them in previous selections of SICP and PC (PAC), but in the present selection they chose to adopt
a principle against their own Government Orders and the statutory provisions which were binding upon them...

I am constrained to hold that both the State and the Board have played fraud on the principles enshrined in the Constitution with regard to public appointment.\smallskip
\end{indquote}
What is especially surprising is that, despite the heavy tone of this judgment, the State goes on to appeal
in another Allahabad High Court case \textit{State Of U.P. And 2 Ors. vs Ashish Kumar Pandey And 58 Ors,
29 July, 2016},\footnote{The case is available at \url{https://indiankanoon.org/doc/71146861/}
(last accessed on 03/07/2019).}
in an effort to continue using its preferred method for implementing horizontal reservations.
Perhaps unsurprisingly, this appeal was denied by the High Court.

This particular case clearly illustrates that there is strong resistance in at least some of the states to
implementing the provisions of horizontal reservations as mandated under the SCI-AKG choice rule.
While this resistance most likely reflects the political nature of this debate, the arguments of
the counsel for the state to maintain their preferred mechanism
are mostly based on the presence of justified envy under the SCI-AKG choice rule.
The following quote from the appeal illustrates that this was the main argument used in their defense:\smallskip
\begin{indquote}
The arguments that have been advanced on behalf of State and private appellant with all vehemence that women candidates
irrespective of their social class i.e. SC/ST/OBC are entitled to make place for themselves in an open category on their inter-se
merit clearly gives an impression to us that State of U.P and its agents/servants and even the private appellants are totally
unaware of the distinction that has been time and again reiterated in between vertical reservation and horizontal reservation
and the way and manner in which the provision has to be pressed and brought into play.\smallskip
\end{indquote}

\item \textbf{\textit{Asha Ramnath Gholap vs President, District Selection Committee \& Ors. on March 3rd, 2016},
Bombay High Court}.\footnote{The case is
available at \url{https://indiankanoon.org/doc/178693513/} (last accessed on 03/08/2019).}
In this case, there are 23 pharmacist positions to be allocated; 13 of these positions are vertically reserved for backward classes
and the remaining 10 are open for all candidates. In the open category, 8 of the 10 positions are horizontally
reserved for various groups,
including 3 for women. The petitioner, Asha Ramnath Gholap, is an SC woman, and while there is one vertically reserved
position for SC candidates, there is no horizontally reserved position for SC women.
Under the SCI-AKG choice rule, she is not eligible for any of the horizontally reserved positions for women at the open category.
Nevertheless, she brings her case to the Bombay High Court based on an instance of justified envy,
described in the court records as follows:\smallskip
\begin{indquote}
 It is the contention of the petitioner that Respondent Nos. 4 \& 5 have received less marks than the petitioner
 and as such, both were not liable to be selected. The petitioner has, therefore, approached this court by invoking
 the writ jurisdiction of this court under Article 226 of the Constitution of India, seeking quashment of the select list to
 the extent it contains the names of Respondent Nos.4 and 5 against the seats reserved for the candidates belonging to open female category.
\end{indquote}
Under the federal law,
there is no merit to this argument, because  the SCI-AKG choice rule allows for justified envy.
However,  the judges side with the petitioner
on the basis that a candidate cannot be denied a position from the open category based on her backward class membership,
essentially ruling out the possibility of justified envy under a Supreme Court-mandated choice rule,
which is designed to allow for positive discrimination for vulnerable groups.\footnote{In a very similar Bombay High Court case
\textit{Rajani Shaileshkumar Khobragade ... vs The State Of Maharashtra And ... on 31 March, 2017\/}
where the petitioner filed a lawsuit based on another instance of justified envy, the judges of the same high court dismissed the petition.
This case is available at \url{https://indiankanoon.org/doc/7250640/}, last accessed on 03/09/2019. Indeed, there seem to be several conflicting
decisions at the Bombay High Court on this very issue, including a series of cases reported in a \textit{The Times of India\/} story dated 07/18/2018
``MPSC won't issue job letters till HC hears plea on quota issue'' available at
\url{https://timesofindia.indiatimes.com/city/aurangabad/mpsc-wont-issue-job-letters-till-hc-hears-plea-on-quota-issue/articleshow/65029505.cms}
(last accessed on 03/09/2019).}
Their justification is given in the court records as follows:\smallskip
\begin{indquote}
 We find the argument advanced as above to be fallacious. Once it is held that general category or open category takes in
 its sweep all candidates belonging to all categories irrespective of their caste, class or community or tribe,
 it is irrelevant whether the reservation provided is vertical or horizontal.
 There cannot be two interpretations of the words `open category' $\dots$\smallskip
 \end{indquote}

\item \textbf{\textit{Uday Sisode vs Home Department (Police) on 24 October, 2017\/}, Madhya Pradesh High Court}.\footnote{The case is available at
\url{https://indiankanoon.org/doc/196750337/} (last accessed on 03/08/2019).}
In another case parallel to that at the Bombay High Court,
the judges of the Madhya Pradesh High Court issued a questionable decision by siding with a petitioner who filed this lawsuit
based on another instance of justified envy.

\item \textbf{\textit{Smt. Megha Shetty vs State Of Raj. \& Anr on 26 July, 2013\/}, Rajasthan High Court}.\footnote{The case is available at
\url{https://indiankanoon.org/doc/78343251/} (last accessed on 10/08/2019).}
In contrast to  \textit{Asha Ramnath Gholap (2016)\/} and \textit{Uday Sisode (2017)\/}
where the judges  have been erroneous
siding with petitioners whose lawsuits are based on instances of justified envy, in this case a general category petitioner seeks legal action
against the state on the basis that several HR-protected open-category positions for women are allocated
to women from OBC who are not eligible for these positions (unless they receive it without invoking the benefits
of horizontal reservation). While all these OBC women have higher merit scores than the petitioner and the state has apparently used
a better behaved procedure, the petitioner's case has merit
because the SCI-AKG choice rule allows for justified envy in those situations.
In an earlier lawsuit, the petitioner's case was already declined by a single judge of the same court based on an
erroneous interpretation of \textit{Indra Sawhney (1992)\/}. The petitioner subsequently appeals this erroneous decision
and brings the case to a larger bench of the same court. However, the three judges side with the earlier judgment,
thus erroneously dismissing the appeal. Their decision is justified as follows:
\begin{indquote}
The outstanding and important feature to be noticed is that it is not the case of the appellant-petitioner
that she has obtained more marks than those 8 OBC (Woman) candidates, who have been appointed against the
posts meant for General Category (Woman), inasmuch as, while the appellant is at Serial No.184 in the merit list,
the last OBC (Woman) appointed is at Serial No.125 in the merit list.
The controversy raised by the appellant is required to be examined in the context and backdrop of these significant factual aspects.
\end{indquote}
As seen from this argument, many judges have difficulty perceiving that the Supreme Court-mandated
procedure could possibly allow for justified envy.

\item \textbf{\textit{Mukta Purohit \& Ors vs State \& Ors on 12 April, 2018\/}, Rajasthan High Court}.\footnote{The case is available at
\url{https://indiankanoon.org/doc/126738191/} (last accessed on 10/10/2019).}
In a case that mimics \textit{Smt. Megha Shetty (2013)\/}, judges of the Rajasthan High Court erroneously dismiss a petition
filed against the state that allowed  HR-protected open-category positions for women to be allocated
to women from reserved categories who are ineligible.
Indeed \textit{Smt. Megha Shetty (2013)\/} is used as a precedent in this judgment.

\item \textbf{\textit{Arpita Sahu vs The State Of Madhya Pradesh on 21 August, 2012\/}  Madhya Pradesh
High Court}.\footnote{The case is available at
\url{https://indiankanoon.org/doc/102792215/} (last accessed on 10/10/2019).} The petitioner files a lawsuit based on an instance of justified envy, however in contrast to \textit{Asha Ramnath Gholap (2016)},   and \textit{Uday Sisode (2017)\/},
the judges have correctly dismissed the petition in this case.
\end{enumerate}

\subsubsection{\textbf{Wrongful Implementation and Possible Misconduct}} \label{sec:wrongful}

It is bad enough that the Supreme Court-mandated  SCI-AKG choice rule is not incentive-compatible,
forcing some candidates to risk losing their open-category HR protections by claiming their VR protections.
To make matters worse, in some cases candidates are denied access to open-category
HR protections even when they do not submit their backward class status,
giving up their VR protections.
Therefore, even when the candidate applies for a position as a general-category candidate without claiming the benefits of VR protections,
the central planner processes the application as if the backward class status was claimed,
denying the candidate's eligibility for open-category HR protections.
The central planners are often able to do this because last names in India are, to a large extent,
indicative of a caste membership. This type of misconduct seems to be fairly widespread in some jurisdictions, and
it is the main cause of lawsuits in dozens of cases such as
the two Bombay High Court cases
\textit{Shilpa Sahebrao Kadam vs The State Of Maharashtra (2019)\/} and
\textit{Vinod Kadubal Rathod vs Maharashtra State Electricity (2017)\/}.\footnote{The cases are available at \url{https://indiankanoon.org/doc/89017459/}  and
\url{https://indiankanoon.org/doc/162611497/} (last accessed on 03/09/2019).}
Indeed, this type of misconduct is sometimes intentional and systematic.
The following statement is from  \textit{Shilpa Sahebrao Kadam  (2019)\/}:
\begin{indquote}
According to Respondent - Maharashtra Public Service Commission, in view of the Circular dated 13.08.2014, only the candidates belonging to open (Non-reserved) category can be considered for open horizontally reserved posts meaning thereby, the reserved category candidates cannot be considered for open horizontally reserved post. Reference is made to a communication issued by the Additional Chief Secretary (Service) of the State of Maharashtra dated 26.07.2017, whereunder it is prescribed that a female candidate belonging to any reserved category, even if tenders application form seeking employment as an open category candidate, the name of such candidate shall not be recommended for employment against a open category seat.
\end{indquote}
Moreover, not all decisions in these lawsuits are made in accordance with the
SCI-AKR choice rule,  which allows  candidates to forego their VR (or HR) protections.
This is the case both for the first and last lawsuit listed above.
For example, in the last lawsuit, two petitioners
each applied for a position without declaring their backward class membership, with the
intention to benefit from open-category HR protections. Following their application,
these petitioners were requested to provide their school leaving certificates, which provided
information on their backward class status. Upon receiving this information, the petitioners
were declined eligibility for open-category HR protections, even
though they never claimed their VR protections. Hence, they
filed the fourth lawsuit given above. Remarkably, their petition was declined on the basis of
their backward class membership. Here we have a case where the authorities  not only go to
great lengths to obtain the backward class membership of the candidates, and wrongfully decline their
eligibility for open category HR protections, but they also manage to get their lawsuits dismissed.
The mishandling of this case is consistent with the concerns indicated in the February 2006 issue of
\textit{The Inter-Regional Inequality Facility\/} policy brief:\footnote{The policy brief is available at
\url{https://www.odi.org/sites/odi.org.uk/files/odi-assets/publications-opinion-files/4080.pdf} (last accessed 03/09/2019).}\smallskip
\begin{indquote}
Another issue relates to the access of SCs and STs to the institutions of justice in seeking protection against discrimination.
Studies indicate that SCs and STs are generally faced with insurmountable obstacles in their efforts to seek justice in the event of discrimination.
The official statistics and primary survey data bring out this character of justice institutions. The data on
Civil Rights cases, for example, shows that only 1.6\% of the total cases registered in 1991 were convicted, and that this had fallen to 0.9\% in 2000.\smallskip
\end{indquote}

\subsubsection{\textbf{Loss of Access to HR protections without any Access to VR protections}}\label{subsec:nover}
The main  justification offered in various Supreme Court cases for denying backward class members their open-category HR protections is avoiding a
situation where an excessive number of positions are reserved for members of these classes.
In several cases, however, members of these classes are denied access to open-category HR protections
even when the number of VR-protected positions is zero for their reserve-eligible vertical category.
This is the case  in the following two lawsuits:
\begin{enumerate}
\item \textit{Tejaswini Raghunath Galande v. The Chairman, Maharashtra Public Service
Commission and Ors. on 23 January 2019\/},  Writ Petition Nos. 5397 of 2016 \& 5396 of 2016, High Court of Judicature at Bombay.\footnote{The case is available at
\url{https://www.casemine.com/judgment/in/5c713d919eff4312dfbb5900} (last accessed on 03/09/2019).}
\item Original Application No. 662/2016 dated 05.12.2017, Maharashtra Administrative  Tribunal, Mumbai.\footnote{The case is available at
\url{https://mat.maharashtra.gov.in/Site/Upload/Pdf/O.A.662\%20of\%202016.pdf} (last accessed on 03/09/2019).}
\end{enumerate}
In both  cases, while the petitioners claimed their VR protections,
there was no VR-protected position for their class. Yet in  both cases petitioners lost  their
open-category HR protections. In the first case,
the petitioners' lawsuit to benefit from open-category HR protections was initially declined
by a lower court, resulting in an appeal at the High Court. The lower court's decision
was overruled in the High Court, and her request was granted. On the other hand, the second petitioner's
similar request was declined by the Maharashtra Administrative  Tribunal. What is more
worrisome in the second case is that, while initially three positions were VR-protected
for the petitioner's backward class, after the petitioner's application these VR-protected positions were
withdrawn. Therefore, the candidate declared her backward class status, giving up her
open-category HR protection, presumably
to gain access to VR-protected positions set aside for her reserve-eligible class, only to learn that she
had given up her eligibility for nothing.

\section{Original Formulation of the AKG-SCI Choice Rule\\ and Its Equivalence to Our Formulation} \label{sec:AKG-HAS}

The mechanics for implementing HR protections is described in the two Supreme Court judgments \textit{Anil Kumar Gupta (1995)\/}
and \textit{Rajesh Kumar Daria (2007)\/}, and given in Section \ref{sec:AKG95}.
In the main body of the paper we used a simpler formulation of the SCI-AKG choice rule, that relies on its
relation to the minimum guarantee choice rule.
In this section of the Online Appendix we formulate the original description of the SCI-AKG choice rule and
prove its equivalence to our simpler formulation.

Both judgments describe the procedure for a single trait, although the procedure can be repeated
sequentially for each trait. In our description below, we  adhere to this straightforward extension of the
procedure.

As we argue in Section \ref{sec:ver}, implementing VR protections is straightforward in the absence of HR protections.
Open-category positions are allocated to the highest merit score candidates (across all categories) first, followed by the
positions at each reserve-eligible category to the highest merit score remaining candidates from these categories.
This is indeed the first step of the SCI-AKG choice rule. Once a tentative assignment is determined, the necessary
adjustments are subsequently made to implement HR protections,
first for the open-category positions, then for positions at each reserve-eligible category.
The adjustment process is repeated for each trait.

Formally, for a given category $v \in \calv$ of positions, let
a set of individuals $J \subseteq \cali^{v}$  who are tentatively assigned
to category-$v$ positions and  a set of individuals $K \subseteq \cali^v\setminus J$ who are
eligible for horizontal adjustments at category $v$ be such that
\begin{enumerate}
\item $|J| = q^v$ and
\item $\sigma(i) > \sigma(i')$ \quad for any $i \in J$ and $i'\in K$.
\end{enumerate}
Then, for a given processing sequence
$t^1, t^2, \ldots, t^{|\calt|}$ of traits, the horizontal adjustment process is carried out with the following procedure. \medskip

\begin{quote}
\textbf{AKG Horizontal Adjustment Subroutine (AKG-HAS)} \smallskip

\noindent \textbf{Step 1} \textbf{(Trait-{\boldmath$t^1$} adjustments)}:
Let $r_1$ be the number of individuals in $J$ with trait $t^1$. %\smallskip

\textbf{\textit{Case 1.}} $r_1 \geq q^v_{t^1}$

Let $J^1$ be the set of $q^v_{t^1}$ individuals with the highest merit scores in $J$ with trait $t^1$. Finalize
their assignments as the recipients of trait-$t^1$ HR-protected positions within category $v$.
Proceed to Step 2.  \smallskip

\textbf{\textit{Case 2.}} $r_1 < q^v_{t^1}$

Let $J^1_{m}$ be the set of all individuals in $J$ with trait $t^1$. Let $s_1$ be the number of individuals in $K$ who have trait $t^1$.
Let $J^1_{h}$ be
\begin{itemize}
\item the set of $(q^v_{t^1} - |J^1_{m}|)$ individuals with the highest merit scores in $K$ who have trait $t^1$ if  $s_1 \geq q^v_{t^1} - |J^1_{m}|$, and
\item   the set of all individuals in $K$ who have trait $t^1$ if $s_1 < q^v_{t^1} - |J^1_{m}|$.
\end{itemize}
Let $J^1 = J^1_{m}  \cup J^1_{h}$ and finalize their assignments as the recipients of trait-$t^1$ HR-protected positions within category $v$.
Proceed to Step 2.  \medskip

\noindent \textbf{Step} {\boldmath $k\in \{2, \dots, |\calt|\}$} \textbf{(Trait-{\boldmath$t^k$} adjustments)}:
Let $r_k$ be the number of individuals in $J\setminus \bigcup_{\ell=1}^{k-1} J^{\ell}$ with trait $t^k$. \smallskip

\textbf{\textit{Case 1.}} $r_k \geq q^v_{t^k}$

Let $J^k$ be the set of $q^v_{t^k}$ individuals with the highest merit scores in  $J\setminus \bigcup_{\ell=1}^{k-1} J^{\ell}$ with trait $t^k$. Finalize
their assignments as the recipients of trait-$t^k$ HR-protected positions  within category $v$.
Proceed to Step 2.  \smallskip

\textbf{\textit{Case 2.}} $r_k < q^v_{t^k}$

Let $J^k_{m}$ be the set of all individuals in  $J\setminus \bigcup_{\ell=1}^{k-1} J^{\ell}$  with trait $t^k$.
Let $s_k$ be the number of individuals in $K\setminus \bigcup_{\ell=1}^{k-1} J^{\ell}$ with trait $t^k$.
Let $J^k_{h}$ be
\begin{itemize}
\item the set of $(q^v_{t^1} - |J^k_{m}|)$ individuals with the highest merit scores in $K\setminus \bigcup_{\ell=1}^{k-1} J^{\ell}$
who have trait $t^k$ if  $s_k \geq q^v_{t^k} - |J^k_{m}|$, and
\item   the set of all individuals in $K\setminus \bigcup_{\ell=1}^{k-1} J^{\ell}$
who have trait $t^k$ if $s_k < q^v_{t^k} - |J^k_{m}|$.
\end{itemize}
Let $J^k = J^k_{m}  \cup J^k_{h}$ and finalize their assignments as the recipients of trait-$t^k$ HR-protected positions  within category $v$.
Proceed to Step $k+1$.  \medskip

\noindent \textbf{Step} {\boldmath  $(|\calt|+1)$} \textbf{(Finalization of category-{\boldmath$v$} no-trait assignments)}:
 Let $J^0$ be the set of $\big(q^v - \sum_{\ell=1}^{|\calt|} |J^{\ell}|\big)$ individuals with the highest merit scores in  $J\setminus \bigcup_{\ell=1}^{|\calt|} J^{\ell}$\medskip

The procedure selects the set of individuals in  $\bigcup_{\ell=0}^{|\calt|} J^{\ell}$.
 Here $J^0$ is the set of individuals from the original group $J$ who have survived the horizontal adjustment process
 without invoking any HR protection, and $J^k$  is the set of individuals
 who accommodate trait-$t^k$ HR protections for any trait $t^k$.\footnote{While all individuals in $J^k$ accommodate trait-$t^k$ HR protections,
only those who are in the set  $J^k \setminus J$ owe their assignments to trait-$t^k$ HR protections.} \medskip
\end{quote}

When each individual has at most one trait, it is easy to see that
the processing sequence of traits becomes immaterial under
the AKG-HAS, and it produces the same outcome as the category-$v$
\textit{minimum guarantee choice rule\/} $C^v_{mg}$.
The next result formulates this observation.

\begin{proposition} \label{prop:mg=has}
Suppose that each individual has at most one trait. Let $v\in \calv$ be any category of positions,
$J\subseteq \cali^{v}$ be a set of individuals who are tentatively assigned category-$v$ positions, and
$K\subseteq \cali^{v}\setminus J$ be a set of unmatched individuals who are eligible for category-$v$ positions.
If $|J|=q^v$ and every individual in $J$ has a higher merit score than every individual in $K$,
then  $C^v_{mg}(J\cup K)$ is the set of individuals who are assigned to category-$v$ positions under the
AKG-HAS.
\end{proposition}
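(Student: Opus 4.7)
The plan is to show that the set returned by AKG-HAS, namely $S^{AKG}=\bigcup_{\ell=0}^{|\calt|}J^\ell$, coincides with $S^{mg}=C^v_{mg}(J\cup K)$ by matching the trait-by-trait selections of AKG-HAS with Step~1 of $C^v_{mg}$, and then matching the residual set $J^0$ with Step~2 of $C^v_{mg}$. The two key structural facts to set up first are: (i) since every individual has at most one trait, $J^k$ contains only trait-$t^k$ individuals, so the sets $J^1,\ldots,J^{|\calt|}$ are pairwise disjoint, and (ii) removing $\bigcup_{\ell<k}J^\ell$ from either $J$ or $K$ in iteration~$k$ of AKG-HAS does not delete any trait-$t^k$ individual. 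Hence the pool available to iteration~$k$ is effectively the full set of trait-$t^k$ individuals in $J\cup K$.

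The first main step will be to verify, via a short case analysis paralleling Cases~1 and 2 of AKG-HAS, that each $J^k$ equals the set of trait-$t^k$ individuals picked by Step~1 of $C^v_{mg}(J\cup K)$. In Case~1 ($r_k\geq q^v_{t^k}$), merit dominance of $J$ over $K$ implies that the top $q^v_{t^k}$ trait-$t^k$ individuals in $J\cup K$ all lie in $J$, which is exactly what $J^k$ selects. In Case~2 with $s_k\geq q^v_{t^k}-|J^k_m|$, Step~1 of $C^v_{mg}$ selects the top $q^v_{t^k}$ trait-$t^k$ individuals in $J\cup K$: all $r_k$ from $J$ (since they dominate everyone in $K$) plus the top $q^v_{t^k}-r_k$ from $K$, matching $J^k_m\cup J^k_h$. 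In the remaining subcase, $J\cup K$ contains strictly fewer than $q^v_{t^k}$ trait-$t^k$ individuals, so both rules just select all of them.

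The second main step is to show $J^0$ equals the Step~2 output of $C^v_{mg}$. Since the $J^\ell$'s are pairwise disjoint, $|\bigcup_{\ell\geq 1}J^\ell|=\sum_{\ell\geq 1}|J^\ell|$ and
\[
\bigl|J\setminus\bigcup_{\ell\geq 1}J^\ell\bigr|
= q^v - \bigl|J\cap\bigcup_{\ell\geq 1}J^\ell\bigr|
\geq q^v - \sum_{\ell\geq 1}|J^\ell|.
\]
Thus there are at least $q^v-\sum_{\ell\geq 1}|J^\ell|$ unassigned individuals remaining inside $J$, each of higher merit than every individual of $K$. Consequently, the top $q^v-\sum_{\ell\geq 1}|J^\ell|$ unassigned members of $J\cup K$ must all lie in $J\setminus\bigcup_{\ell\geq 1}J^\ell$, which is precisely the definition of $J^0$. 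Combining the two phases yields $S^{AKG}=S^{mg}$.

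The main obstacle is not conceptual but rather careful bookkeeping: one must handle the three cases (Case~1, Case~2a, Case~2b) uniformly, keep track of which individuals remain in $J\setminus\bigcup_{\ell<k}J^\ell$ and $K\setminus\bigcup_{\ell<k}J^\ell$ as iterations proceed, and confirm the counting inequality used to populate $J^0$. The non-overlap-of-traits hypothesis and the merit dominance of $J$ over $K$ are the two assumptions doing all the work; without either of them, the trait-by-trait decoupling used above would fail.
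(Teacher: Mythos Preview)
Your argument is correct and self-contained, but it takes a genuinely different route from the paper's. The paper does not attempt a direct step-by-step comparison of AKG-HAS with $C^v_{mg}$. Instead, it shows that the AKG-HAS output $I'$ satisfies three properties---$|I'|=q^v$ (non-wastefulness), no justified envy, and maximal accommodation of HR protections---and then invokes the axiomatic characterization in Theorem~\ref{thm:envchar} (together with the observation that the meritorious horizontal choice rule coincides with the minimum guarantee choice rule under non-overlapping traits) to conclude that $I'=C^v_{mg}(J\cup K)$. Your approach, by contrast, matches the two procedures phase by phase: first identifying each $J^k$ with the trait-$t^k$ output of Step~1 of $C^v_{mg}$ via the case split, then matching $J^0$ with Step~2 via the counting inequality $|J\setminus\bigcup_{\ell\geq 1}J^\ell|\geq q^v-\sum_{\ell\geq 1}|J^\ell|$ and merit dominance of $J$ over $K$. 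The paper's route is more conceptual and ties the proposition back into the axiomatic framework, but at the cost of depending on a substantially heavier result (Theorem~\ref{thm:envchar}); your route is elementary and makes the proposition logically independent of the characterization, which is arguably cleaner since the proposition is really a statement about two concrete procedures coinciding.
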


\subsection*{Proof of Proposition \ref{prop:mg=has}}
%\begin{proof}[Proof of Proposition \ref{prop:sci-has}]
Let $I=J\cup K$ and $I'$ be the set of individuals assigned to category-$v$
positions by AKG-HAS. We first show that
\begin{enumerate}
  \item $|I'|=q^v$,
  \item there exists no instance of justified envy involving an individual in $I'$ and an individual in $I\setminus I'$,
  \item $I'$ maximally accommodates category-$v$ HR protections for $I$.
\end{enumerate}
The proof then follows from Theorem \ref{thm:envchar} together with the equivalence of
the meritorious horizontal choice rule and the minimum guarantee choice rule when each individual has
at most one trait.

Proof of (1): $|I'|=q^v$ follows because at Step $|\calt|+1$ of AKG-HAS all positions are filled.

Proof of (2): Let $i\in I'$ and $j\in I\setminus I'$ such that $\sigma(j)>\sigma(i)$. Since $j\notin I'$, either $j$
does not have a trait or there are at least $q^v_t$ individuals in $I'$ where $t$ is $j$'s only trait. If $j$ does not
have a trait, then $i$ must have a trait $t'$ such that the number of individuals in $I'$ who have trait $t'$ is
$\min\{q^v_{t'},|\{i'\in I:t'\in \tau(i')\}|\}$. Then $n^v((I'\setminus \{i\}) \cup \{j\})=n^v(I')-1$, which means
that there is no instance of justified envy involving $j$ and $i$. If $j$ has trait $t$, then it must be that $i$ does
not have trait $t$, there are at least $q^v_t$ individuals with trait $t$ in $I'$, and $i$ must have a
trait $t' \neq t$ such that the number of individuals in $I'$ who have trait $t'$ is
$\min\{q^v_{t'},|\{i' \in I : t'\in \tau(i')\}|\}$. Then, as before, $n^v((I'\setminus \{i\}) \cup \{j\})=n^v(I')-1$, which
means that there is no instance of justified envy involving $j$ and $i$.

Proof of (3): For every trait $t$, there is a corresponding step of AKG-HAS so that the number of individuals
in $I'$ who have trait $t$ is $\min\{q^v_t,|\{i'\in I:t\in \tau(i')\}|\}$. Since each individual has at most
one trait, this implies that $n^v(I')=n^v(I)$.
%\end{proof}
\qed
\medskip

We are ready to present the original formulation of the SCI-AKG choice rule as it is described in
\textit{Anil Kumar Gupta (1995)\/} and \textit{Rajesh Kumar Daria (2007)\/}. Proposition \ref{prop:mg=has} presented
above immediately establishes the equivalence of the original formulation with the formulation presented in Section \ref{subsec-flaws}.

\begin{quote}
        \noindent{}{\bf SCI-AKG Choice Rule } {\boldmath$C^{SCI}$} \smallskip

\noindent For every $I \subseteq \cali$, \smallskip

        \noindent{}{\bf Step 1 (Open-category tentative assignment)}:
        \begin{itemize}
        \item If $|I| \leq q^o$ then assign  all individuals in $I$ to open-category positions and terminate the procedure.
        In this case $C^{SCI,o}(I) = I$ and $C^{SCI,c}(I) = \emptyset$ for any reserve-eligible category $c\in\calr$.
        \item Otherwise,  if $|I| > q^o$, then tentatively assign the highest merit-score $q^o$ individuals in $I$ to open-category positions.
Let $J^o$ denote the set of individuals who are tentatively assigned to open-category positions in this case.
\end{itemize}
 \medskip

        \noindent{}{\bf Step 2 (Finalization of open-category positions)}:
        The set of individuals eligible for open-category horizontal adjustments is $I^g\setminus J^o$.
        Apply the AKG-HAS
        \begin{itemize}
        \item to the set $J^o$ of tentative recipients of open-category positions
        \item with the set of individuals in $I^g\setminus J^o$ who are eligible for open-category horizontal adjustments
        \end{itemize}
        to finalize the set of recipients  $C^{SCI,o}(I)$  of open-category positions. \medskip

        \noindent{}{\bf Step 3  (Reserve-eligible category tentative assignment)}: For any reserve-eligible category $c \in \calr$,
        \begin{itemize}
        \item If $|I^c\setminus C^{SCI,o}(I)| \leq q^c$ then assign  all individuals in $I^c\setminus C^{SCI,o}(I)$ to category-$c$ positions,
        finalizing the assignments of  individuals in $I^c$.
        In this case $C^{SCI,c}(I) = I^c\setminus C^{SCI,o}(I)$.
        \item Otherwise,  if  $|I^c\setminus C^{SCI,o}(I)| > q^c$, then tentatively assign the highest merit-score $q^c$ individuals in
        $I^c\setminus C^{SCI,o}(I)$ to category-$c$ positions.
Let $J^c$ denote the set of individuals who are tentatively assigned to category-$c$ positions in this case.
       \end{itemize}
\medskip

        \noindent{}{\bf Step 4 (Finalization of reserve-eligible category positions)}:
        For any reserve-eligible category $c \in \calr$,
        the set of individuals eligible for category-$c$ horizontal adjustments is $I^c\setminus(C^{SCI,o}(I) \cup J^c)$.
        For any reserve-eligible category $c \in \calr$,  apply the AKG-HAS
        \begin{itemize}
        \item to the set $J^c$ of tentative recipients of category-$c$ positions
        \item with the set of individuals in $I^c\setminus(C^{SCI,o}(I) \cup J^c)$ who are eligible for category-$c$ horizontal adjustments
        \end{itemize}
        to finalize the set of recipients $C^{SCI,c}(I)$ of category-$c$ positions. \medskip

\noindent The outcome of the SCI-AKG choice rule  is $C^{SCI}(I) = \big(C^{SCI,v}(I)\big)_{v\in \calv}$.
\end{quote}

\end{document}